\DeclareMathOperator{\Tr}{Tr}
\newtheorem{theorem}{Theorem}
\newtheorem{lemma}{Lemma}
\newtheorem{prop}{Proposition}
\newtheorem{cor}{Corollary}
\newtheorem{definition}{Definition}
\newtheorem{claim}{Claim}
\newtheorem{lemma2}{Lemma}[section]
\newtheorem{definition2}{Definition}[section]
\newtheorem{remark}{Remark}
\theoremstyle{nonumberbreak}
\newtheorem{proof}{Proof}
\crefname{cor}{Corollary}{Corollaries}
\crefname{obs}{Observation}{Observations}
\crefname{remark}{Remark}{Remarks}
\begin{document}


\title{Dimensionally sharp inequalities for the linear entropy}
\author{Simon Morelli}
\email{simon.morelli@hotmail.com}
\affiliation{
Institute for Quantum Optics and Quantum Information - IQOQI Vienna, Austrian Academy of Sciences, Boltzmanngasse 3, 1090 Vienna, Austria}
\author{Claude Kl\"ockl}%
 \email{claudio.kloeckl@reflex.at}
 \affiliation{
Institute for Quantum Optics and Quantum Information - IQOQI Vienna, Austrian Academy of Sciences, Boltzmanngasse 3, 1090 Vienna, Austria}
\affiliation{
Institute for sustainable economic development, University of Natural Resources and Life Sciences (BOKU), Gregor-Mendel-Stra\ss e 33,
1180 Vienna, Austria }%
\author{Christopher Eltschka}%
\email{christopher.eltschka@physik.uni-regensburg.de}
\affiliation{Institut f\"ur Theoretische Physik, Universit\"at Regensburg, D-93040 Regensburg, Germany}
\author{Jens Siewert}%
\email{jens.siewert@ehu.eus}
\affiliation{Departamento de Quim\'ica F\'isica, Universidad del Pa\'is Vasco UPV/EHU, E-48080 Bilbao, Spain}
\affiliation{IKERBASQUE Basque Foundation for Science, E-48013 Bilbao,Spain}
\author{Marcus Huber}%
\email{entangledanarchist@gmail.com}
\affiliation{
Institute for Quantum Optics and Quantum Information - IQOQI Vienna, Austrian Academy of Sciences, Boltzmanngasse 3, 1090 Vienna, Austria}

\date{\today}

\begin{abstract}
\noindent
We derive an inequality for the linear entropy, that gives sharp bounds for all finite dimensional systems. The derivation is based on generalised Bloch decompositions and provides a strict improvement for the possible distribution of purities for all finite dimensional quantum states. It thus extends the widely used concept of entropy inequalities from the asymptotic to the finite regime, and should also find applications in entanglement detection and efficient experimental characterisations of quantum states.
\end{abstract}

\pacs{Valid PACS appear here}

\maketitle

\section{Introduction}

Entropy is a widely used concept in various fields. In information theory, entropy is a measure of uncertainty or lack of knowledge about a given system described by a random variable. In a quantum setting this random variable is a density operator, describing the state of a system, and the entropy is a function of this density operator. So the entropy is an intrinsic property of a state and not an observable.
Entropy inequalities characterise the distribution of entropy, and thus the distribution of information, in the various parts of a multipartite system. Therefore entropy inequalities are also
referred to as the laws of information theory. Let us illustrate this concept for the best understood entropy measure, the von Neumann entropy

\begin{align}
S(\rho):=-\Tr\rho\log_2(\rho).
\end{align}

Assume we have an $n$-partite quantum system $A_{[n]}:=A_1...A_n$ in the state $\rho_{A_1...A_n}\in\mathcal{H}_{A_1}\otimes\dots\otimes\mathcal{H}_{A_n}$.
The reduced density operator $\rho_{A_I}=\rho_{A_{i_1}...A_{i_{|I|}}}:=\Tr_{A_{I^c}}(\rho_{A_{[n]}})$ (where $\emptyset \neq I \subseteq [n]$) describes the state of the subsystem $A_I$.
We call the vector
\begin{equation}
\vec{\mathbf{v}}(\rho_{A_{[n]}}):=\big (S(\rho_{A_1}),...,S(\rho_{A_n}),S(\rho_{A_1A_2}),...,S(\rho_{A_1...A_n})\big)\label{VN6}
\end{equation}
an entropy vector.
Which points in $\mathbb{R}^{2^n-1}$ are entropy vectors? 
Since the density operator is non-negative definite and normalised to $\Tr{\rho}=1$, the entropy is always non-negative and all entropy vectors lie in the non-negative orthant. Every additional inequality reduces the possible set of entropy vectors. The remaining entropy vectors form a set, whose closure is a cone, called the entropy cone.
\begin{equation}
C_n:=\overline{\{\vec{\mathbf{v}}(\rho)|\ \rho \text{ a density matrix}\}}
\end{equation}
For a detailed discussion in the classical case see Chapter 13. -15. of \cite{Yeung2008}.
\par

For the von Neumann entropy the best-known inequalities are

\begin{align}
&\text{non-negativity}:\quad &S(\rho_A)\ge 0\\
&\text{triangle inequality or Araki-Lieb (AL)}:\quad &S(\rho_{AB})\ge|S(\rho_{A})-S(\rho_{B})|\\
&\text{subadditivity (SA)}:\quad &S(\rho_A)+S(\rho_B)\ge S(\rho_{AB})\\
&\text{weak monotonicity (WM)}:\quad &S(\rho_{AB})+S(\rho_{BC})\ge S(\rho_A)+S(\rho_C)\label{VN4}\\
&\text{strong subadditivity (SSA)}:\quad &S(\rho_{AB})+S(\rho_{BC})\ge S(\rho_{ABC})+S(\rho_B)\label{VN5}
\end{align}

This set of inequalities is not minimal, in fact they can all be derived from strong subadditivity (\ref{VN5}), or equivalently weak monotonicity (\ref{VN4}), see \cite{Nielsen2010} and \cite{Cadney12}. An inequality derived from this set (i.e. SSA) is called a von Neumann inequality. It turns out, that every inequality for bi- or tripartite quantum systems can by derived from strong subadditivity. For the classical analogue, the Shannon entropy, there are known counterexamples for n-partite systems with $n\ge4$. Contrary to the Shannon entropy, no unconstrained non-von Neumann inequalities are known for n-partite systems $(n\ge 4)$ yet. See \cite{Pippenger2003}, \cite{Linden2005} and \cite{Cadney12}.
\par

If the axioms defining the von Neumann entropy uniquely (\cite{Klir2006}) are weakened, they allow more general classes of entropies. The most prominent families of parametrised entropies are the R\'enyi entropy and the Tsallis entropy.

\begin{definition}
	For a density operator $\rho\in\mathcal{H}$ and a positive constant $\alpha$, the R\'{e}nyi entropy $S^\alpha(\rho)$ is given by 
	\begin{align}
	S^\alpha(\rho):=\frac{1}{1-\alpha}\log_2 \Tr(\rho^{\alpha})
	\end{align}
\end{definition}

It has been shown in \cite{Linden2013}, that the only possible inequalities for every n-partite system for the R\'{e}nyi entropy with $0<\alpha<1$ are the non-negativity of the entropy of the system and all possible subsystems.\\
\textit{"Somewhat surprisingly, we find for $0 < \alpha < 1$, that the
	only inequality is non-negativity: In other words, any collection of non-negative numbers
	assigned to the nonempty subsets of $n$ parties can be arbitrarily well approximated by the
	$\alpha$-entropies of the $2^{n-1}$ marginals of a quantum state."} \cite{Linden2013} \\
For $\alpha>1$ it is shown in the same paper that there are no further linear (even homogeneous) inequalities. But since other inequalities are known, the set of entropy vectors is not a cone anymore for $\alpha>1$.\\
For the special case $\alpha=0$, the R\'enyi entropy becomes the quantum Hartley measure, also known as the Schmidt rank of the density operator. For this entropy there are again other inequalities besides non-negativity, see \cite{Cadney2014}.

\begin{definition}
	The Quantum Tsallis q-entropy is defined as (see \cite{Raggio1995})
	\begin{align}
	S_q(\rho):=\frac{1}{q-1}(1-\Tr(\rho^q))
	\end{align}
	for $\rho$ a density matrix and $q>0$ a real number.
\end{definition}

For $q>1$ the Tsallis entropy is bounded by $S_q(\rho)\le\frac{1-d^{1-q}}{q-1}\le\frac{1}{q-1}$, where $d$ is the dimension of the Hilbert space. Equality in the first inequality holds only for the completely mixed state.
\par

Audenaert \cite{Audenaert2007} proved 2007 that the Tsallis entropy for $q>1$ is subadditive. The subadditivity of the Tsallis entropy follows from the following lemma, but is a weaker statement.

\begin{lemma}\label{PEFl1}
	Let $\rho_{AB}$ be a density operator of a bipartite state on a finite-dimensional Hilbert space $\mathcal{H}_1\otimes\mathcal{H}_2$ and $\|\rho\|_q:=(\Tr(\rho^q))^{\frac{1}{q}}$ the Schatten $q$-norm. Then for $q>1$ the following holds
	\begin{align}
	\|\rho_{A}\|_q+\|\rho_{B}\|_q\le 1+\|\rho_{AB}\|_q\\
	\notag
	\end{align}
\end{lemma}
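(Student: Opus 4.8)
The plan is to anchor the proof at the two configurations where the statement is exactly understood, and then to control the deviation from them. At $q=1$ every Schatten norm of a density operator equals $1$, so both sides reduce to $2$ and the inequality is a tight equality. For a product state $\rho_{AB}=\rho_A\otimes\rho_B$ multiplicativity of the norm gives $\|\rho_{AB}\|_q=\|\rho_A\|_q\|\rho_B\|_q$, and the claim collapses to $(1-\|\rho_A\|_q)(1-\|\rho_B\|_q)\ge 0$, which holds because each marginal norm is at most $1$ for $q>1$. Together these show that the additive ``$1$'' on the right is genuinely necessary and indicate where the slack is stored, so I would aim to prove the general bound by measuring the departure from these extremal cases.

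Concretely, I would invoke Schatten duality. For $q>1$ and $p=q/(q-1)$ one has $\|\sigma\|_q=\max\{\Tr(\sigma X):X\ge0,\ \|X\|_p\le 1\}$ for $\sigma\ge 0$, with optimizer $X=\sigma^{q-1}/\|\sigma\|_q^{q-1}$; the optimal $X_A,Y_B$ for $\rho_A,\rho_B$ are therefore positive, of unit $p$-norm, and satisfy $X_A\le\mathbb{1}_A$, $Y_B\le\mathbb{1}_B$ since $\|\sigma\|_\infty\le\|\sigma\|_q$. Writing the marginals as partial traces gives
\begin{align}
\|\rho_A\|_q+\|\rho_B\|_q=\Tr\!\big[\rho_{AB}\,(X_A\otimes\mathbb{1}_B+\mathbb{1}_A\otimes Y_B)\big],
\end{align}
so, subtracting $\Tr\rho_{AB}=1$, the lemma becomes equivalent to
\begin{align}
\Tr\!\big[\rho_{AB}\,(X_A\otimes\mathbb{1}_B+\mathbb{1}_A\otimes Y_B-\mathbb{1}_{AB})\big]\le\|\rho_{AB}\|_q.
\end{align}

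The final estimate is where I expect the real difficulty. The naive route—bounding the left-hand side by Hölder as $\|\rho_{AB}\|_q\,\|X_A\otimes\mathbb{1}_B+\mathbb{1}_A\otimes Y_B-\mathbb{1}_{AB}\|_p$ and hoping the $p$-norm factor is at most $1$—fails, because that factor grows with the subsystem dimensions and Hölder is sharp only when the operator is aligned with $\rho_{AB}^{q-1}$, which it is not. The genuine content is thus to exploit the correlation between the dual operators and $\rho_{AB}$ itself: $X_A$ is large precisely where the $A$-marginal weight is large, and one must rule out that these conspire against the full-state norm. As a guide I would linearize at the solved boundary, where $\tfrac{d}{dq}\|\sigma\|_q\big|_{q=1}=-S(\sigma)$ (in nats) turns the first-order version of the claim into exactly the von Neumann subadditivity $S(\rho_A)+S(\rho_B)\ge S(\rho_{AB})$, and then try to promote this local statement to all $q>1$ by an operator-convexity argument—most plausibly an integral (interpolation) representation of $x\mapsto x^q$ that replaces the power by a resolvent and renders the cross term manifestly nonnegative. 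This analytic step, rather than any of the algebraic reductions above, is the crux.
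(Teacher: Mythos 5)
The paper does not prove this lemma at all---it is quoted from Audenaert's 2007 paper and used as a black box---so there is no in-paper proof to compare against; your attempt has to stand on its own. As written it does not: you carry out the sanity checks at $q=1$ and for product states, set up the duality reduction correctly, and then explicitly stop at the inequality
\begin{align}
\Tr\!\big[\rho_{AB}\,(X_A\otimes\mathds{1}_B+\mathds{1}_A\otimes Y_B-\mathds{1}_{AB})\big]\le\|\rho_{AB}\|_q,
\end{align}
declaring it the crux and sketching only a speculative programme (linearisation at $q=1$, operator convexity, an integral representation of $x\mapsto x^q$) for closing it. That is a genuine gap, and the proposed route is both unnecessary and unlikely to work as described: subadditivity of the von Neumann entropy is the first-order statement at $q=1$, but there is no obvious monotone structure in $q$ that would let you integrate it up.

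The irony is that you are one elementary observation away from Audenaert's actual argument. You have already established $0\le X_A\le\mathds{1}_A$, $0\le Y_B\le\mathds{1}_B$ and $\|X_A\|_p=\|Y_B\|_p=1$. Now use the algebraic identity
\begin{align}
X_A\otimes\mathds{1}_B+\mathds{1}_A\otimes Y_B-\mathds{1}_{AB}
= X_A\otimes Y_B-(\mathds{1}_A-X_A)\otimes(\mathds{1}_B-Y_B)
\;\le\; X_A\otimes Y_B,
\end{align}
where the operator inequality holds because the tensor product of the two positive operators $\mathds{1}_A-X_A$ and $\mathds{1}_B-Y_B$ is positive. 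Since $\rho_{AB}\ge0$, this gives $\Tr[\rho_{AB}Z]\le\Tr[\rho_{AB}(X_A\otimes Y_B)]$, and \emph{now} H\"older applies harmlessly: $\Tr[\rho_{AB}(X_A\otimes Y_B)]\le\|\rho_{AB}\|_q\,\|X_A\otimes Y_B\|_p=\|\rho_{AB}\|_q\,\|X_A\|_p\|Y_B\|_p=\|\rho_{AB}\|_q$ by multiplicativity of Schatten norms over tensor products. Your objection to H\"older was aimed at applying it to $Z$ itself, whose $p$-norm is indeed not controlled; applied after the operator upper bound, the norm factor is exactly $1$ and the proof closes in two lines.
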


Further the Tsallis entropy is known to not satisfy strong subadditivity.
\par

Both the R\'{e}nyi and the Tsallis entropy, in the limit $q\rightarrow1$, converge to the von Neumann entropy. The R\'enyi entropy can be calculated from the Tsallis entropy by
\begin{align}
S^q(\rho)=\frac{1}{1-q}\log_2\big(1-(q-1)S_q(\rho)\big)\label{Tsallistorenyi}
\end{align}
so inequalities for the Tsallis entropy can be reformulated as inequalities for the R\'enyi entropy and vice versa.
\par

In this paper we restrict ourselves to  the Tsallis 2-entropy, also known as linear entropy
\begin{align}
S_L(\rho):=1-\Tr(\rho^2)
\end{align}

The linear entropy is of remarkable interest for quantum information theory. It is obviously closely related to the purity ($\gamma(\rho)=\Tr(\rho^2)=1-S_L(\rho)$) of a quantum state and is therefore also called the impurity. The purity as information measure was introduced in \cite{Fano}.
Important entanglement witnesses arise from the linear entropy. If the linear entropy of a subsystem is greater then the entropy of the composite system, it follows that there is nonclassical correlation.
Further it is easily represented in the Bloch representation. Using the Bloch decomposition (\ref{AA1t21}) and Eq. (\ref{AA1t22}) the linear entropy can be written as
\begin{align}\label{LEeq1}
S_L(\rho)=1-\frac{1}{d}\big(1+\|\vec{\mathbf{b}}\|^2\big)
\end{align}
where $\vec{\mathbf{b}}$ denotes the Bloch vector.
\par
We have already seen how the Tsallis entropy can be transformed into the R\'enyi entropy by Eq. (\ref{Tsallistorenyi}). In the case of the linear entropy this gives the R\'enyi 2-entropy, also known as collision entropy. This is an important entropy for many applications, e.g. privacy amplification in quantum cryptography.

\par
The linear entropy can be seen as the linear approximation to the von Neumann entropy at pure states. Therefore the efficient calculation in contrast to the von Neumann entropy makes it also attractive for practical use in larger systems.
\par

We already know that the linear entropy, as a special case of the quantum Tsallis entropy, is non-negative and bounded.
\begin{align}
0\le S_{L}(\rho)\le 1-\frac{1}{d}=:D
\end{align}
where we have equality in the first inequality iff $\rho$ is a pure state and equality in the second iff $\rho=\frac{\mathds{1}_d}{d}$, i.e. completely mixed.
The dimension dependent constant $D$ denotes the maximal attainable entropy for a state in a Hilbert space with dimension $d$.

Moreover, the linear entropy is subadditive and pseudo-additive for product states
\begin{align}
S_{L}(\rho_{AB})&\le S_{L}(\rho_{A})+S_{L}(\rho_{B})\\
S_{L}(\rho_{A}\otimes\rho_{B})&=S_{L}(\rho_{A})+S_{L}(\rho_{B})-S_{L}(\rho_{A})S_{L}(\rho_{B})
\end{align}

From subadditivity the Araki-Lieb inequality can be derived via purification, i.e.
\begin{align}
S_L(\rho_{AB})\ge |S_L(\rho_A)-S_L(\rho_B)|
\end{align}
For an alternative proof see \cite{Zhang2008}.

It is known that the linear entropy is not strongly subadditive, only a weaker version holds, \cite{Petz2015}.
\par

Recently Appel et al. \cite{Appel2017} have discovered two new inequalities, the first is a dimension dependent version of the strong subadditivity

\begin{theorem}\label{LEt1}
	For a tripartite quantum system $\rho_{ABC}$ we find the following entropy inequality for the linear entropy $S_{L}(\rho_{ABC})=1-\Tr(\rho_{ABC}^{2})$:
	\begin{equation}
	S_{L}(\rho_{ABC})+\frac{1}{d_{A}d_{B}}S_{L}(\rho_{C}) \le \frac{1}{d_{B}}S_{L}(\rho_{AC})+\frac{1}{d_{A}}S_{L}(\rho_{BC})+\frac{d_{A}d_{B}+1-d_{A}-d_{B}}{d_{A}d_{B}}
	\end{equation}
	\\
\end{theorem}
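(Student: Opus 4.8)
The plan is to translate the statement from the linear entropy into a statement about purities, and then into the generalised Bloch picture, where it should collapse to a trivial inequality between sums of squares.

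First I would substitute $S_L(\rho)=1-\Tr(\rho^2)$ everywhere and collect the dimension-dependent constant. The additive term $\frac{d_Ad_B+1-d_A-d_B}{d_Ad_B}$ together with the $\frac1{d_A}$ and $\frac1{d_B}$ coming from the two right-hand terms and the leading $1$'s combine to the same value $1+\frac1{d_Ad_B}$ on each side, which cancels. After multiplying by $-1$, the claimed inequality is equivalent to the purity inequality
\[
\Tr(\rho_{ABC}^2)+\frac1{d_Ad_B}\Tr(\rho_C^2)\ \ge\ \frac1{d_B}\Tr(\rho_{AC}^2)+\frac1{d_A}\Tr(\rho_{BC}^2).
\]

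Next I would expand $\rho_{ABC}$ in a tensor-product basis of Hermitian operators $\{G_\alpha^A\otimes G_\beta^B\otimes G_\gamma^C\}$ that is orthonormal in the Hilbert--Schmidt inner product, with $G_0^X=\mathds{1}_{d_X}/\sqrt{d_X}$ and the remaining operators traceless, writing $\rho_{ABC}=\sum_{\alpha\beta\gamma}t_{\alpha\beta\gamma}\,G_\alpha^A\otimes G_\beta^B\otimes G_\gamma^C$. Orthonormality gives $\Tr(\rho_{ABC}^2)=\sum_{\alpha\beta\gamma}t_{\alpha\beta\gamma}^2$. Since $\Tr(G_\beta^B)=\sqrt{d_B}\,\delta_{\beta 0}$, partial tracing kills every component whose traced-out index is nonzero, so that $\Tr(\rho_{AC}^2)=d_B\sum_{\alpha\gamma}t_{\alpha 0\gamma}^2$, $\Tr(\rho_{BC}^2)=d_A\sum_{\beta\gamma}t_{0\beta\gamma}^2$, and $\Tr(\rho_C^2)=d_Ad_B\sum_\gamma t_{00\gamma}^2$.

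The key observation---and the reason the coefficients in the theorem are exactly the stated ones---is that the prefactors $\frac1{d_B},\frac1{d_A},\frac1{d_Ad_B}$ cancel precisely these dimension factors. The purity inequality therefore reduces to
\[
\sum_{\alpha\beta\gamma}t_{\alpha\beta\gamma}^2+\sum_\gamma t_{00\gamma}^2\ \ge\ \sum_{\alpha\gamma}t_{\alpha 0\gamma}^2+\sum_{\beta\gamma}t_{0\beta\gamma}^2.
\]
By inclusion--exclusion on the index sets $\{\beta=0\}$ and $\{\alpha=0\}$, the right-hand side equals the sum of $t_{\alpha\beta\gamma}^2$ over $\{\alpha=0\text{ or }\beta=0\}$ plus the overlap $\sum_\gamma t_{00\gamma}^2$; that overlap is exactly cancelled by the extra $\rho_C$ term on the left. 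What remains is the statement that a sum of squares over all index triples dominates the same sum restricted to a subset of triples, which holds because every summand is nonnegative.

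I do not expect a genuine obstacle here: once the Bloch bookkeeping is in place, all the content lies in checking that the dimension factors cancel and that the $\rho_C$ term accounts exactly for the double-counted slice. The only care required is a consistent normalisation of the operator basis (so that partial traces produce the factors $\sqrt{d_X}$) and the inclusion--exclusion step; after that the inequality is just nonnegativity of squares, and it is saturated precisely when $t_{\alpha\beta\gamma}=0$ for all triples with both $\alpha\neq0$ and $\beta\neq0$, which clarifies in passing why the bound is dimensionally sharp.
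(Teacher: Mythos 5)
Your argument is correct and complete: the constants on both sides do collapse to $1+\tfrac{1}{d_Ad_B}$, the partial traces in a Hilbert--Schmidt orthonormal basis with $G_0^X=\mathds{1}_{d_X}/\sqrt{d_X}$ produce exactly the factors $d_B$, $d_A$, $d_Ad_B$ that cancel the prefactors, and the inclusion--exclusion step correctly shows that the whole inequality reduces to discarding the nonnegative block $\sum_{\alpha\ge1,\beta\ge1,\gamma}t_{\alpha\beta\gamma}^2$. It is worth noting, however, that the paper does not prove this theorem at all --- it is imported verbatim from Appel et al.\ \cite{Appel2017} and then \emph{specialised} to the bipartite Lemma \ref{ISl1} --- so there is no in-paper proof to compare against; the closest analogue is the alternative proof of Lemma \ref{ISl1} in Appendix \ref{AB1}, which is precisely your argument one party down: there the discarded quantity is $\|C_{AB}\|_2^2\ge0$, while in your tripartite version it is the squared norm of the slice of the correlation tensor in which both the $A$ and the $B$ index are nontrivial. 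Your route therefore buys something the paper leaves implicit: a short, self-contained derivation of SISA itself by the same Bloch-decomposition mechanism the authors use elsewhere, making the logical chain SISA $\Rightarrow$ ISA independent of the external reference. One small caveat on your closing remark: the condition $t_{\alpha\beta\gamma}=0$ whenever $\alpha\ne0$ and $\beta\ne0$ characterises when your chain of inequalities is tight, but by itself it does not establish dimensional sharpness in the sense the paper uses (attainability of the bound as a function of the marginal entropies); for that one still needs explicit state families such as those constructed in Appendix \ref{AD}.
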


The second inequality is non-linear and dimension dependent 

\begin{theorem}\label{LEt2}
	For all $\rho_{AB}$ and the linear entropy or Tsallis 2-entropy
	\begin{equation}
	1-\frac{d_{A}d_{B}}{4}(1-S_{L}(\rho_{AB})+\frac{1}{d_{A}d_{B}})^{2} \le S_{L}(\rho_{A})+S_{L}(\rho_{B})-S_{L}(\rho_{A})S_{L}(\rho_{B})
	\end{equation}
	\\
\end{theorem}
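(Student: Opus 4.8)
The plan is to translate the statement entirely into purities and then into Bloch-vector norms, where it should collapse to an elementary algebraic inequality. First I would use $\Tr(\rho^2)=1-S_L(\rho)$ on both sides. The right-hand side is exactly the linear entropy of the product state $\rho_A\otimes\rho_B$, so by the pseudo-additivity relation quoted above it equals $1-\Tr(\rho_A^2)\Tr(\rho_B^2)$. Writing $\gamma_X:=\Tr(\rho_X^2)$ and noting that $1-S_L(\rho_{AB})+\tfrac{1}{d_Ad_B}=\gamma_{AB}+\tfrac{1}{d_Ad_B}$, the claimed inequality becomes, after cancelling the $1$'s and reversing the sign,
\begin{equation}
\gamma_A\gamma_B\le\frac{d_Ad_B}{4}\left(\gamma_{AB}+\frac{1}{d_Ad_B}\right)^2 .
\end{equation}

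Next I would insert the generalised Bloch decomposition (\ref{AA1t21}) of $\rho_{AB}$, splitting its coefficients into the local vectors $\vec{\mathbf{a}}$, $\vec{\mathbf{b}}$ and the correlation tensor $T$. Using the trace identity (\ref{AA1t22}) that also underlies (\ref{LEeq1}), and the fact that tracing out one party kills the off-party and correlation blocks, one obtains $\gamma_A=\tfrac{1}{d_A}(1+\|\vec{\mathbf{a}}\|^2)$, $\gamma_B=\tfrac{1}{d_B}(1+\|\vec{\mathbf{b}}\|^2)$ and $\gamma_{AB}=\tfrac{1}{d_Ad_B}(1+\|\vec{\mathbf{a}}\|^2+\|\vec{\mathbf{b}}\|^2+\|T\|^2)$. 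Setting $\alpha:=\|\vec{\mathbf{a}}\|^2$, $\beta:=\|\vec{\mathbf{b}}\|^2$ and $\tau:=\|T\|^2\ge0$, the factors of $d_Ad_B$ cancel throughout and the inequality reduces to the purely scalar statement
\begin{equation}
4(1+\alpha)(1+\beta)\le(2+\alpha+\beta+\tau)^2 .
\end{equation}

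This last form I would finish in two elementary moves. Since $\tau\ge0$ and $2+\alpha+\beta>0$, squaring preserves the order, so it suffices to prove the stronger-looking bound $4(1+\alpha)(1+\beta)\le(2+\alpha+\beta)^2$. Writing $u:=1+\alpha$ and $v:=1+\beta$ this is just $4uv\le(u+v)^2$, i.e. $(u-v)^2\ge0$, which is AM--GM. Chaining the two steps closes the proof and also reads off the equality conditions: one needs $\tau=0$ (a product state) together with $\alpha=\beta$ (equal marginal purities).

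The only real subtlety I anticipate lies in the second paragraph's bookkeeping: I must fix the normalisation of the $\SU(d_A)$ and $\SU(d_B)$ generators so that all three purity formulas carry the \emph{same} constant as in (\ref{LEeq1}), and I must justify that the marginal Bloch vectors are exactly the local blocks $\vec{\mathbf{a}},\vec{\mathbf{b}}$ of the joint decomposition, so that no cross terms survive the partial trace and the $d_Ad_B$ factors cancel exactly. Once the decomposition is fixed consistently, the algebra is automatic and the entire analytic content is carried by the two trivial facts $\tau\ge0$ and $(u-v)^2\ge0$.
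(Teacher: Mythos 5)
Your proof is correct: the reduction of the right-hand side to $1-\gamma_A\gamma_B$ via pseudo-additivity, the Bloch-norm expressions for the three purities (which are exactly Eq.~(\ref{LEeq1}) and Eq.~(\ref{AA1t24}) of the paper, with the normalisation $\Tr(X_iX_j)=d_A\delta_{ij}$, $\Tr(Y_iY_j)=d_B\delta_{ij}$ already fixed there so that your bookkeeping worry is moot), and the final chain $4(1+\alpha)(1+\beta)\le(2+\alpha+\beta)^2\le(2+\alpha+\beta+\tau)^2$ all check out. Note, however, that the paper itself does not prove this statement: Theorem~\ref{LEt2} is imported from Appel et al.~\cite{Appel2017} and only used as a benchmark. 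What the paper does prove is (i) the inhomogeneous subadditivity of Lemma~\ref{ISl1} via $\|C_{AB}\|_2^2\ge 0$ in Appendix~\ref{AB1}, which is precisely your step $\tau\ge 0$, and (ii) in Appendix~\ref{AB2} that ISA implies Theorem~\ref{LEt2}, the difference of the two bounds being the perfect square $\bigl(\sqrt{\tfrac{1}{d_B}(1-S_L(\rho_A))}-\sqrt{\tfrac{1}{d_A}(1-S_L(\rho_B))}\bigr)^2$, which in your variables $u=1+\alpha=d_A\gamma_A$, $v=1+\beta=d_B\gamma_B$ is exactly your AM--GM step $4uv\le(u+v)^2$. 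So you have effectively reassembled, as a direct proof of the Appel et al.\ inequality, the two-stage route that the paper presents as ``prove ISA, then show ISA is sharper''; the paper's presentation buys the additional information that the intermediate bound ISA is strictly tighter, while yours is the more self-contained derivation of the quoted theorem.
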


Furthermore from Lemma \ref{PEFl1}. one can derive

\begin{lemma}\label{LEl1}
	For the linear entropy the following inequality holds
	\begin{align}
	S_{L}(\rho_{AB})\le& \ S_{L}(\rho_{A})+ S_{L}(\rho_{B})-2(1-\sqrt{1- S_{L}(\rho_{A})})(1-\sqrt{1- S_{L}(\rho_{B})}) \label{LEl11} 
	\end{align}
	for
	\begin{align}
	\sqrt{1- S_{L}(\rho_{A})}+\sqrt{1- S_{L}(\rho_{B})}\ge 1
	\end{align}
	and is sharper than subadditivity.
\end{lemma}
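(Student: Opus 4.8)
The plan is to specialise Lemma~\ref{PEFl1} to the exponent $q=2$ and then translate the Schatten $2$-norm into the linear entropy. Since $\|\rho\|_2=(\Tr\rho^2)^{1/2}=\sqrt{1-S_L(\rho)}$, I would introduce the shorthands $a:=\sqrt{1-S_L(\rho_A)}$, $b:=\sqrt{1-S_L(\rho_B)}$ and $c:=\sqrt{1-S_L(\rho_{AB})}$, so that Lemma~\ref{PEFl1} becomes $a+b\le 1+c$, i.e.
\[
c\ge a+b-1.
\]
All three quantities lie in $[0,1]$, because $0\le S_L\le 1-\tfrac1d\le 1$ forces $1-S_L\in[0,1]$; in particular $a,b,c\ge 0$.

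The key step is to convert this linear relation into an upper bound on $S_L(\rho_{AB})=1-c^2$, which requires a \emph{lower} bound on $c^2$. Here I would invoke precisely the stated hypothesis $a+b\ge 1$: it guarantees $a+b-1\ge 0$, so both sides of $c\ge a+b-1$ are non-negative and may be squared to give $c^2\ge (a+b-1)^2$. Consequently
\[
S_L(\rho_{AB})=1-c^2\le 1-(a+b-1)^2=2(a+b)-(a+b)^2 .
\]
It then remains to identify the right-hand side with the claimed bound, which is the routine algebraic part: using $S_L(\rho_A)+S_L(\rho_B)=2-a^2-b^2$ and $2(1-a)(1-b)=2-2a-2b+2ab$, one checks that $S_L(\rho_A)+S_L(\rho_B)-2(1-a)(1-b)=2(a+b)-(a+b)^2$, completing \eqref{LEl11}.

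For the sharpness claim I would simply note that the correction term $2(1-a)(1-b)$ is non-negative, since $a,b\in[0,1]$ makes both factors $1-a$ and $1-b$ non-negative. Subtracting it from $S_L(\rho_A)+S_L(\rho_B)$ therefore yields a bound no larger than the subadditivity bound, with a strict improvement whenever neither marginal is pure (so that $1-a>0$ and $1-b>0$).

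The main obstacle I anticipate is not the computation but the justification of the squaring step: the inequality $c\ge a+b-1$ may be squared with the correct orientation only when $a+b-1\ge 0$. This is exactly why the lemma carries the constraint $\sqrt{1-S_L(\rho_A)}+\sqrt{1-S_L(\rho_B)}\ge 1$. Outside this regime $a+b-1$ is negative, the relation $c\ge a+b-1$ becomes vacuous, and the stated bound no longer follows from Lemma~\ref{PEFl1} along this route, so the domain restriction is genuinely needed rather than a technical convenience.
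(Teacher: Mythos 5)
Your proposal is correct and follows essentially the same route as the paper: specialising Lemma~\ref{PEFl1} to $q=2$, rewriting the Schatten $2$-norms as $\sqrt{1-S_L}$, squaring under the hypothesis that makes the left-hand side non-negative, and noting the correction term is non-negative because the marginal entropies lie in $[0,1]$. Your write-up is in fact slightly more careful than the paper's in spelling out the algebraic identification and in explaining why the domain restriction is genuinely needed.
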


\par

\begin{proof}
	In the case  $q=2$, Lemma \ref{PEFl1}. states
	\begin{align}
	\|\rho_{A}\|_2+\|\rho_{B}\|_2\le 1+\|\rho_{AB}\|_2
	\end{align}
	Using $\|\rho\|_2=\sqrt{\Tr(\rho^2)}$ this becomes
	\begin{align}
	\sqrt{\Tr(\rho_A^2)}+\sqrt{\Tr(\rho_B^2)}-1\le \sqrt{\Tr(\rho_{AB}^2)}\label{LEl12}
	\end{align}
	Under the assumption that the left hand side is positive, both sides can be squared. Using $S_L(\rho)=1-\Tr(\rho^2)$ one gets
	\begin{align}
	S_{L}(\rho_{AB})\le& \ S_{L}(\rho_{A})+ S_{L}(\rho_{B})-2(1-\sqrt{1- S_{L}(\rho_{A})})(1-\sqrt{1- S_{L}(\rho_{B})})
	\end{align}
	But the left hand side of Equation (\ref{LEl12}) is positive exactly when
	\begin{align}
	\sqrt{1- S_{L}(\rho_{A})}+\sqrt{1- S_{L}(\rho_{B})}\ge 1
	\end{align}
	
	This inequality is sharper than subadditivity, since $0\le S_{L}(\rho_{A}),S_{L}(\rho_{B})\le1$.
\end{proof}

\section{Inhomogeneous Subadditivity}\label{IS}

The inequality in Theorem \ref{LEt1}. from \cite{Appel2017} can be reduced to give an inequality for bipartite systems.

\begin{lemma}\label{ISl1}
For any bipartite quantum system $\rho_{AB}\in\mathcal{H}_A\otimes\mathcal{H}_B$ with the two subsystems $\rho_{A}=\Tr_B(\rho_{AB})$ and $\rho_{B}=\Tr_A(\rho_{AB})$ we have
\begin{align}
S_{L}(\rho_{AB}) &\le  \frac{1}{d_{B}}S_{L}(\rho_{A})+\frac{1}{d_{A}}S_{L}(\rho_{B})+\frac{(d_{A}-1)(d_{B}-1)}{d_{A}d_{B}}
\end{align}
where $d_A=\dim(\mathcal{H}_A)$ and $d_B=\dim(\mathcal{H}_B)$ denote the dimensions of the two partitions.
\end{lemma}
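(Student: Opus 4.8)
The plan is to derive this bipartite bound as a direct specialization of the tripartite inequality in Theorem \ref{LEt1}, obtained by making the third party $C$ trivial. The motivating clue is that the additive constant here, $\frac{(d_A-1)(d_B-1)}{d_A d_B}$, is literally the constant $\frac{d_A d_B + 1 - d_A - d_B}{d_A d_B}$ of Theorem \ref{LEt1}, since $(d_A-1)(d_B-1) = d_A d_B - d_A - d_B + 1$. This strongly suggests that the target is nothing but the tripartite bound read off in the limit where $\mathcal{H}_C$ carries no information.

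First I would apply Theorem \ref{LEt1} to the tripartite state formed by appending a one-dimensional system $\mathcal{H}_C$ to $\rho_{AB}$, that is, by setting $d_C = 1$. On a one-dimensional Hilbert space the only density operator is the scalar $1$, so $\Tr(\rho_C^2)=1$ and hence $S_{L}(\rho_C)=0$, which kills the term $\frac{1}{d_A d_B}S_{L}(\rho_C)$ on the left. Since tracing out a trivial factor changes nothing, the relevant marginals collapse to $\rho_{ABC}=\rho_{AB}$, $\rho_{AC}=\rho_A$ and $\rho_{BC}=\rho_B$. Substituting these identifications, the left-hand side of Theorem \ref{LEt1} becomes $S_{L}(\rho_{AB})$, its first two right-hand terms become $\frac{1}{d_B}S_{L}(\rho_A)+\frac{1}{d_A}S_{L}(\rho_B)$, and rewriting the constant by the factorization above reproduces the claimed inequality verbatim.

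There is no genuine obstacle in this argument; the only points requiring a moment's care are checking that Theorem \ref{LEt1} remains valid in the degenerate case $d_C=1$ (it does, as its derivation holds for all finite dimensions) and confirming that the two forms of the additive constant agree. Everything else is just bookkeeping of the collapsed marginals.
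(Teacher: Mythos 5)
Your proof is correct and takes essentially the same route as the paper's, which likewise specializes Theorem \ref{LEt1} by trivializing the third system --- the paper sets $\rho_{ABC}=\rho_{AB}\otimes\rho_C$ with $\rho_C$ pure and invokes pseudo-additivity, whereas you take $d_C=1$ directly; since $d_C$ never appears in the inequality of Theorem \ref{LEt1}, both reductions are legitimate and yield the same bookkeeping. (The paper additionally records an independent one-line proof via the Bloch decomposition, from $\|C_{AB}\|_2^2\ge 0$, in Appendix \ref{AB1}.)
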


Since these equations have a constant term, the inequality in Lemma \ref{ISl1}. will be called inhomogeneous subadditivity (ISA) and the inequality from Theorem \ref{LEt1}. strong inhomogeneous subadditivity (SISA).\\

This result follows either from  Theorem \ref{LEt1}. of \cite{Appel2017} and pseudo-additivity or directly from the Bloch decomposition.

\begin{proof}
If we take the inequality in Theorem \ref{LEt1}.
\begin{align}
S_{L}(\rho_{ABC})+\frac{1}{d_{A}d_{B}}S_{L}(\rho_{C}) \le \frac{1}{d_{B}}S_{L}(\rho_{AC})+\frac{1}{d_{A}}S_{L}(\rho_{BC})+\frac{d_{A}d_{B}+1-d_{A}-d_{B}}{d_{A}d_{B}}
\end{align}
and now assume that we can write the state as a product state $\rho_{ABC}=\rho_{AB}\otimes\rho_{C}$, where $\rho_{C}=|\phi\rangle\langle\phi|$ is a pure state, we can rewrite the inequality using pseudo-additivity
\begin{align}
\big(S_{L}(\rho_{AB})+S_{L}(\rho_{C})-S_{L}(\rho_{AB})S_{L}(\rho_{C})\big)+\frac{1}{d_{A}d_{B}}S_{L}(\rho_{C}) \le \\ 
\frac{1}{d_{B}}\big(S_{L}(\rho_{A})+S_{L}(\rho_{C})-S_{L}(\rho_{A})S_{L}(\rho_{C})\big)
+ \frac{1}{d_{A}}\big(S_{L}(\rho_{B})+S_{L}(\rho_{C})-S_{L}(\rho_{B})S_{L}(\rho_{C})\big) \notag \\
+ \frac{d_{A}d_{B}+1-d_{A}-d_{B}}{d_{A}d_{B}} \notag
\end{align}
and since $\rho_{C}$ is pure we have $S_{L}(\rho_{C})=0$ and it follows
\begin{align}
S_{L}(\rho_{AB}) &\le  \frac{1}{d_{B}}S_{L}(\rho_{A})+\frac{1}{d_{A}}S_{L}(\rho_{B})+\frac{(d_{A}-1)(d_{B}-1)}{d_{A}d_{B}}
\end{align}

\end{proof}

For an alternative proof see the Appendix \ref{AB1}.

It turns out that this inequality is tighter than the inequality from Theorem \ref{LEt2}.

\begin{prop}\label{ISp1}
	The inequality from Lemma \ref{ISl1}. is always sharper than the inequality in Theorem \ref{LEt2}. from \cite{Appel2017}.
\end{prop}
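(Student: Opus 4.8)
The plan is to recast both inequalities as explicit upper bounds on $S_L(\rho_{AB})$ and then compare them pointwise. The bound in Lemma~\ref{ISl1} is already of this form, so the first task is to solve the inequality of Theorem~\ref{LEt2} for $S_L(\rho_{AB})$. Writing $x:=S_L(\rho_{AB})$ and moving the quadratic term across, Theorem~\ref{LEt2} is equivalent to
\begin{align}
\frac{d_Ad_B}{4}\Big(1-x+\tfrac{1}{d_Ad_B}\Big)^2\ge 1-\big(S_L(\rho_A)+S_L(\rho_B)-S_L(\rho_A)S_L(\rho_B)\big).
\end{align}
I would then note that the right-hand side factorises as $\big(1-S_L(\rho_A)\big)\big(1-S_L(\rho_B)\big)$, i.e.\ as the product of the two marginal purities, which is non-negative.

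Since $1-x+\tfrac{1}{d_Ad_B}=\Tr(\rho_{AB}^2)+\tfrac{1}{d_Ad_B}>0$, I can take square roots on both sides without any sign ambiguity, obtaining the explicit upper bound
\begin{align}
S_L(\rho_{AB})\le 1+\frac{1}{d_Ad_B}-\frac{2\sqrt{\big(1-S_L(\rho_A)\big)\big(1-S_L(\rho_B)\big)}}{\sqrt{d_Ad_B}}.
\end{align}
This step---justifying the strict positivity of the bracketed term, and hence the validity of the square-root manipulation---is the only genuinely delicate point, and it follows at once from $\Tr(\rho_{AB}^2)>0$.

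It then remains to show that the bound of Lemma~\ref{ISl1} never exceeds this one. Introducing the purities $\alpha:=1-S_L(\rho_A)$ and $\beta:=1-S_L(\rho_B)$ and expanding the constant via $\tfrac{(d_A-1)(d_B-1)}{d_Ad_B}=1-\tfrac{1}{d_A}-\tfrac{1}{d_B}+\tfrac{1}{d_Ad_B}$, the Lemma~\ref{ISl1} bound collapses to $1+\tfrac{1}{d_Ad_B}-\tfrac{\alpha}{d_B}-\tfrac{\beta}{d_A}$, while the rearranged Theorem~\ref{LEt2} bound reads $1+\tfrac{1}{d_Ad_B}-\tfrac{2\sqrt{\alpha\beta}}{\sqrt{d_Ad_B}}$. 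After cancelling the common terms, the desired comparison reduces to
\begin{align}
\frac{\alpha}{d_B}+\frac{\beta}{d_A}\ge \frac{2\sqrt{\alpha\beta}}{\sqrt{d_Ad_B}},
\end{align}
which is exactly the arithmetic--geometric mean inequality applied to $\tfrac{\alpha}{d_B}$ and $\tfrac{\beta}{d_A}$. Hence the Lemma~\ref{ISl1} bound is always at least as tight, with equality precisely when $\tfrac{\alpha}{d_B}=\tfrac{\beta}{d_A}$, that is when $d_A\big(1-S_L(\rho_A)\big)=d_B\big(1-S_L(\rho_B)\big)$. I expect the entire argument to be short once the rearrangement is in place; the only real work is the sign bookkeeping in the square-root step, after which the claim is pure AM--GM.
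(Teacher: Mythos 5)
Your proposal is correct and follows essentially the same route as the paper's proof in the appendix: rearrange Theorem \ref{LEt2} into an explicit upper bound on $S_L(\rho_{AB})$ via the factorisation $(1-S_L(\rho_A))(1-S_L(\rho_B))$ and a sign-safe square root, then compare with the Lemma \ref{ISl1} bound. Your final AM--GM step is exactly the paper's observation that the difference of the two bounds equals $\bigl(\sqrt{\tfrac{1-S_L(\rho_A)}{d_B}}-\sqrt{\tfrac{1-S_L(\rho_B)}{d_A}}\bigr)^2\ge 0$.
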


For a proof see the Appendix \ref{AB2}.

\section{Dimensionally sharp subadditivity}\label{DSSA}

We now introduce a new inequality for the linear entropy. It is a dimension dependent inequality, which is sharper than subadditivity and in fact turns out to be the sharpest possible for every dimensions of a bipartite system. Therefore we call it dimensionally sharp subadditivity (DSSA). 

\begin{theorem}\label{DSSAt1}
	Let $\rho_{AB}\in \mathcal{H}_A\otimes\mathcal{H}_B$ be the state of a bipartite quantum system. Then the following inequality holds
	
	\begin{align}
	S_{L}(\rho_{AB}) &\le 
	\ S_{L}(\rho_{A})+ S_{L}(\rho_{B}) \label{DSSAt11} \\
	& -2D_AD_B\big(1-\sqrt{1-\frac{S_{L}(\rho_{A})}{D_A} }\big)\big(1-\sqrt{1-\frac{S_{L}(\rho_{B})}{D_B}}\big) \nonumber
	\end{align}
	under the assumption that
		
	\begin{align}
	S_{L}(\rho_{A})\le D_A\big(\frac{S_{L}(\rho_{B})}{D_B}-1+2\sqrt{1-\frac{S_{L}(\rho_{B})}{D_B}}\big) \label{DSSAt12}
	\end{align}
	where $D_A=\frac{d_A-1}{d_A}$ and $D_B=\frac{d_B-1}{d_B}$.
\end{theorem}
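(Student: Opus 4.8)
The plan is to push everything through the generalised Bloch picture and reduce the whole statement to a single lower bound on the correlation sector. First I fix Hilbert--Schmidt orthonormal Hermitian bases $\{G^A_i\}_{i=0}^{d_A^2-1}$ and $\{G^B_j\}_{j=0}^{d_B^2-1}$ with $G^A_0=\mathds{1}_{d_A}/\sqrt{d_A}$, $G^B_0=\mathds{1}_{d_B}/\sqrt{d_B}$ and $\Tr(G^A_iG^A_k)=\delta_{ik}$, and write $\rho_{AB}=\sum_{i,j}R_{ij}\,G^A_i\otimes G^B_j$ with $R_{ij}=\Tr[\rho_{AB}(G^A_i\otimes G^B_j)]$ and $R_{00}=1/\sqrt{d_Ad_B}$. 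Introducing the three Bloch norms
\begin{align}
\alpha:=\sum_{i\ge1}R_{i0}^2,\qquad \beta:=\sum_{j\ge1}R_{0j}^2,\qquad \gamma:=\sum_{i,j\ge1}R_{ij}^2 ,
\end{align}
a direct computation using $\Tr\rho_A^2=\tfrac1{d_A}+d_B\alpha$, $\Tr\rho_B^2=\tfrac1{d_B}+d_A\beta$ and $\Tr\rho_{AB}^2=\sum_{ij}R_{ij}^2$ gives $S_L(\rho_A)=D_A-d_B\alpha$, $S_L(\rho_B)=D_B-d_A\beta$ and $S_L(\rho_{AB})=1-\tfrac1{d_Ad_B}-\alpha-\beta-\gamma$.

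Next I would eliminate the entropies in favour of the rescaled purities $A:=1-S_L(\rho_A)/D_A=d_B\alpha/D_A$ and $B:=1-S_L(\rho_B)/D_B=d_A\beta/D_B$, both in $[0,1]$. Using $D_A+D_B-1+\tfrac1{d_Ad_B}=D_AD_B$, $(d_B-1)\alpha=D_AD_BA$ and $(d_A-1)\beta=D_AD_BB$, one finds $S_L(\rho_A)+S_L(\rho_B)-S_L(\rho_{AB})=D_AD_B(1-A-B)+\gamma$, so after completing the square the entire claim (\ref{DSSAt11}) collapses to the single correlation bound
\begin{align}
\gamma\ \ge\ D_AD_B\,(1-\sqrt A-\sqrt B)^2 . \label{crux}
\end{align}
A short check shows that hypothesis (\ref{DSSAt12}) is nothing but $\sqrt A+\sqrt B\ge1$, so in the relevant regime the right-hand side equals $D_AD_B(\sqrt A+\sqrt B-1)^2$. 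Since $\gamma=\Tr\sigma^2$ with $\sigma:=\rho_{AB}-\rho_A\otimes\tfrac{\mathds{1}}{d_B}-\tfrac{\mathds{1}}{d_A}\otimes\rho_B+\tfrac{\mathds{1}}{d_Ad_B}$ the fully correlated part, (\ref{crux}) is a dimensionally normalised Schatten-$2$ triangle inequality: it states that the correlation sector cannot be too small once the marginals are sufficiently pure.

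The remaining and genuinely hard step is to prove (\ref{crux}) from positive semidefiniteness of $\rho_{AB}$, and this is the main obstacle, because positivity bounds Bloch norms from \emph{above} (through $\Tr\rho_{AB}^2\le1$), whereas (\ref{crux}) is a lower bound on the correlation sector. That such a bound must exist is already visible from the separable approximation $M:=\rho_A\otimes\tfrac{\mathds{1}}{d_B}+\tfrac{\mathds{1}}{d_A}\otimes\rho_B-\tfrac{\mathds{1}}{d_Ad_B}$: writing $\rho_{AB}=M+\sigma$ and testing $\rho_{AB}\ge0$ against the projector $\Pi$ onto the negative eigenspace of $M$ gives $\Tr(\Pi\sigma)\ge\|M_-\|_1$, forcing $\gamma=\Tr\sigma^2>0$ exactly when $M\not\ge0$, i.e. when $\sqrt A+\sqrt B\ge1$. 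To pin down the sharp constant I would, after diagonalising $\rho_A$ and $\rho_B$ by local unitaries (which leave $S_L(\rho_A),S_L(\rho_B),\gamma$ invariant), treat ``minimise $\gamma=\Tr\sigma^2$ at fixed marginals subject to $\rho_{AB}\ge0$'' as a semidefinite program with quadratic objective and analyse its Karush--Kuhn--Tucker conditions: the optimiser is rank-deficient and supported on the range of the dual variable, which reduces the problem to a small explicitly solvable system; the subsequent optimisation over marginal spectra of fixed purity (expected to be extremal at an effectively two-level spectrum) should then reproduce precisely the right-hand side of (\ref{crux}).

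Finally I would record the facts that make the bound sharp and justify the hypothesis. Product states give $\gamma=D_AD_BAB$ and saturate (\ref{crux}) whenever one marginal is pure, since $AB\ge(\sqrt A+\sqrt B-1)^2\Leftrightarrow(1-\sqrt A)(1-\sqrt B)\ge0$; the rank-deficient optimisers above are expected to saturate (\ref{crux}) in every dimension, which is the content of ``dimensionally sharp''. The assumption $\sqrt A+\sqrt B\ge1$ is essential rather than technical: when $\sqrt A+\sqrt B<1$ the operator $M$ is already positive semidefinite, a genuine product state attains $\gamma=0$, and (\ref{crux}) would fail; the hypothesis delimits exactly the regime in which positivity forces nonvanishing correlations.
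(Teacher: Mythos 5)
Your reduction is correct and matches the paper's framework: with the normalisations you chose, your quantity $\gamma$ equals $\|C_{AB}\|_2^2/(d_Ad_B)$ in the paper's notation, your rescaled purities satisfy $\|C_A\|_2=\sqrt{(d_A-1)A}$ and $\|C_B\|_2=\sqrt{(d_B-1)B}$, and your ``crux'' inequality $\gamma\ge D_AD_B(\sqrt A+\sqrt B-1)^2$ is exactly the squared form of the paper's Corollary~\ref{DSSAc1}. The identification of hypothesis (\ref{DSSAt12}) with $\sqrt A+\sqrt B\ge1$ and the remark about saturation by product states with one pure marginal are also correct.

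However, the proof has a genuine gap at precisely the point you yourself flag as ``the remaining and genuinely hard step'': the correlation-sector lower bound is never established. The projector argument against the negative part of $M$ only shows $\gamma>0$ when $M\not\ge0$; it does not produce the constant $D_AD_B(\sqrt A+\sqrt B-1)^2$. The SDP/KKT programme is a plan, not an argument --- phrases like ``the optimiser is rank-deficient and supported on the range of the dual variable'' and ``should then reproduce precisely the right-hand side'' leave the entire quantitative content unproven, and it is far from clear that the KKT analysis of a quadratic-objective SDP with fixed marginals admits the clean closed form you need. The paper closes this gap by a much more elementary device (Lemmas~\ref{DSSAl1}--\ref{DSSAl4}): for any traceless Hermitian $X$ with $\Tr(X^2)=d$ the operator $\sqrt{d-1}\,\mathds{1}\pm X$ is positive semidefinite, so the four sign choices of $\langle(\sqrt{d_A-1}\,\mathds{1}\pm X)\otimes(\sqrt{d_B-1}\,\mathds{1}\pm Y)\rangle\ge0$ yield the componentwise bound $|\langle X\otimes Y\rangle|\ge\sqrt{d_B-1}\,|\langle X\otimes\mathds{1}\rangle|+\sqrt{d_A-1}\,|\langle\mathds{1}\otimes Y\rangle|-\sqrt{(d_A-1)(d_B-1)}$; choosing the local bases adapted to the state so that each marginal Bloch vector points along a single generator ($X_1\propto\rho_A-\mathds{1}/d_A$, $Y_1\propto\rho_B-\mathds{1}/d_B$) then converts this single-component inequality into the norm inequality, since $\|C_{AB}\|_2\ge|\langle X_1\otimes Y_1\rangle|$. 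If you replace your SDP sketch with this positivity argument, the rest of your write-up goes through unchanged.
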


To prove this theorem, we first need some intermediate results, where we would like to point out that they are interesting in their own right.\\

\begin{lemma}\label{DSSAl1}
	Let $\big\{X_i\big\}_{i=0}^{d^2-1}$ be a basis for the Hilbert space $\mathcal{H}$, with Hermitian traceless operators (except $X_0=\mathds{1}_d$) and $\Tr(X_i X_j)=d\delta_{ij}$.\\
	
	Then $\sqrt{d-1}\ \mathds{1}_d\pm X$ is a positive semi-definite operator.
\end{lemma}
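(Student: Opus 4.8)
The plan is to reduce the asserted positivity to a spectral bound on $X$, and then to establish that bound directly from the two moment constraints encoded in the normalization conventions. First I would observe that, since $X$ is Hermitian, the operator $\sqrt{d-1}\,\mathds{1}_d \pm X$ is positive semi-definite exactly when every eigenvalue of $\pm X$ is $\ge -\sqrt{d-1}$; imposing this for both signs, the claim becomes equivalent to the spectral containment
\[
\lambda \in [-\sqrt{d-1},\,\sqrt{d-1}] \quad\text{for every eigenvalue } \lambda \text{ of } X,
\]
that is, to the operator-norm bound $\norm{X}_\infty \le \sqrt{d-1}$. Here $X$ denotes any of the traceless basis elements $X_i$ with $i\neq 0$ (more generally, any traceless Hermitian $X$ with $\Tr(X^2)=d$).

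Next I would extract what the hypotheses tell us about the spectrum. Because $X$ is traceless and Hermitian it is diagonalizable with real eigenvalues $\lambda_1,\dots,\lambda_d$ satisfying $\sum_k \lambda_k = \Tr(X) = 0$, while the normalization $\Tr(X_iX_j)=d\,\delta_{ij}$ gives $\Tr(X^2)=d$, hence $\sum_k \lambda_k^2 = d$. So the problem is purely one about $d$ real numbers with prescribed first and second moments, and the task is to bound the one of largest magnitude.

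The heart of the argument is then a single-eigenvalue estimate via Cauchy--Schwarz. Singling out an arbitrary eigenvalue $\lambda=\lambda_1$, the remaining $d-1$ eigenvalues satisfy $\sum_{k\ge 2}\lambda_k = -\lambda$ and $\sum_{k\ge 2}\lambda_k^2 = d-\lambda^2$, so that
\[
\lambda^2 = \Big(\sum_{k\ge 2}\lambda_k\Big)^2 \le (d-1)\sum_{k\ge 2}\lambda_k^2 = (d-1)(d-\lambda^2),
\]
which rearranges to $d\,\lambda^2 \le d(d-1)$, i.e. $\lambda^2 \le d-1$. As $\lambda$ was arbitrary, this is precisely $\norm{X}_\infty \le \sqrt{d-1}$, which by the first paragraph yields the positivity of $\sqrt{d-1}\,\mathds{1}_d \pm X$.

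I do not expect a real obstacle: the entire content is the decision to rephrase positivity as a spectral bound and the recognition that tracelessness (vanishing first moment) together with the normalization (fixed second moment) pin down the extremal eigenvalue by Cauchy--Schwarz. The only point deserving care is the equality case, $\lambda^2 = d-1$, which occurs when the other $d-1$ eigenvalues coincide; this characterizes the extremal Bloch directions and will be relevant for the sharpness claims later, though it is not needed for the positivity statement itself.
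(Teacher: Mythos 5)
Your proposal is correct, and it follows the same overall reduction as the paper: both arguments pass from positivity of $\sqrt{d-1}\,\mathds{1}_d\pm X$ to a bound on the extremal eigenvalue of a traceless Hermitian $X$ with $\Tr(X^2)=d$, i.e.\ to bounding one of $d$ real numbers with prescribed first and second moments. Where you differ is in how that bound is obtained: the paper optimises a single eigenvalue ``with Lagrange'' subject to the two constraints and reads off the critical value $\pm\sqrt{d-1}$, whereas you isolate $\lambda=\lambda_1$ and apply Cauchy--Schwarz to the remaining $d-1$ eigenvalues, getting $\lambda^2\le(d-1)(d-\lambda^2)$ and hence $\lambda^2\le d-1$ in one line. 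Your route is the stronger one: the Lagrange step in the paper only locates stationary points and does not verify that they are maxima (nor that a maximum is attained on the constraint set), while the Cauchy--Schwarz inequality delivers the bound unconditionally and also identifies the equality case (all other eigenvalues equal), which you correctly note is the extremal Bloch direction relevant to the later sharpness claims. The only detail you gloss over is the basis element $X_0=\mathds{1}_d$, which the lemma's statement formally includes; the paper dispatches it in one sentence ($\sqrt{d-1}\,\mathds{1}_d\pm\mathds{1}_d\ge 0$ since $d\ge 2$), and you should add the same remark since $\Tr(X_0^2)=d$ holds but $X_0$ is not traceless, so your spectral argument does not apply to it.
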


\begin{proof}
	For $X_0=\mathds{1}_d$ the statement is obvious, since $d\ge2$.\\
	Let now $\mu_k$ be the eigenvalues of a given $X_i$, $i\ne0$.\\ 
	We know that $\sum\limits_{k=0}^{d-1}\mu_k=0=\Tr(X_i)$ and $\sum\limits_{k=0}^{d-1}\mu_k^2=d=\Tr(X^2_i)$.\\
	If we optimise any $\mu_k$ (with Lagrange) we get $\mu_k=\pm\sqrt{d-1}$ and therefore in general $\pm\mu_k\le\sqrt{d-1}$. So we know that $\sqrt{d-1}\ \mathds{1}_d\pm X$ is a positive semi-definite operator.\\
\end{proof}

\begin{lemma}\label{DSSAl2}
	Let $\mathcal{H}_A$ and $\mathcal{H}_B$ be Hilbert spaces with dimension $d_A$ and $d_B$ respectively. Define a basis for each Hilbert space by $\big\{X_i\big\}_{i=0}^{d_A^2-1}$ for the Hilbert space $\mathcal{H}_{A}$ and $\big\{Y_j\big\}_{j=0}^{d_B^2-1}$ the Hilbert space $\mathcal{H}_{B}$.
	We require all bases to consist of Hermitian traceless operators (except $X_0=\mathds{1}_{d_A},\ Y_0=\mathds{1}_{d_B}$) that fulfil $\Tr(X_i X_j)=d_A\delta_{ij}$ and $\Tr(Y_i Y_j)=d_B\delta_{ij}$ respectively.\\

	Then for all such basis elements $X_i$ and $Y_j$ the inequality
	\begin{equation}
	|\langle X_i \otimes Y_j \rangle| \ge \sqrt{d_B-1}|\langle X_i \otimes \mathds{1}\rangle| +\sqrt{d_A-1}|\langle \mathds{1}\otimes Y_j \rangle|-\sqrt{d_A-1}\sqrt{d_B-1}
	\end{equation}
	holds.
\end{lemma}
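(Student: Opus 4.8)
The plan is to derive the inequality from the operator positivity established in Lemma~\ref{DSSAl1}, combined with the elementary fact that the tensor product of two positive semi-definite operators is again positive semi-definite. Throughout I write $\langle O\rangle:=\Tr(\rho_{AB}\,O)$ for the expectation value in the underlying bipartite state, and abbreviate $a:=\sqrt{d_A-1}$ and $b:=\sqrt{d_B-1}$. By Lemma~\ref{DSSAl1}, for \emph{any} signs $s_1,s_2\in\{+1,-1\}$ both $a\,\mathds{1}_{d_A}+s_1 X_i$ and $b\,\mathds{1}_{d_B}+s_2 Y_j$ are positive semi-definite, hence so is their tensor product, and its expectation value in the state $\rho_{AB}$ is therefore non-negative.

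First I would expand this expectation value. Writing $p:=\langle X_i\otimes\mathds{1}\rangle$, $q:=\langle\mathds{1}\otimes Y_j\rangle$ and $r:=\langle X_i\otimes Y_j\rangle$, and using $\langle\mathds{1}\otimes\mathds{1}\rangle=\Tr(\rho_{AB})=1$, positivity of the tensor product gives
\begin{equation}
ab+s_1 b\,p+s_2 a\,q+s_1 s_2\,r\ge 0
\end{equation}
for every sign pattern. This is a family of four inequalities, one for each choice of $(s_1,s_2)$.

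The decisive step is then to select the signs so that the two linear cross terms turn into $-b|p|$ and $-a|q|$. Taking $s_1=-\operatorname{sgn}(p)$ and $s_2=-\operatorname{sgn}(q)$ (either sign works, and the corresponding term vanishes, when $p$ or $q$ is zero) converts the inequality into
\begin{equation}
s_1 s_2\,r\ge b|p|+a|q|-ab .
\end{equation}
Since $s_1 s_2\in\{+1,-1\}$ we have $s_1 s_2\,r\le|r|$, and resubstituting the definitions of $a,b,p,q,r$ reproduces exactly the claimed bound. In this argument nothing is truly hard; the only point requiring care is the sign bookkeeping together with the degenerate cases $p=0$ or $q=0$, where the sign function is ambiguous but the associated term drops out, so the conclusion is unaffected. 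The conceptual content is entirely carried by Lemma~\ref{DSSAl1}, which guarantees that the two single-party operators can be shifted by $a\,\mathds{1}$ and $b\,\mathds{1}$ into the positive cone; the tensor-product positivity and the optimisation over the four sign patterns are what promote this to the stated bilinear inequality.
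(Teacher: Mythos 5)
Your proof is correct and follows essentially the same route as the paper: both start from Lemma~\ref{DSSAl1}, form the positive semi-definite tensor products $(\sqrt{d_A-1}\,\mathds{1}\pm X_i)\otimes(\sqrt{d_B-1}\,\mathds{1}\pm Y_j)$, and extract the bound from the resulting four sign inequalities. Your direct choice $s_1=-\operatorname{sgn}(p)$, $s_2=-\operatorname{sgn}(q)$ followed by $s_1s_2\,r\le|r|$ is merely a tidier way of organising the case analysis that the paper carries out exhaustively.
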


\begin{proof}
Pick a basis element of each basis, denoted by $X$ and $Y$.\\
By the previous lemma $\sqrt{d_A-1}\ \mathds{1}_{d_A}\pm X$ and $\sqrt{d_B-1}\ \mathds{1}_{d_B}\pm Y$ are positive semidefinite, and so is their product. Hence we can write
\begin{align}
0\le\bigg\langle(\sqrt{d_A-1}\mathds{1}\pm X)\otimes(\sqrt{d_B-1}\mathds{1}\pm Y)\bigg\rangle
\end{align}

And we can derive four different inequalities
\begin{align}
0 \le & \sqrt{d_A-1}\sqrt{d_B-1}+\sqrt{d_B-1}\langle X\otimes \mathds{1}\rangle +\sqrt{d_A-1}\langle \mathds{1}\otimes Y\rangle+\langle X\otimes Y\rangle\\
0 \le & \sqrt{d_A-1}\sqrt{d_B-1}-\sqrt{d_B-1}\langle X\otimes \mathds{1}\rangle +\sqrt{d_A-1}\langle \mathds{1}\otimes Y\rangle-\langle X\otimes Y\rangle\\
0 \le & \sqrt{d_A-1}\sqrt{d_B-1}+\sqrt{d_B-1}\langle X\otimes \mathds{1}\rangle -\sqrt{d_A-1}\langle \mathds{1}\otimes Y\rangle-\langle X\otimes Y\rangle\\
0 \le & \sqrt{d_A-1}\sqrt{d_B-1}-\sqrt{d_B-1}\langle X\otimes \mathds{1}\rangle -\sqrt{d_A-1}\langle \mathds{1}\otimes Y\rangle+\langle X\otimes Y\rangle
\end{align}

Dividing by $\sqrt{d_A-1}\sqrt{d_B-1}$ and renaming $a=\frac{\langle X\otimes \mathds{1}\rangle}{\sqrt{d_A-1}}$, $b=\frac{\langle \mathds{1}\otimes Y\rangle}{\sqrt{d_B-1}}$ and $c=\frac{\langle X\otimes Y\rangle}{\sqrt{d_A-1}\sqrt{d_B-1}}$ we can rewrite these inequalities as
\begin{align}
0 \le & 1+a+b+c \label{DSSAl21} \\
0 \le & 1-a+b-c \label{DSSAl22} \\
0 \le & 1+a-b-c \label{DSSAl23} \\
0 \le & 1-a-b+c \label{DSSAl24}
\end{align}

Assume that $a,b,c$ are nonzero (if not, one can get the same result easily).Then there are eight possible combinations of the signs of $a,b,c$. If
\[sgn(c)=sgn(a)\cdot sgn(b)\]
then we can use one of the equations (\ref{DSSAl21})-(\ref{DSSAl24}) to get
\begin{align}
0 \le 1-|a|-|b|+|c| \label{DSSAl25}
\end{align}

Now lets look at the cases where this is not fulfilled.\\
\textbf{Case 1:} $a<0$, $b>0$, $c>0$
\[0 \le  1+a-b-c= 1-|a|-|b|-|c|\le 1-|a|-|b|+|c|\]
\textbf{Case 2:} $a>0$, $b<0$, $c>0$
\[0 \le  1-a+b-c= 1-|a|-|b|-|c|\le 1-|a|-|b|+|c|\]
\textbf{Case 3:} $a>0$, $b>0$, $c<0$
\[0 \le  1-a-b+c= 1-|a|-|b|-|c|\le 1-|a|-|b|+|c|\]
\textbf{Case 4:} $a<0$, $b<0$, $c<0$
\[0 \le  1+a+b+c= 1-|a|-|b|-|c|\le 1-|a|-|b|+|c|\]
\hspace{\fill}

So inequality (\ref{DSSAl25}) always holds and therefore we get
\begin{align}
|\langle X\otimes Y\rangle| \ge \sqrt{d_B-1}|\langle X\otimes \mathds{1}\rangle| +\sqrt{d_A-1}|\langle \mathds{1}\otimes Y\rangle|-\sqrt{d_A-1}\sqrt{d_B-1}
\end{align}

\end{proof}

\begin{lemma}\label{DSSAl3}
	For every bipartite quantum state $\rho_{AB}\in\mathcal{H}_A\otimes\mathcal{H}_B$ there exists a basis $\big\{X_i\otimes Y_j\big\}_{i,j}$, such that
	\begin{itemize}
		\item $\big\{X_i\big\}_{i}$ and $\big\{Y_j\big\}_{j}$ are local bases for $\mathcal{H}_A$ and $\mathcal{H}_B$ respectively
		\item all basis  elements consist of Hermitian traceless operators (except $X_0=\mathds{1}_{d_A},\ Y_0=\mathds{1}_{d_B}$)
		\item  $\Tr(X_i X_j)=d_A\delta_{ij}$ and $\Tr(Y_i Y_j)=d_B\delta_{ij}$
		\item $\rho_{AB}$ can be expressed by
		\begin{equation}
		\rho_{AB} = \frac{1}{d_A d_B} \big(\mathds{1} + 
		b_{10} X_1 \otimes \mathds{1} + b_{01} \mathds{1} \otimes
		Y_1+ \sum_{i,j} b_{ij} X_i \otimes Y_j \big)
		\end{equation}
		where $b_{i0} = \langle X_i \otimes \mathds{1} \rangle$, $b_{0j} = \langle \mathds{1} \otimes
		Y_j \rangle$ and $b_{ij}=\langle X_i \otimes Y_j  \rangle$.\\
	\end{itemize}
	In particular, in such a basis the local Bloch vectors are expressible by single basis elements instead of linear combinations of basis elements.
\end{lemma}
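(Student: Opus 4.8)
The plan is to begin from a generalized Bloch decomposition written in arbitrary local bases, and then to exploit the rotational freedom available in the choice of those bases to align each local Bloch vector with a single basis direction. First I would fix any local bases $\{X_i\}_{i=0}^{d_A^2-1}$ and $\{Y_j\}_{j=0}^{d_B^2-1}$ satisfying the stated Hermiticity, tracelessness and normalization conditions (for instance the generalized Gell-Mann matrices). Since the products $\{X_i\otimes Y_j\}$ then form an orthogonal basis of the real space of Hermitian operators on $\mathcal{H}_A\otimes\mathcal{H}_B$, every density operator admits an expansion $\rho_{AB}=\frac{1}{d_Ad_B}\big(\mathds{1}+\sum_i b_{i0}X_i\otimes\mathds{1}+\sum_j b_{0j}\mathds{1}\otimes Y_j+\sum_{i,j}b_{ij}X_i\otimes Y_j\big)$ with real coefficients $b_{i0}=\langle X_i\otimes\mathds{1}\rangle$, and so forth.

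Next I would identify the local Bloch vectors. Tracing out $B$ and using $\Tr(Y_j)=0$ for $j\neq 0$ together with $\Tr(\mathds{1}_{d_B})=d_B$ yields $\rho_A=\frac{1}{d_A}\big(\mathds{1}+\sum_{i\ge 1}b_{i0}X_i\big)$, so the local Bloch vector of $A$ is the vector $\vec{b}_A=(b_{10},\dots,b_{d_A^2-1,0})$ living in the $(d_A^2-1)$-dimensional real space $\mathcal{V}_A$ of Hermitian traceless operators, equipped with the inner product $\langle\,\cdot\,,\cdot\,\rangle=\Tr(\,\cdot\,\cdot\,)$. An identical computation gives $\vec{b}_B$ on the other side.

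The crucial observation is that the admissible local bases are exactly the orthogonal bases of $\mathcal{V}_A$, scaled so that $\Tr(X_iX_j)=d_A\delta_{ij}$, and that any two such bases are related by an element of the orthogonal group $O(d_A^2-1)$. I would therefore choose the new first basis element to be $X_1':=\frac{1}{\|\vec{b}_A\|}\sum_{i\ge 1}b_{i0}X_i$ (and $X_1'$ arbitrary if $\vec{b}_A=0$) and complete it to an orthogonal basis $\{X_i'\}_{i\ge 1}$ of $\mathcal{V}_A$ by Gram--Schmidt. One checks directly that $\Tr(X_1'X_1')=\frac{1}{\|\vec{b}_A\|^2}\sum_i b_{i0}^2\,\Tr(X_iX_i)=d_A$, so the normalization is preserved; moreover real linear combinations of Hermitian traceless operators are again Hermitian and traceless, so all the required properties carry over. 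In this basis $\rho_A=\frac{1}{d_A}\big(\mathds{1}+\|\vec{b}_A\|\,X_1'\big)$, i.e. the local Bloch vector of $A$ is carried by the single element $X_1'$. Performing the analogous rotation on the $B$ side and re-expressing $\rho_{AB}$ in the product basis $\{X_i'\otimes Y_j'\}$ then produces precisely the claimed decomposition, with the single local terms $b_{10}X_1'\otimes\mathds{1}$ and $b_{01}\mathds{1}\otimes Y_1'$.

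The only point requiring care is confirming that an arbitrary orthogonal change of local basis really preserves membership in the admissible class; this is where the argument hinges. It follows immediately, however, because Hermiticity, tracelessness and trace-orthonormality are all invariant under real orthogonal transformations of $\mathcal{V}_A$ and $\mathcal{V}_B$. There is thus no genuine analytic obstacle: the entire content is the recognition that the basis freedom is the full orthogonal group, and that aligning a prescribed vector with a single basis direction is always possible in a real inner-product space.
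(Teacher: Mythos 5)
Your proposal is correct and follows essentially the same route as the paper: the paper likewise takes the (normalized) traceless part $\rho_A-\mathds{1}_{d_A}/d_A$ — which is exactly your $X_1'$ up to normalization — as the first basis element, completes the local bases, and uses the orthogonal-group freedom of the Bloch basis. Your version merely spells out the normalization check, the degenerate case $\vec{b}_A=0$, and the invariance of Hermiticity, tracelessness and trace-orthogonality, which the paper leaves implicit.
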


\begin{proof}
	By Theorem \ref{AA1t2}., each state can be expressed in the well-known Bloch decomposition by
	\begin{equation}
	\rho = \frac{1}{d} \big(\mathds{1} + 
	\sum_{i\ge1} b_i X_i \big)=\frac{1}{d} \big(\mathds{1} + 
	\vec{\mathbf{b}}\cdot\vec{\mathbf{\Gamma}} \big)
	\end{equation}
	Changing the operator basis $\big\{X_i\big\}_{i}$ corresponds to a rotation of the Bloch vector, which can be realised by an orthogonal matrix $Q\in SO(d^2-1)$. I.e. take $X_1=\rho-\frac{\mathds{1_d}}{d}$, which is traceless and Hermitian, normalise it and choose it as the first basis element and then complete the basis in an appropriate way.\\
	Do this for both partitions, $X_1:=\rho_A-\frac{\mathds{1_{d_A}}}{d_A}$ and $Y_1:=\rho_B-\frac{\mathds{1_{d_B}}}{d_B}$, and denote the bases by $\big\{X_i\big\}_{i}$ and $\big\{Y_j\big\}_{j}$ respectively. Then $\big\{X_i\otimes Y_j\big\}_{i,j}$ is a basis for the joint system and we can write the state as
    \begin{equation}\label{DSSAL3eq2}
    \rho_{AB}=\frac{1}{d_{AB}}\bigg(\mathds{1}_{d_A}\otimes\mathds{1}_{d_B}+\sum\limits_{i=1}^{d_A^2-1}b_{i0}X_i\otimes\mathds{1}_{d_B}+\sum\limits_{j=1}^{d_B^2-1}b_{0j}\mathds{1}_{d_B}\otimes Y_j+\sum\limits_{i,j\ge1}b_{ij}X_i\otimes Y_j\bigg).
    \end{equation}
	We changed the Bloch basis locally, such that the marginal state is expressible by a single basis element, and therefore\\
	\begin{equation}
	\rho_{AB} =\frac{1}{d_A d_B} \bigg(\mathds{1} + 
	b_{10} X_1 \otimes \mathds{1} + b_{01} \mathds{1} \otimes
	Y_1+ \sum_{i,j} b_{ij} X_i \otimes Y_j \bigg)
	\end{equation}
\end{proof}

\begin{lemma}\label{DSSAl4}
	For the local Bloch vectors q-norm $\|C_{A}\|_q:=\sqrt[q]{\sum_{i}|\langle X_i \otimes \mathds{1} \rangle|^q}$, $\|C_{B}\|_q:=\sqrt[q]{\sum_{j}|\langle \mathds{1}\otimes Y_j \rangle|^q}$ and the correlation tensor norm $\|C_{AB}\|_q:=\sqrt[q]{\sum_{i,j}|\langle X_i \otimes Y_j \rangle|^q}$ of a given bipartite state the following holds for $q\ge1$
	\begin{align}
	\|C_{AB}\|_q\ge\sqrt{d_B-1}\|C_{A}\|_q+\sqrt{d_A-1}\|C_{B}\|_q-\sqrt{d_A-1}\sqrt{d_B-1}
	\end{align}
\end{lemma}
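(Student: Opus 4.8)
My plan is to reduce the whole inequality to a single component of the correlation tensor, rather than to lift the pointwise bound of Lemma~\ref{DSSAl2} componentwise. The latter, direct route does not work: summing the pointwise estimate over all pairs $(i,j)$, or applying the triangle inequality to the two arrays $\sqrt{d_B-1}\,|\langle X_i\otimes\mathds{1}\rangle|$ and $\sqrt{d_A-1}\,|\langle\mathds{1}\otimes Y_j\rangle|$ that are constant along one factor, introduces spurious factors like $(d_B^2-1)^{1/q}$ in front of $\|C_A\|_q$ and over-counts the constant $\sqrt{d_A-1}\sqrt{d_B-1}$. The extra structure needed to avoid this comes from the freedom in choosing the operator basis.

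Concretely, I would first invoke Lemma~\ref{DSSAl3} and pass to the state-adapted product basis $\{X_i\otimes Y_j\}$, in which the local Bloch vectors are carried by single operators, so that $\langle X_i\otimes\mathds{1}\rangle=0$ for all $i\ge 2$ and $\langle\mathds{1}\otimes Y_j\rangle=0$ for all $j\ge 2$. In this basis the local norms each collapse to one term, $\|C_A\|_q=|\langle X_1\otimes\mathds{1}\rangle|$ and $\|C_B\|_q=|\langle\mathds{1}\otimes Y_1\rangle|$, for every $q$. I would then use the elementary fact that for $q\ge 1$ the $\ell_q$-norm of the correlation tensor dominates any one of its entries, in particular $\|C_{AB}\|_q\ge|\langle X_1\otimes Y_1\rangle|$. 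Finally I would apply Lemma~\ref{DSSAl2} to the distinguished pair $(X_1,Y_1)$; it bounds $|\langle X_1\otimes Y_1\rangle|$ below by exactly $\sqrt{d_B-1}\,|\langle X_1\otimes\mathds{1}\rangle|+\sqrt{d_A-1}\,|\langle\mathds{1}\otimes Y_1\rangle|-\sqrt{d_A-1}\sqrt{d_B-1}$, and chaining this with the previous step yields the assertion.

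The step I expect to be the genuine obstacle is the reduction itself, namely realising that one must move to the adapted basis of Lemma~\ref{DSSAl3} before estimating. For $q\ne 2$ the three quantities $\|C_A\|_q$, $\|C_B\|_q$ and $\|C_{AB}\|_q$ are basis dependent, and in a generic compatible basis the collection of pointwise bounds from Lemma~\ref{DSSAl2} is strictly too weak to give the claim; the missing information enters only through the state-dependent choice of basis. Once that choice is fixed the remainder is almost immediate, because the entire right-hand side is carried by the leading correlation component $\langle X_1\otimes Y_1\rangle$, while every further component of the tensor can only enlarge $\|C_{AB}\|_q$. I would finally note that in the case $q=2$, the one used in Theorem~\ref{DSSAt1}, all three norms are invariant under the orthogonal change of Bloch basis, so adapting the basis costs no generality there.
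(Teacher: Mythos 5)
Your proposal is correct and coincides with the paper's own proof: pass to the state-adapted basis of Lemma~\ref{DSSAl3} so that $\|C_A\|_q=|\langle X_1\otimes\mathds{1}\rangle|$ and $\|C_B\|_q=|\langle\mathds{1}\otimes Y_1\rangle|$, bound $\|C_{AB}\|_q$ below by the single entry $|\langle X_1\otimes Y_1\rangle|$, and apply Lemma~\ref{DSSAl2} to that pair. Your added remarks on why the componentwise route fails and on the basis dependence for $q\neq 2$ are sound observations the paper leaves implicit, but the argument itself is the same.
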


\begin{proof}
	We choose an appropriate basis for the Hilbert space $\mathcal{H}_A\otimes\mathcal{H}_B$, according to Lemma \ref{DSSAl3}. Then we can express the correlation tensors of the joint system as $\|C_{A}\|_q=|\langle X_1\otimes\mathds{1}\rangle |$ and $\|C_{B}\|_q=|\langle \mathds{1}\otimes Y_1\rangle |$.\\
	It follows $\|C_{AB}\|_q=\sqrt[q]{\langle X_1\otimes Y_1\rangle^{q}+\sum\limits_{i,j\ge2}\langle X_{i}\otimes Y_{j}\rangle^{q}}\ge|\langle X_1\otimes Y_1\rangle |$ and by using Lemma \ref{DSSAl2}. we can therefore conclude
	\begin{align}
	\|C_{AB}\|_q\ge\sqrt{d_B-1}\|C_{A}\|_q+\sqrt{d_A-1}\|C_{B}\|_q-\sqrt{d_A-1}\sqrt{d_B-1}
	\end{align}
\end{proof}

\begin{cor}\label{DSSAc1}
	The following inequality holds true	
	\begin{align*}
	\sqrt{(d_A-1)(d_B-1)-d_A d_B\ S_{L}(\rho_{AB})+d_A\ S_{L}(\rho_{A})+d_B\ S_{L}(\rho_{B})} \tag{I}\ge\\
	\sqrt{d_B-1}\sqrt{d_A-1-d_A\ S_{L}(\rho_{A})}+\sqrt{d_A-1}\sqrt{d_B-1-d_B\ S_{L}(\rho_{B})}-\sqrt{d_A-1}\sqrt{d_B-1} \tag{II}
	\end{align*}
\end{cor}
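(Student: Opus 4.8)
The plan is to recognise that Corollary~\ref{DSSAc1} is nothing but the $q=2$ instance of Lemma~\ref{DSSAl4}, rewritten in the language of the linear entropy. All of the analytic content lives in Lemma~\ref{DSSAl4}; what remains is purely a change of variables, translating the three correlation-tensor norms into the entropies $S_L(\rho_A)$, $S_L(\rho_B)$, $S_L(\rho_{AB})$ via the Bloch form of the linear entropy, Eq.~\eqref{LEeq1}. So the strategy is to show that radicand~(I) is exactly $\|C_{AB}\|_2^2$ and that the two radicands on the right together reproduce $\sqrt{d_B-1}\,\|C_A\|_2+\sqrt{d_A-1}\,\|C_B\|_2$, after which the inequality is immediate.

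First I would fix the basis supplied by Lemma~\ref{DSSAl3}, so that $\rho_{AB}$ carries the Bloch coefficients $b_{i0},b_{0j},b_{ij}$. In this product basis the total squared Bloch norm splits orthogonally into its $A$-local, $B$-local and genuine correlation parts,
\begin{equation}
\|\vec{\mathbf{b}}_{AB}\|^2=\|C_A\|_2^2+\|C_B\|_2^2+\|C_{AB}\|_2^2 .
\end{equation}
Applying $S_L(\rho)=1-\tfrac{1}{d}(1+\|\vec{\mathbf{b}}\|^2)$ to each marginal gives the identifications $\|C_A\|_2^2=(d_A-1)-d_A\,S_L(\rho_A)$ and $\|C_B\|_2^2=(d_B-1)-d_B\,S_L(\rho_B)$, so the two square roots appearing in line~(II) are precisely $\|C_A\|_2$ and $\|C_B\|_2$. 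Applying the same formula to the full state with $d=d_Ad_B$ and solving for the remaining piece yields
\begin{equation}
\|C_{AB}\|_2^2=(d_A-1)(d_B-1)-d_Ad_B\,S_L(\rho_{AB})+d_A\,S_L(\rho_A)+d_B\,S_L(\rho_B),
\end{equation}
which is exactly the expression under the square root in line~(I). Hence $(\mathrm{I})=\|C_{AB}\|_2$ and $(\mathrm{II})=\sqrt{d_B-1}\,\|C_A\|_2+\sqrt{d_A-1}\,\|C_B\|_2-\sqrt{d_A-1}\sqrt{d_B-1}$.

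Finally I would invoke Lemma~\ref{DSSAl4} at $q=2$, which states $\|C_{AB}\|_2\ge\sqrt{d_B-1}\,\|C_A\|_2+\sqrt{d_A-1}\,\|C_B\|_2-\sqrt{d_A-1}\sqrt{d_B-1}$, completing the argument. I do not expect any genuine obstacle: the real work was already carried out in Lemma~\ref{DSSAl4}, and here the only points demanding care are the bookkeeping in the orthogonal splitting of the Bloch norm and the correct collection of terms when solving for $\|C_{AB}\|_2^2$. It is worth noting that each radicand is a squared norm and therefore manifestly non-negative (in particular $\|C_A\|_2^2\ge0$ forces $S_L(\rho_A)\le D_A$, consistent with the dimensional bound on the entropy), so passing to square roots is legitimate and the inequality is stated between well-defined real quantities.
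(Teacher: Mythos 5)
Your proposal is correct and follows essentially the same route as the paper: identify the three radicands with $\|C_{AB}\|_2^2$, $\|C_A\|_2^2$, $\|C_B\|_2^2$ via the Bloch-decomposition formula for the linear entropy, and then invoke Lemma~\ref{DSSAl4} at $q=2$. The orthogonal splitting of the Bloch norm and the resulting substitutions match the paper's proof exactly.
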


\begin{proof}
From  (\ref{AA1t22}) and Eq. (\ref{DSSAL3eq2}) we conclude for the linear entropy of a bipartite state
\begin{align}
S_L(\rho_{AB})=1-\frac{1}{d_A d_B}\big(1+\|C_A\|_2^2+\|C_B\|_2^2+\|C_{AB}\|_2^2\big) \label{AA1t24}
\end{align}
and together with Eq. (\ref{LEeq1}) therefore we have
\begin{align}
\|C_{A}\|_2=&\sqrt{d_A-1-d_A\ S_{L}(\rho_{A})}\\
\|C_{B}\|_2=&\sqrt{d_B-1-d_B\ S_{L}(\rho_{B})}\\
\|C_{AB}\|_2=&\sqrt{(d_A-1)(d_B-1)-d_A d_B\ S_{L}(\rho_{AB})+d_A\ S_{L}(\rho_{A})+d_B\ S_{L}(\rho_{B})}
\end{align}

Now just plug these in the inequality of the previous lemma for $q=2$and the result follows.
\end{proof}

\begin{remark}
    Unfortunately Eq. (\ref{AA1t24}) does not hold for values other then $q=2$, therefore our result follows only for the linear entropy.
\end{remark}

With this last result the proof of Theorem \ref{DSSAt1}. becomes straightforward:

\begin{proof}
First we need to square both sides to get rid of the roots. Therefore we have to be sure both sides are non-negative. This obviously holds for the left hand-side.\\

Rewriting the right side of the equation (II) in Corollary \ref{DSSAc1}. and dividing it by $\sqrt{d_A-1}\sqrt{d_B-1}$ we get
\begin{align}
\sqrt{1-\frac{S_{L}(\rho_{A})}{D_A}}+\sqrt{1-\frac{S_{L}(\rho_{B})}{D_B}}-1
\end{align}
where $D_A=1-\frac{1}{d_A}$ and $D_B=1-\frac{1}{d_B}$.\\
The condition that the right hand-side (II) is non-negative then becomes
\begin{align}
\sqrt{1-\frac{S_{L}(\rho_{A})}{D_A}}+\sqrt{1-\frac{S_{L}(\rho_{B})}{D_B}}\ge1
\end{align}
or equivalently
\begin{align}
S_{L}(\rho_{B})\le D_B\big(\frac{S_{L}(\rho_{A})}{D_A}-1+2\sqrt{1-\frac{S_{L}(\rho_{A})}{D_A}}\big)
\end{align}

Assume now, that this constraint holds and that both sides of the equation in Corollary \ref{DSSAc1}. are positive. Now square both sides and divide by $d_Ad_B$. This gives

\begin{align}
D_AD_B-S_{L}(\rho_{AB})+\frac{S_{L}(\rho_{A})}{d_B}+\frac{S_{L}(\rho_{B})}{d_A} \ge\\
D_B\big(D_A-S_L(\rho_{A})\big)+D_A\big(D_B-S_L(\rho_{B})\big)+D_AD_B\notag \\
+2D_AD_B\sqrt{1-\frac{S_L(\rho_{A})}{D_A}}\sqrt{1-\frac{S_L(\rho_{B})}{D_B}}
-2D_AD_B\sqrt{1-\frac{S_L(\rho_{A})}{D_A}}-2D_AD_B\sqrt{1-\frac{S_L(\rho_{B})}{D_B}}\notag
\end{align}

which can be reformulated as
\begin{align}
S_{L}(\rho_{AB}) \le 
\ S_{L}(\rho_{A})+ S_{L}(\rho_{B}) 
-2D_AD_B\big(1-\sqrt{1-\frac{S_{L}(\rho_{A})}{D_A} }\big)\big(1-\sqrt{1-\frac{S_{L}(\rho_{B})}{D_B}}\big) \label{DSSAt13}
\end{align}

\end{proof}

\begin{remark}
	The domain where inequality (\ref{DSSAt13}) holds is constrained. This is not a consequence of this proof and indeed it is wrong outside of its assigned domain.\\
	Assume $d_A=d_B=d$ and $\rho_{AB}=\frac{\mathds{1}_{d^2}}{d^2}$ is the completely mixed state and therefore $S_{L}(\rho_{AB})=1-\frac{1}{d^2}$. But we also have $\rho_{A}=\rho_{B}=\frac{\mathds{1}_{d}}{d}$  and $S_{L}(\rho_{A})=S_{L}(\rho_{B})=1-\frac{1}{d}$.\\ Inequality (\ref{DSSAt13}) becomes $1-\frac{1}{d^2}\le0$, which can not hold since $d\ge2$.
\end{remark}
\begin{remark}
	By taking the limits $d_A=d_B \rightarrow \infty$ DSSA converges towards the inequality from Lemma \ref{LEl1}.
	\begin{align}
	S_{L}(\rho_{AB})\le& \ S_{L}(\rho_{A})+ S_{L}(\rho_{B})\\
	&-2(1-\sqrt{1- S_{L}(\rho_{A})})(1-\sqrt{1- S_{L}(\rho_{B})})\notag\\
	\notag
	\end{align}
\end{remark}

\section{Sharpest bound}\label{SB}
We want to combine inhomogeneous subadditivity with dimensionally sharp subadditivity and prove this "combination" to give a tight upper bound for the linear entropy of a bipartite quantum system.\\
From now on, $S_L(\rho_{A})$, $S_L(\rho_{B})$ and $S_L(\rho_{AB})$ will be denoted by the coordinates $x,y,z$. We will also use the notation $D_A:=\frac{d_A-1}{d_A}$ and $D_B:=\frac{d_B-1}{d_B}$.\\

\begin{definition}
Let's define the functions of the inequalities. The inequality in Lemma \ref{ISl1}. gives
	\begin{align}
	h(x,y):=\frac{1}{d_{B}}x+\frac{1}{d_{A}}y+D_AD_B
	\end{align}
	and inequality (\ref{DSSAt11}) in Theorem \ref{DSSAt1}.
	\begin{align}
	g(x,y):=x+y-2D_AD_B(1-\sqrt{1-\frac{x}{D_A}})(1-\sqrt{1-\frac{y}{D_B}}\big)
	\end{align}
	The restriction (\ref{DSSAt12}) then becomes
	\begin{align}
	r(y):=D_A\big(\frac{1}{D_B}y-1+2\sqrt{1-\frac{y}{D_B}}\big)
	\end{align}
\end{definition}

So we can rewrite the inhomogeneous subadditivity as 
\begin{equation}
    S_L(\rho_{AB})\le h(S_L(\rho_{A}),S_L(\rho_{B}))
\end{equation}
and the dimensionally sharp subadditivity as
\begin{equation}
    S_L(\rho_{AB})\le g(S_L(\rho_{A}),S_L(\rho_{B}))
\end{equation}
This two inequalities can be combined by the following proposition.

\begin{prop}\label{SBp1}
	The function
	\begin{align}
	f(x,y):=\begin{cases}
	g(x,y)\qquad x\le r(y)\\
	h(x,y)\qquad x> r(y)
	\end{cases}
	\end{align}
	is continuously differentiable.
\end{prop}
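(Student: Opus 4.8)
The plan is to reduce everything to a single algebraic identity by changing variables to $u:=\sqrt{1-x/D_A}$ and $v:=\sqrt{1-y/D_B}$, so that $x=D_A(1-u^2)$ and $y=D_B(1-v^2)$. First I would rewrite the branch condition: manipulating the defining inequality $x\le r(y)$ and substituting shows it is equivalent to $u+v\ge 1$, so the dividing curve $x=r(y)$ is exactly the level set $\{u+v=1\}$. Setting $\Psi:=u+v-1$, the function $f$ therefore equals $g$ on $\{\Psi\ge 0\}$ and $h$ on $\{\Psi<0\}$, and the whole question is how the two branches join across $\{\Psi=0\}$.

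The key step is to compute the difference of the branches. Using $1/d_A=1-D_A$ and $1/d_B=1-D_B$ to rewrite $h$, and inserting $x=D_A(1-u^2)$, $y=D_B(1-v^2)$, all terms collapse and I expect to obtain the clean identity
\[
g(x,y)-h(x,y)=-D_AD_B\,(u+v-1)^2=-D_AD_B\,\Psi^2 .
\]
Since the right-hand side vanishes on $\{\Psi=0\}$, the two branches agree along the dividing curve, which immediately gives continuity of $f$. Moreover, because the difference is $-D_AD_B$ times the \emph{square} of $\Psi$, its gradient is $-2D_AD_B\,\Psi\,\nabla\Psi$, which also vanishes on $\{\Psi=0\}$ wherever $\nabla\Psi$ stays finite. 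Hence $\nabla g=\nabla h$ all along the boundary. Combined with the facts that $h$ is affine (so $C^\infty$) and $g$ is smooth on the open region $\{u>0,\ v>0\}$, which contains the dividing curve apart from its two endpoints, the standard gluing lemma for two $C^1$ pieces that agree to first order yields that $f$ is $C^1$.

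The step I expect to require the most care is the behaviour at the two corner points $(D_A,0)$ and $(0,D_B)$, where $u$ or $v$ vanishes and $\nabla\Psi=\left(-\tfrac{1}{2D_A u},-\tfrac{1}{2D_B v}\right)$ blows up, so the factorised gradient argument breaks down. There I would verify the matching directly as a limit along the curve: on $\{u+v=1\}$ one has $1-v=u$ and $1-u=v$, whence $\partial_x g=1-D_B(1-v)/u=1-D_B=1/d_B=\partial_x h$ and symmetrically $\partial_y g=1-D_A(1-u)/v=1-D_A=1/d_A=\partial_y h$. This shows that the derivatives of the two branches coincide on the entire boundary, corner points included, so restricting to the physical domain $[0,D_A]\times[0,D_B]$ completes the verification.
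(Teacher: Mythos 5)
Your proof is correct, and it reaches the conclusion by a cleaner route than the paper's Appendix~\ref{AC}. The paper works directly in the $(x,y)$ coordinates: it first verifies by a fairly long algebraic computation that $g=h$ on the dividing curve $\Omega$ (substituting the defining relation of $\Omega$ to eliminate the square roots), and then \emph{separately} computes $\nabla g$ and checks that it reduces to $\nabla h=(1/d_B,\,1/d_A)$ on $\Omega$. Your substitution $u=\sqrt{1-x/D_A}$, $v=\sqrt{1-y/D_B}$ replaces both steps by the single global identity
\begin{equation*}
g(x,y)-h(x,y)=-D_AD_B\,(u+v-1)^2,
\end{equation*}
which I have checked; it is valid on the whole square, not just on the boundary curve, and it delivers continuity and first-order matching simultaneously because the difference vanishes to second order along $\{u+v=1\}$. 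It also yields, as a free by-product, that $g\le h$ everywhere, i.e.\ DSSA is at least as strong as ISA wherever it applies --- a fact the paper states but does not obtain from this proof. Your explicit treatment of the corner points $(D_A,0)$ and $(0,D_B)$, where $\nabla\Psi$ blows up, is a point the paper glosses over entirely (its gradient formula~(\ref{AC1}) is also $0/0$ there); your limit computation along the curve, which reproduces the paper's $\partial_x g=1-D_B(1-v)/u=1/d_B$, closes that small gap. No errors; if anything your version is the more complete one.
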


For a proof see Appendix \ref{AC}.

\begin{figure}[htbp]
	\includegraphics[trim = 0mm 0mm 0mm 0mm, clip, width=1\textwidth]{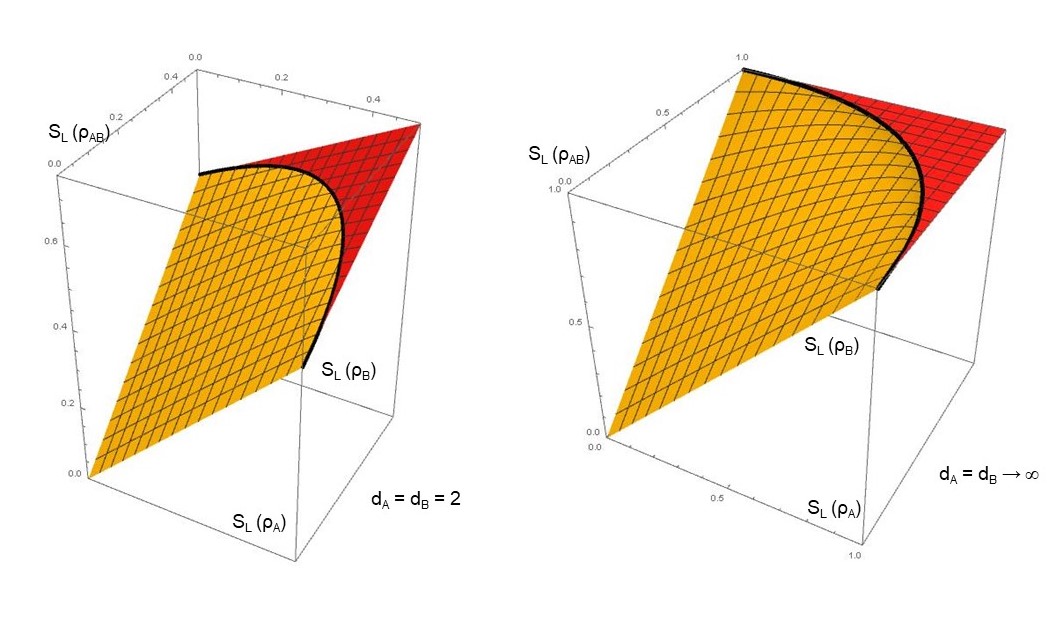} 
	\caption{\small \sl Recall the definition of the entropy vector and that for a bipartite system this vector has dimension $2^2-1=3$. For the linear entropy the set of all entropy vectors (=entropy body) is a subset of the box $\{(x,y,z)|\ 0\le x\le D_A,\ 0\le y\le D_B,\ 0\le z\le D_{AB}\}$. The image shows our two new inequalities, the inhomogeneous subadditivity and the dimensionally sharp subadditivity, which give an additional constraint. The entropy body lies below the pictured surface. The equality in DSSA is plotted in orange and equality in ISA in red. Both surfaces are transparently continued beyond where they hold and are sharp. The figure on the left is for the case $d_A=d_B=2$ and the figure on the right for the case $d_A=d_B\rightarrow\infty$.}
\end{figure}

\begin{lemma}\label{SBl1}
    For arbitrary $0\le a\le D_A$ and $0\le b\le D_B$ there exists a state such that
    \begin{align}
        S_L(\rho_A)=a\\
        S_L(\rho_B)=b\\
        S_L(\rho_{AB})=f(a,b)\\
    \end{align}
\end{lemma}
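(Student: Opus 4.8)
The plan is to exhibit, for every admissible pair $(a,b)$, an explicit bipartite state saturating whichever of the two bounds defines $f$ at that point, so that the combined upper bound is actually attained. I work throughout in the adapted product basis of Lemma \ref{DSSAl3} and abbreviate $u:=\sqrt{1-a/D_A}$ and $v:=\sqrt{1-b/D_B}$, both lying in $[0,1]$. The separating curve $x=r(y)$ between the two pieces of $f$ is exactly $u+v=1$, with the DSSA region $x\le r(y)$ corresponding to $u+v\ge1$ and the ISA region $x>r(y)$ to $u+v<1$, so the argument splits into these two cases.

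For the DSSA region I take the diagonal ansatz
\begin{equation}
\rho_{AB}=\frac{1}{d_Ad_B}\big(\mathds{1}+b_{10}\,X_1\otimes\mathds{1}+b_{01}\,\mathds{1}\otimes Y_1+b_{11}\,X_1\otimes Y_1\big),
\end{equation}
where $X_1,Y_1$ are chosen, as permitted by the extremal case of Lemma \ref{DSSAl1}, to have a single maximal eigenvalue $\sqrt{d_A-1}$ resp. $\sqrt{d_B-1}$, and I set $b_{10}=\sqrt{d_A-1}\,u$, $b_{01}=\sqrt{d_B-1}\,v$ and $b_{11}=\sqrt{(d_A-1)(d_B-1)}\,(u+v-1)$, all remaining correlation components being zero. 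This $b_{11}$ is precisely the value saturating Lemma \ref{DSSAl2}, and it is non-negative exactly because $u+v\ge1$. Since the ansatz is diagonal in the joint eigenbasis, positivity reduces to four families of eigenvalues; a short computation should give $d_A(1-v)$, $d_B(1-u)$, $0$, and one further value whose non-negativity on $u+v\ge1$ follows from a one-line estimate. Tracing out then confirms $S_L(\rho_A)=a$ and $S_L(\rho_B)=b$, and because the state saturates the chain Lemma \ref{DSSAl2} $\to$ Lemma \ref{DSSAl4} with $\|C_{AB}\|_2=\abs{b_{11}}$, Eq. (\ref{AA1t24}) forces $S_L(\rho_{AB})=g(a,b)=f(a,b)$.

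For the ISA region the key observation is that equality in Lemma \ref{ISl1} forces the entire correlation tensor to vanish, $\|C_{AB}\|_2=0$; through Eq. (\ref{AA1t24}) this pins the state to $\rho_{AB}=\rho_A\otimes\tfrac{\mathds{1}_B}{d_B}+\tfrac{\mathds{1}_A}{d_A}\otimes\rho_B-\tfrac{\mathds{1}_A\otimes\mathds{1}_B}{d_Ad_B}$, whose joint linear entropy is automatically $h(a,b)$. The only remaining freedom is in the spectra of $\rho_A$ and $\rho_B$, and positivity of $\rho_{AB}$ reduces to the single condition $d_A\,p_{\min}+d_B\,q_{\min}\ge1$ on the smallest local eigenvalues. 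I would choose the spectra maximising these minima under the fixed-purity constraint, namely one large eigenvalue with the remaining $d_A-1$ (resp. $d_B-1$) equal, which yields $d_A\,p_{\min}=1-u$ and $d_B\,q_{\min}=1-v$; the positivity condition then collapses to $u+v\le1$, precisely the ISA region. On the shared boundary $u+v=1$ both constructions coincide ($b_{11}=0$, $\|C_{AB}\|_2=0$) and $g=h$, consistent with $f$ being continuously differentiable by Proposition \ref{SBp1}.

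The main obstacle is the ISA case: recognising that saturation forces a vanishing correlation tensor rather than a product state, and then identifying the extremal local spectra whose positivity threshold matches the region boundary exactly. Verifying that the maximal-$p_{\min}$ distribution is indeed optimal, and that its threshold is exactly $u+v\le1$, is where the real content lies; the DSSA construction, by contrast, is essentially a direct unwinding of the equality conditions already established in Lemmas \ref{DSSAl1}--\ref{DSSAl4}.
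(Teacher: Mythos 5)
Your construction is correct and, perhaps without realising it, produces exactly the states used in the paper's proof (Appendix \ref{AD}): expanding your diagonal Bloch ansatz with $X_1\propto d_A|0\rangle\langle0|-\mathds{1}$, $Y_1\propto d_B|0\rangle\langle0|-\mathds{1}$ gives precisely $\tilde\rho_{AB}(\alpha,\beta)=\alpha\,|0\rangle\langle0|\otimes\tfrac{\mathds{1}}{d_B}+\beta\,\tfrac{\mathds{1}}{d_A}\otimes|0\rangle\langle0|+(1-\alpha-\beta)|00\rangle\langle00|$ with $\alpha=1-v$, $\beta=1-u$, and your zero-correlation ISA states coincide with the paper's mixtures of the $\Gamma$-curve states with the maximally mixed state. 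What differs is the verification for the ISA region: the paper first proves a lemma that mixing any state with $\tfrac{\mathds{1}}{d_{AB}}$ traces a straight line in entropy space (affine in $\alpha^2$), exhibits a one-parameter family whose entropy vectors lie on the curve $\Gamma=ISA\cap DSSA$, and then invokes linearity of the ISA plane to sweep out the region; you instead observe that saturation of ISA is equivalent to $\|C_{AB}\|_2=0$, reduce positivity of the resulting operator to $d_A p_{\min}+d_B q_{\min}\ge1$, and show the purity-constrained spectra maximising the minimal eigenvalues (one large, $d-1$ equal small eigenvalues) turn this into exactly $u+v\le1$. Your route has the advantage of making explicit why the construction covers precisely the ISA region $x>r(y)$ and fails beyond it — a coverage point the paper leaves to geometric intuition (that the segments from $\Gamma$ to the corner $(D_A,D_B)$ fill the region) — at the cost of the small optimisation argument over local spectra, which you correctly identify as the real content and which does check out.
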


For the proof see Appendix \ref{AD}.

This means that every point of the restricting surface is an entropy vector, which implies the following theorem.

\begin{theorem}\label{SBt1}
	The inequality defined by
	\begin{align}
	S_L(\rho_{AB})\le f(S_L(\rho_{A}),S_L(\rho_{B}))
	\end{align}
	holds for a bipartite quantum system and is tight.\\
\end{theorem}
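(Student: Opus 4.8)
The plan is to separate the statement into two parts — that $f$ gives a valid upper bound, and that this bound is attained at every point of its domain — each of which reduces to a result already in hand.

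For \emph{validity} I would argue by the same case distinction used to define $f$. When $x \le r(y)$, the assumption (\ref{DSSAt12}) of Theorem \ref{DSSAt1} is exactly the condition $x \le r(y)$, so dimensionally sharp subadditivity applies and gives $S_L(\rho_{AB}) \le g(x,y) = f(x,y)$. When $x > r(y)$, inhomogeneous subadditivity (Lemma \ref{ISl1}) holds with no restriction whatsoever and gives $S_L(\rho_{AB}) \le h(x,y) = f(x,y)$. These two cases cover the whole domain $0 \le x \le D_A$, $0 \le y \le D_B$, so $S_L(\rho_{AB}) \le f(S_L(\rho_A), S_L(\rho_B))$ holds for every bipartite state.

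For \emph{tightness} I would invoke Lemma \ref{SBl1} directly: for any prescribed pair $(a,b)$ in the domain it exhibits a state whose marginal linear entropies are precisely $a$ and $b$ and whose joint linear entropy equals $f(a,b)$. Hence the bound is saturated everywhere, which together with validity shows that $f(x,y)$ is exactly the supremum of $S_L(\rho_{AB})$ over all states with $S_L(\rho_A) = x$ and $S_L(\rho_B) = y$; no pointwise smaller bound can therefore hold. As a consistency check, in the overlap region $x \le r(y)$ both $g$ and $h$ are valid upper bounds and $g$ is the one attained, which forces $g \le h$ there.

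The assembly above is short because all the real work has been exported to the lemmas and to the appendices. The genuinely delicate ingredient — and the main obstacle — is the tightness statement Lemma \ref{SBl1}, which requires an explicit family of extremal states tracing out the surface $z = f(x,y)$, including along the seam $x = r(y)$ where the construction switches between the $g$-regime and the $h$-regime. That the two patches join into a single sharp boundary rather than leaving a gap is precisely what Proposition \ref{SBp1} guarantees through the continuous differentiability of $f$ across $x = r(y)$. Granting those two appendix results, the theorem follows from the two-case analysis sketched here.
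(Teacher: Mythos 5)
Your proposal is correct and follows essentially the same route as the paper: the upper bound is obtained by applying Theorem \ref{DSSAt1} in the region $x\le r(y)$ (where its hypothesis (\ref{DSSAt12}) is exactly the case condition) and Lemma \ref{ISl1} in the complementary region, while tightness is delegated entirely to the explicit extremal states of Lemma \ref{SBl1}. The paper likewise treats the theorem as an immediate consequence of those two ingredients, so there is nothing to add.
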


\section{R\'{e}nyi 2-entropy and Purity}\label{Re}

The R\'enyi 2-entropy, sometimes also called the collision entropy, is strongly related to the linear entropy by Equation (\ref{Tsallistorenyi}). With this, the above inequalities can be translated into inequalities for the R\'enyi 2-entropy. This is not in contrast to the results of \cite{Linden2013}, since the inequality will not be homogeneous.
\par
The linear entropy and the R\'{e}nyi 2-entropy depend on each other by
\begin{align}
S^2(\rho)&:=-\log_2(\Tr(\rho^2))=-\log_2(1-S_L(\rho)) \label{461}\\
S_L(\rho)&:=1-\Tr(\rho^2)=1-2^{-S^2(\rho)}. \label{462}
\end{align}

Therefore Theorem \ref{SBt1}. can be reformulated to an inequality for the R\'enyi 2-entropy.

\begin{prop}\label{RPp1}
	Define the following function
	\begin{align}
	f_R(x,y):=\begin{cases}
	g_R(x,y)\qquad x\le r_R(y)\\
	h_R(x,y)\qquad x> r_R(y)
	\end{cases}\label{RPp11}
	\end{align}
	where
	
    \begin{align}
    h_R(x,y):= & -\log_2\big(\frac{2^{-x}}{d_B} +\frac{2^{-y}}{d_A}-\frac{1}{d_A d_B}\big)\\
    g_R(x,y):= & -\log_2\bigg(2^{-x}+2^{-y}-1+ 2\frac{d_A-1}{d_A}\frac{d_B-1}{d_B}\big(1-\sqrt{\frac{d_A\ 2^{-x}-1}{d_A-1}}\big)\big(1-\sqrt{\frac{d_B\ 2^{-y}-1}{d_B-1}}\big)\bigg)\\
	r_R(y):= & -\log_2\bigg(\frac{d_A-1}{d_A}\bigg(\frac{d_B}{d_B-1}\  2^{-y}-\frac{1}{d_B-1}-2\sqrt{\frac{d_B\ 2^{-y}-1}{d_B-1}}\bigg)-1\bigg)
	\end{align}
	
	Then $f_R(x,y)$ is a continuously differentiable function and the following inequality 
	\begin{align}
	S^2(\rho_{AB})\le f_R(S^2(\rho_{A}),S^2(\rho_{B}))
	\end{align}
	holds for a bipartite quantum system.
\end{prop}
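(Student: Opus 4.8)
The plan is to transfer the tight linear-entropy bound of Theorem \ref{SBt1} through the monotone change of variables that relates the two entropies. Recall from Eqs. (\ref{461})--(\ref{462}) that $S^2 = \phi(S_L)$ with $\phi(t) := -\log_2(1-t)$, whose inverse is $\psi(s) := 1 - 2^{-s}$; both are strictly increasing and $C^\infty$ on the relevant intervals ($\phi$ on $[0,1)$, $\psi$ on $[0,\infty)$). Since every linear entropy of a finite-dimensional state lies in $[0,1)$, these two maps are honest diffeomorphisms between the linear-entropy box and the R\'enyi-2 coordinate box, and this is the only structural fact I will need.

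First I would simply apply $\phi$ to both sides of the tight inequality $S_L(\rho_{AB}) \le f(S_L(\rho_A), S_L(\rho_B))$ supplied by Theorem \ref{SBt1}. Because $\phi$ is increasing, the direction of the inequality is preserved:
\begin{align}
S^2(\rho_{AB}) = \phi\big(S_L(\rho_{AB})\big) \le \phi\big(f(S_L(\rho_A), S_L(\rho_B))\big).
\end{align}
Writing $x = S^2(\rho_A)$, $y = S^2(\rho_B)$ so that $S_L(\rho_A) = \psi(x)$ and $S_L(\rho_B) = \psi(y)$, the right-hand side is exactly $\phi\big(f(\psi(x),\psi(y))\big)$. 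Thus the natural object is $f_R := \phi \circ f \circ (\psi \times \psi)$, and with this definition the claimed inequality $S^2(\rho_{AB}) \le f_R(S^2(\rho_A), S^2(\rho_B))$ holds \emph{by construction}, inheriting its validity from Theorem \ref{SBt1} with no further analytic input.

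The second step is bookkeeping: I would verify that this composition reproduces the explicit piecewise formula (\ref{RPp11}). Substituting $\psi(x) = 1-2^{-x}$ and $\psi(y) = 1-2^{-y}$ into $h$, $g$, $r$ and simplifying, the ISA branch gives $1 - h(\psi(x),\psi(y)) = \tfrac{2^{-x}}{d_B} + \tfrac{2^{-y}}{d_A} - \tfrac{1}{d_A d_B}$, so that $h_R = -\log_2$ of this expression; for the DSSA branch the inner term $1 - \psi(x)/D_A$ collapses to $\tfrac{d_A 2^{-x}-1}{d_A-1}$ (and likewise in $B$), reproducing $g_R$. The branch condition $x \le r_R(y)$ is precisely the image under the increasing map $\phi$ of the linear-entropy condition $\psi(x) \le r(\psi(y))$, i.e.\ $r_R = \phi \circ r \circ \psi$; here one must check that $r(\psi(y))$ stays below $1$ so that $\phi$ may be applied, which follows from $0\le \psi(y) \le D_B$.

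Finally, continuous differentiability follows from the chain rule. The function $f$ is $C^1$ on its box by Proposition \ref{SBp1}; $\psi \times \psi$ is a $C^\infty$ diffeomorphism; and, since $f < 1$ on the physical box (Lemma \ref{SBl1} realises $f(a,b)$ as a genuine $S_L(\rho_{AB}) \le 1-\tfrac{1}{d_A d_B} < 1$), $\phi$ is $C^\infty$ on the range of $f$. Hence $f_R = \phi \circ f \circ (\psi \times \psi)$ is $C^1$, and in particular the two branches glue smoothly because the monotone reparametrisation carries the gluing curve $\psi(x) = r(\psi(y))$ of $f$ exactly onto $x = r_R(y)$. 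The one point to watch is exactly this last step, that the diffeomorphism respects the branch boundary so no new corner is introduced, together with confirming that all arguments of $\phi$ and of the square roots stay in their domains; the algebra that recovers the stated closed forms is then routine.
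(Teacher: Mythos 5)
Your proposal is correct and follows essentially the same route as the paper's Appendix~\ref{AE}: transfer Theorem~\ref{SBt1} through the monotone diffeomorphism $e(x)=1-2^{-x}$ and observe that $f_R=e^{-1}\circ f\circ(e\times e)$ inherits continuous differentiability from Proposition~\ref{SBp1}. Your version is if anything slightly more careful, since you explicitly check that $f$ stays below $1$ on the physical box and that the branch boundary is carried onto $x=r_R(y)$, points the paper leaves implicit.
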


For the proof see Appendix \ref{AE}.

These inequalities may look somewhat unhandy, but it can be more useful for certain applications, since the R\'enyi entropy is  additive for product states.

\begin{figure}[htbp]
	\includegraphics[trim = 0mm 40mm 0mm 0mm, clip, width=0.8\textwidth]{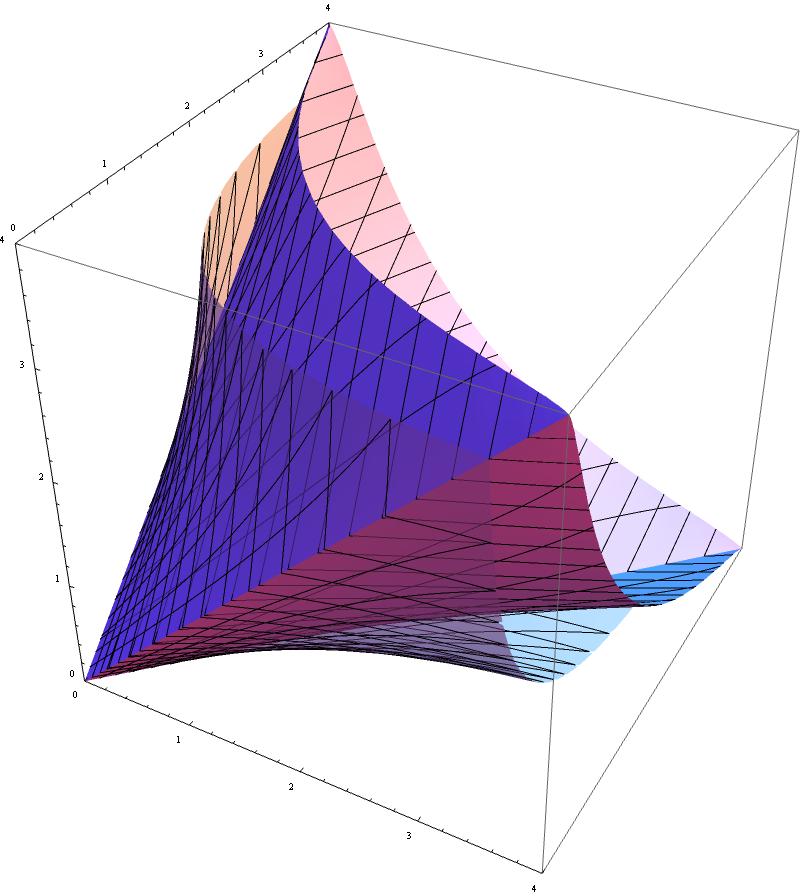} 
	\caption{\small \sl  Assume we have a tripartite system in a \textbf{pure} state. The entropy vector then has $2^3-1=7$ entries, which is obviously hard to plot! But we know that $S(ABC)=0$, $S(A)=S(BC)$, $S(B)=S(AC)$ and $S(C)=S(AB)$, which leaves us with only $3$ different entries.\newline
	Let now the dimensions be $d_A=d_B=d_C=16$.\newline
	The figure shows the entropy cone (inner opaque surface) resulting from the 3 different inequalities, $S(A)\le f_R\big(S(B),S(C)\big)$, $S(B)\le f_R\big(S(A),S(C)\big)$ and $S(C)\le f_R\big(S(A),S(B)\big)$, where $f_R(x,y)$ is the same as in Eq. (\ref{RPp11}).\newline
	The outer transparent cone is obtained in an analogous way from the subadditivity of the linear entropy.}
\end{figure}

The purity of a quantum state $\rho$ is defined as $\gamma(\rho)=\Tr(\rho^2)$ and therefore we can express the inequalities for the linear entropy also in terms of the purity.

\begin{cor}
    Define the function
    \begin{align}
	f_P(x,y):=\begin{cases}
	\frac{x}{d_B}+\frac{y}{d_A}-\frac{1}{d_Ad_B}\qquad & x< 1-\frac{d_A-1}{d_A}\bigg(\frac{1-d_B y}{d_B-1}+2\sqrt{\frac{d_B y-1}{d_B-1}}\bigg)\\
	x+y-1+2\frac{d_A-1}{d_A}\frac{d_B-1}{d_B}\bigg(1-\sqrt{\frac{d_Ax}{d_A-1}}\bigg)\bigg(1-\sqrt{\frac{d_B y}{d_B-1}}\bigg)\qquad & x\ge 1-\frac{d_A-1}{d_A}\bigg(\frac{1-d_B\gamma(B)}{d_B-1}+2\sqrt{\frac{d_B\gamma(B)-1}{d_B-1}}\bigg)
	\end{cases}
	\end{align}
	
	Then $f_P(x,y)$ is a continuously differentiable function and the following inequality 
	\begin{align}
	\gamma(\rho_{AB})\ge f_P(\gamma(\rho_{A}),\gamma(\rho_{B}))
	\end{align}
	holds for the purities of a bipartite quantum system.
\end{cor}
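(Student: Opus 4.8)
The plan is to obtain this result as a direct translation of Theorem \ref{SBt1} using the affine relationship between the purity and the linear entropy. Since $\gamma(\rho)=\Tr(\rho^2)=1-S_L(\rho)$ by definition, the substitution $s\mapsto 1-s$ converts any statement about linear entropies into one about purities. The crucial point is that this substitution reverses inequalities: because $\gamma=1-S_L$ is strictly decreasing in $S_L$, an upper bound on $S_L(\rho_{AB})$ becomes a lower bound on $\gamma(\rho_{AB})$, which is precisely why the corollary is phrased with $\ge$ rather than $\le$.

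First I would start from Theorem \ref{SBt1}, namely $S_L(\rho_{AB})\le f(S_L(\rho_A),S_L(\rho_B))$. Substituting $S_L(\rho_A)=1-\gamma(\rho_A)$, $S_L(\rho_B)=1-\gamma(\rho_B)$ and $S_L(\rho_{AB})=1-\gamma(\rho_{AB})$ yields $1-\gamma(\rho_{AB})\le f\big(1-\gamma(\rho_A),1-\gamma(\rho_B)\big)$, which rearranges to $\gamma(\rho_{AB})\ge 1-f\big(1-\gamma(\rho_A),1-\gamma(\rho_B)\big)$. This motivates the definition $f_P(x,y):=1-f(1-x,1-y)$, after which the stated inequality is immediate, and all that remains is to check that $f_P$ coincides with the explicit piecewise formula given in the corollary.

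The remaining work is purely algebraic bookkeeping. For the linear branch $h$, a short computation collapses the constant terms via $\tfrac{1}{d_A}+\tfrac{1}{d_B}+D_AD_B=1+\tfrac{1}{d_Ad_B}$, so that $1-h(1-x,1-y)=\tfrac{x}{d_B}+\tfrac{y}{d_A}-\tfrac{1}{d_Ad_B}$, matching the first case. For the $g$ branch the inner square roots transform through the identity $1-\tfrac{1-x}{D_A}=\tfrac{d_Ax-1}{d_A-1}$ (and symmetrically in $y$), so that $1-g(1-x,1-y)$ reproduces the second case, with the factor $D_AD_B$ rewritten as $\tfrac{d_A-1}{d_A}\tfrac{d_B-1}{d_B}$. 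The branch-selecting condition $x\le r(y)$ of $f$ is translated by the same reflection, and since $s\mapsto 1-s$ swaps the direction of the defining inequality, it becomes exactly the threshold on $x$ stated in the corollary.

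Finally, continuous differentiability is inherited for free. Proposition \ref{SBp1} guarantees that $f$ is continuously differentiable, and $(x,y)\mapsto(1-x,1-y)$ is a smooth affine diffeomorphism, so the composition $f_P=1-f\circ(\text{reflection})$ is again continuously differentiable, with the two branches and their first derivatives gluing along the transformed threshold. The only point demanding care—and the mildest obstacle—is keeping the branch conditions consistent under the reflection, since that map interchanges which side of the threshold selects $g$ versus $h$; everything else is routine substitution.
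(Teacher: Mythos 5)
Your proof is correct and takes the same route as the paper, whose entire proof is the one-liner ``follows immediately from Theorem \ref{SBt1}'' via the substitution $\gamma=1-S_L$; your explicit definition $f_P(x,y)=1-f(1-x,1-y)$ together with the reversal of the inequality and of the branch condition is exactly what that remark leaves implicit. One small observation: your (correct) identity $1-\frac{1-x}{D_A}=\frac{d_Ax-1}{d_A-1}$ implies the radicands in the second branch should read $\frac{d_Ax-1}{d_A-1}$ and $\frac{d_By-1}{d_B-1}$ (consistent with $g_R$ in Proposition \ref{RPp1}), so the printed corollary contains a typo rather than your derivation an error.
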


\begin{proof}
This follows immediately from Theorem \ref{SBt1}.
\end{proof}

\section{Inverted inequality}\label{II}

It is well known how one can use purification to obtain the lower bound to a bipartite entropy (Araki-Lieb inequality) from its upper bound given by subadditivity. We therefore call this lower bound the inverted inequality of subadditivity.\\
By using the same argument, a lower bound for the linear entropy following from Theorem \ref{SBt1}. can be constructed. By purification, for every state $\rho$ there exists a pure state $|\phi\rangle\langle\phi|$ with $\rho$ as the reduced density operator. Now assume that $\rho=\rho_{AB}$ is a bipartite state and therefore $|\phi\rangle\langle\phi|$ is a tripartite state. Since it is pure, any two partitions have the same entropy and we can use this property to transform every upper bound inequality in a lower bound inequality. \\
Proposition \ref{SBp1}. states, that the function $f(S_L(\rho_{A}),S_L(\rho_{B})$ is continuously differentiable. Therefore, using the implicit function theorem and the inverse function theorem (see Appendix \ref{AF}), the existence of an inverted inequality can be shown.\\

\begin{lemma}\label{IIl1}
	For $d_A,\ d_B < \infty$ the function $f(x,y)$ defined in Proposition \ref{SBp1},
	\begin{align}
	h(x,y)=&\frac{1}{d_{B}}x+\frac{1}{d_{A}}y+D_AD_B\\
	g(x,y)=&x+y-2D_AD_B(1-\sqrt{1-\frac{x}{D_A}})(1-\sqrt{1-\frac{y}{D_B}}\big)\\
	r(y)=&D_A\big(\frac{1}{D_B}y-1+2\sqrt{1-\frac{y}{D_B}}\big)\\
	f(x,y)=&\begin{cases}
	g(x,y)\qquad x\le r(y)\\
	h(x,y)\qquad x> r(y)
	\end{cases}
	\end{align}
	is strictly monotonically increasing in both variables.
\end{lemma}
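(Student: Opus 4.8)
The plan is to reduce strict monotonicity to positivity of the partial derivatives. By Proposition \ref{SBp1} the function $f$ is continuously differentiable on $[0,D_A]\times[0,D_B]$, so for each fixed $y$ the map $x\mapsto f(x,y)$ is $C^1$; by the one-variable mean value theorem it is strictly increasing on the whole closed interval as soon as $\partial_x f>0$ holds on the open interval $(0,D_A)$. In particular I never have to control the derivative at the endpoints $x=0,D_A$, where the square-root singularities of $g$ live. The same reasoning applies to $y$. Hence it suffices to prove $\partial_x f>0$ and $\partial_y f>0$ on the interior of the domain.

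On the region $x>r(y)$, where $f=h$, this is immediate, since $\partial_x h=\tfrac{1}{d_B}>0$ and $\partial_y h=\tfrac{1}{d_A}>0$.

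On the region $x\le r(y)$, where $f=g$, I would pass to the substitution $p:=\sqrt{1-x/D_A}$ and $q:=\sqrt{1-y/D_B}$, both lying in $[0,1]$. Writing $r(y)=D_A\,q(2-q)$ and $x=D_A(1-p^2)$, the restriction $x\le r(y)$ becomes $1-p^2\le 2q-q^2$, i.e.\ $(1-q)^2\le p^2$, i.e.\ the symmetric condition $p+q\ge1$. Differentiating $g$ then gives $\partial_x g=1-D_B(1-q)/p$. On the $g$-region $p\ge1-q$, and since $D_B=1-\tfrac{1}{d_B}<1$ we obtain $D_B(1-q)<1-q\le p$ whenever $1-q>0$, while $q=1$ gives $\partial_x g=1$; as $p>0$ on the interior, $\partial_x g>0$ throughout. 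By the symmetry of $g$ and of the region $p+q\ge1$ under the interchange $(x,D_A,d_A)\leftrightarrow(y,D_B,d_B)$, the identical argument yields $\partial_y g=1-D_A(1-p)/q>0$.

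Finally, because $f$ is $C^1$, the derivative $\partial_x f$ is continuous across the gluing curve $x=r(y)$ and there equals the common (positive) one-sided value coming from $g$ and $h$; hence $\partial_x f>0$ on the whole interior, and likewise $\partial_y f>0$. Strict monotonicity of $f$ in each variable follows. The only genuinely delicate step is the translation of the restriction into the variables $p,q$: once it collapses to $p+q\ge1$, the finiteness of the dimensions --- encoded in $D_A,D_B<1$ --- is exactly what keeps the derivatives strictly positive, which is also why the hypothesis $d_A,d_B<\infty$ is required.
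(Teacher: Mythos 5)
Your proof is correct and follows essentially the same route as the paper's: both reduce strict monotonicity to positivity of the partial derivatives, using the explicit gradient of $g$ and the fact that on the region $x\le r(y)$ it never drops below its boundary value $\big(\tfrac{1}{d_B},\tfrac{1}{d_A}\big)$. Your substitution $p=\sqrt{1-x/D_A}$, $q=\sqrt{1-y/D_B}$, which turns the restriction into $p+q\ge1$ and yields the explicit bound $\partial_x g=1-D_B(1-q)/p\ge 1-D_B=\tfrac{1}{d_B}>0$, is in fact a more rigorous rendering of the paper's brief assertion that the gradient decreases from $(1,1)$ at the origin yet remains positive on the boundary $\Omega$.
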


\begin{proof}
	For $h(x,y)$ this holds, since it is linear in both variables.\\
	The gradient of $g(x,y)$, (\ref{AC1}) in the proof of Proposition \ref{SBp1}., is $\nabla g(x,y)|_{(0,0)}=\begin{pmatrix} 1\\ 1 \end{pmatrix}$ at the origin and decreases for growing $x,y$. But, since on the boundary $\Omega$ the Gradient $\nabla g(x,y)|_{\Omega}=\begin{pmatrix} \frac{1}{d_B}\\ \frac{1}{d_A} \end{pmatrix}$ (\ref{AC2}) is still positive, $g(x,y)$ is a strictly monotonically increasing function for $x\le r(y)$.
\end{proof}

\begin{prop}\label{IIp1}
	The inequality from Theorem \ref{SBt1}. can be inverted to two new inequalities.
\end{prop}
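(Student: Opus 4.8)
The plan is to reproduce, for the bound of Theorem~\ref{SBt1}, the purification argument that turns ordinary subadditivity into the Araki--Lieb inequality. First I would fix a bipartite state $\rho_{AB}$ and choose a purification $|\phi\rangle_{ABC}$ with $d_C=\rank(\rho_{AB})\le d_Ad_B$, so that $\rho_{AB}=\Tr_C(|\phi\rangle\langle\phi|)$. Since $|\phi\rangle$ is pure, complementary marginals share their nonzero spectrum and hence their purity, which gives the three identities
\begin{align}
S_L(\rho_{BC})=S_L(\rho_A),\qquad S_L(\rho_{AC})=S_L(\rho_B),\qquad S_L(\rho_{AB})=S_L(\rho_C).
\end{align}

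Next I would apply the upper bound of Theorem~\ref{SBt1} not to the cut $A|B$ but to the cut $B|C$ of the marginal $\rho_{BC}$, now with dimensions $d_B,d_C$. This reads $S_L(\rho_{BC})\le f\big(S_L(\rho_B),S_L(\rho_C)\big)$, and inserting the purity identities turns it into
\begin{align}
S_L(\rho_A)\le f\big(S_L(\rho_B),S_L(\rho_{AB})\big).
\end{align}
Performing the same step on the cut $A|C$, with $S_L(\rho_{AC})=S_L(\rho_B)$ and $S_L(\rho_C)=S_L(\rho_{AB})$, produces the companion relation $S_L(\rho_B)\le f\big(S_L(\rho_A),S_L(\rho_{AB})\big)$. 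In both constraints the quantity $S_L(\rho_{AB})$ now sits inside $f$ as its second argument, so it only remains to solve for it.

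To invert, I would use that $f$ is continuously differentiable (Proposition~\ref{SBp1}) and, by Lemma~\ref{IIl1}, strictly monotonically increasing in each argument. Fixing the first argument $v$, the map $\psi_v\colon z\mapsto f(v,z)$ is then a strictly increasing $C^1$ bijection onto its image, so the inverse function theorem (Appendix~\ref{AF}) yields a strictly increasing $C^1$ inverse $\psi_v^{-1}$. Monotonicity lets me flip each inequality: $S_L(\rho_A)\le f\big(S_L(\rho_B),S_L(\rho_{AB})\big)$ becomes the lower bound $S_L(\rho_{AB})\ge \psi_{S_L(\rho_B)}^{-1}\big(S_L(\rho_A)\big)$, and the companion relation gives a second, $A\leftrightarrow B$ reflected, lower bound; the two are genuinely distinct because the first slot carries $d_B$ in one case and $d_A$ in the other. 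These are the two inverted inequalities asserted by the proposition.

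I expect the main obstacle to be establishing global, rather than merely local, invertibility while respecting the piecewise definition of $f$. Because $f$ is glued from $g$ and $h$ across the curve $x=r(y)$, I must verify that the partial derivative of $f$ with respect to its second argument stays strictly positive over the whole admissible domain, so that $\psi_v$ is a single-valued bijection and not only locally invertible; here strict monotonicity from Lemma~\ref{IIl1} supplies injectivity, while the $C^1$-smoothness across the seam from Proposition~\ref{SBp1} guarantees that the inverse is again $C^1$. A secondary, more technical point is the bookkeeping of the purifying dimension: one takes $d_C\le d_Ad_B$ and checks that the domain restriction~(\ref{DSSAt12}) of Theorem~\ref{SBt1} is met for the cuts $B|C$ and $A|C$, after which the $A\leftrightarrow B$ symmetry delivers exactly the two distinct bounds.
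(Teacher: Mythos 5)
Your proposal is correct and follows essentially the same route as the paper: purify $\rho_{AB}$, use the equality of complementary marginals' linear entropies, and apply Theorem \ref{SBt1} to the two cuts involving the purifying system to obtain $S_L(\rho_B)\le f\big(S_L(\rho_A),S_L(\rho_{AB})\big)$ and $S_L(\rho_A)\le f\big(S_L(\rho_B),S_L(\rho_{AB})\big)$, which are exactly the paper's two new inequalities. Your additional monotonicity/inverse-function step is what the paper defers to Proposition \ref{IIp2}, and is consistent with it.
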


\begin{proof}
	Assume we have a system in state $\rho_{AB} \in \mathcal{H}_A\otimes\mathcal{H}_B$. If we take a copy of this Hilbert space (denoted by $\mathcal{H}_R$), there exists a pure state  $|\phi\rangle\langle\phi|\in(\mathcal{H}_A\otimes\mathcal{H}_B)^2=\mathcal{H}_A\otimes\mathcal{H}_B\otimes\mathcal{H}_R$ such that $\Tr_R(|\phi\rangle\langle\phi|)=\rho_{AB}$. Denote by $\rho_R=\Tr_{AB}(|\phi\rangle\langle\phi|)$.\\
	
	For a pure bipartite state the entropies of the reduced states from the two subsystems are equal, $S_L(\rho_{AR})=S_L(\rho_{B})$, $S_L(\rho_{BR})=S_L(\rho_{A})$  and $S_L(\rho_{R})=S_L(\rho_{AB})$.\\
	Every inequality for the linear entropy can therefore be transformed into an inverted inequality.\\
	
	Let $f(x,y)$ denote the continuously differentiable function from Proposition \ref{SBp1}.
	For the system $AR$ in the state $\rho_{AR}\in\mathcal{H}_A\otimes\mathcal{H}_R$, Theorem \ref{SBt1}. states
	\begin{equation}
	S_L(\rho_{AR})\le f(S_L(\rho_{A}),S_L(\rho_{R}))
	\end{equation}
	and, by the previous discussion
	\begin{equation}
	S_L(\rho_{B})\le f(S_L(\rho_{A}),S_L(\rho_{AB})) \label{IIp11}
	\end{equation}
	For the other inequality just take the system $BR$ in the state $\rho_{BR}\in\mathcal{H}_B\otimes\mathcal{H}_R$ and, by the same argument
	\begin{equation}
	S_L(\rho_{A})\le f(S_L(\rho_{B}),S_L(\rho_{AB})) \label{IIp12}
	\end{equation}
\end{proof}

\begin{prop}\label{IIp2}
	There exists continuously differentiable functions $G_1$ and $G_2$, such that the inequalities (\ref{IIp11}) and (\ref{IIp12}) from the proof of the previous proposition can be written in the form
	\begin{align}
	S_L(\rho_{AB})\ge&\ G_1(S_L(\rho_{A}),S_L(\rho_{B})) \label{IIp21}\\
	S_L(\rho_{AB})\ge&\ G_2(S_L(\rho_{A}),S_L(\rho_{B})) \label{IIp22}
	\end{align}
\end{prop}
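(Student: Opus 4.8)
The plan is to read the two inequalities (\ref{IIp11}) and (\ref{IIp12}) as implicit constraints on $z := S_L(\rho_{AB})$ and to solve each of them for $z$. Writing $x := S_L(\rho_A)$ and $y := S_L(\rho_B)$, inequality (\ref{IIp11}) reads $y \le f(x,z)$ and (\ref{IIp12}) reads $x \le f(y,z)$, where $f$ is the function of Proposition \ref{SBp1}. The whole point is that $f$ is strictly monotonically increasing in its second argument, so each of these inequalities pins down $z$ up to a one-sided bound; extracting that bound as an explicit, continuously differentiable function is exactly the task handled by the implicit/inverse function theorem.

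First I would record the two properties of $f$ that drive everything: by Proposition \ref{SBp1}, $f$ is continuously differentiable, and by Lemma \ref{IIl1}, $f$ is strictly monotonically increasing in both arguments. In particular the partial derivative of $f$ with respect to its second slot is strictly positive wherever it exists, and it is this non-vanishing that is the hypothesis required to invert in the second variable.

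Next I would set $F_1(x,y,z) := f(x,z) - y$ and observe that $\partial F_1 / \partial z = \partial_2 f(x,z) > 0$. By the implicit function theorem, the level set $F_1 = 0$ is locally the graph of a continuously differentiable function $z = G_1(x,y)$ characterised by $f(x, G_1(x,y)) = y$; since $z \mapsto f(x,z)$ is globally strictly increasing, this solution is unique, so $G_1$ is well defined on the relevant domain. Strict monotonicity then flips the inequality direction: $y \le f(x,z)$ holds if and only if $z \ge G_1(x,y)$, which is precisely (\ref{IIp21}). Repeating the argument with $F_2(x,y,z) := f(y,z) - x$ produces the continuously differentiable $G_2$ and the bound (\ref{IIp22}).

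The main obstacle I anticipate is not the inversion itself but the seam of the piecewise definition of $f$: one must be sure that $\partial_2 f$ stays strictly positive and that $f$ remains $C^1$ across the boundary $x = r(y)$ where the definition switches between $g$ and $h$. Both facts are already supplied, the $C^1$ gluing by Proposition \ref{SBp1} and the strict positivity of the gradient components (including across $\Omega$) by the computation in Lemma \ref{IIl1}, so the implicit function theorem applies uniformly and $G_1, G_2$ inherit continuous differentiability on the whole domain. A secondary, purely bookkeeping point is to restrict attention to the range of $z \mapsto f(x,z)$ so that $G_1$, and likewise $G_2$, is defined exactly where the purification argument places the admissible entropy values.
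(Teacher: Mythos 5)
Your proposal is correct and follows essentially the same route as the paper: both define $F(x,y,z)=f(x,z)-y$, invoke the implicit function theorem using the strictly positive partial derivatives from Lemma \ref{IIl1} and the $C^1$ gluing from Proposition \ref{SBp1}, and obtain $G_1$, $G_2$ as the (unique, by monotonicity) solutions in $z$. Your explicit remark that strict monotonicity in the second slot is what converts $y\le f(x,z)$ into $z\ge G_1(x,y)$ is a small but welcome addition that the paper leaves implicit.
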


\begin{proof}
	For equality we can rewrite (\ref{IIp11}) in the form
	\begin{equation}
	F(x,y,z):=f(x,z)-y=0
	\end{equation} 
	Since this is linear in $y$ and by Lemma \ref{IIl1}, we know that all the partial derivatives are always positive. Therefore, the implicit function theorem can be used and $U$ can be chosen to be the whole domain of $f(x,y)$. Setting $\mathbf{v}=(x,y)$ and $\mathbf{w}=z$ the theorem states, that for every $(x_0,y_0)$ there exists a continuously differentiable function $G_1(x,y)$, such that $G_1(x_0,y_0)=z$ and $F\big(x,y,G(x,y)\big)=0$.
	Therefore we know that
	\begin{equation}
	f(x,G_1(x,y))=y \label{IIp2p1}
	\end{equation}
	Now fix $x$ to $x_0$. Define
	\begin{equation}
	f_{x_0}(y):= f(x,y)|_{x=x_0}
	\end{equation}
	and note that $\frac{\partial f(x,y)}{\partial y}\ge \frac{1}{d_Ad_R}>0$. By the inverse function theorem we can invert $f_{x_0}(y)$ and therefore rewrite Eq. (\ref{IIp2p1}) as
	\begin{equation}
	G_1(x_0,y))=f_{x_0}^{-1}(y)
	\end{equation}
	Since this holds for every $x_0$, $G_1(x,y)$ is the continuously differentiable function of Eq. (\ref{IIp21}).\\
	By an analogous argument we show the same for $G_2(x,y)$ of Eq. (\ref{IIp22}).\\
\end{proof}

Note, that this is not the same as to invert $f(x,y):\mathds{R}^2\rightarrow\mathds{R}$, which is not possible.\\

So we have proven the existence and some properties of the functions $G_1$ and $G_2$, which give lower bound inequalities from purification. Let us now derive them explicitly. For the inhomogeneous subadditivity they are easy to determine, the two inverted inequalities following from the inequality from Lemma \ref{ISl1}. are
\begin{align}
S_{L}(\rho_{AB})\ge &\ d_A S_{L}(\rho_{B})-\frac{1}{d_B}S_L(\rho_{A})- \frac{(d_Ad_B-1)(d_A-1)}{d_Ad_B}\\
=: &\tilde{h}_1\big(S_{L}(\rho_{A}),S_{L}(\rho_{B})\big)\notag \\
\notag \\
S_{L}(\rho_{AB})\ge &\ d_B S_{L}(\rho_{A})-\frac{1}{d_A}S_L(\rho_{B})- \frac{(d_Ad_B-1)(d_B-1)}{d_Ad_B}\\
=: &\tilde{h}_2\big(S_{L}(\rho_{A}),S_{L}(\rho_{B})\big)\notag
\end{align}

In the case of the dimensionally sharp subadditivity, things get a bit more complicated and one always has to be careful with the restriction. The inverted inequalities following from inequality (\ref{DSSAt11}) are given by
{\allowdisplaybreaks
\begin{align}
	S_{L}(\rho_{AB})\ge & D_{AB}\bigg(1-\bigg(D_A\bigg(1-\sqrt{1-\frac{S_L(\rho_{A})}{D_A}}\bigg)\\
	+&\sqrt{\frac{S_L(\rho_{A})-S_L(\rho_{B})}{D_{AB}} +\bigg((1-D_A)+D_A\sqrt{1-\frac{S_L(\rho_{A})}{D_A}}\bigg)^2}\bigg)^2\bigg)\notag \\
	=& S_L(\rho_{B})-S_L(\rho_{A})+2D_AD_{AB}\bigg(1-\sqrt{1-\frac{S_L(\rho_{A})}{D_A}}\bigg)\notag \\
	&\bigg(1-2D_A-\sqrt{\frac{S_L(\rho_{A})-S_L(\rho_{B})}{D_{AB}}+\bigg((1-D_A)+D_A\sqrt{1-\frac{S_L(\rho_{A})}{D_A}}\bigg)^2}\bigg)\notag \\
	=: &\tilde{g}_1\big(S_{L}(\rho_{A}),S_{L}(\rho_{B})\big)\notag
\end{align}}
\hspace{\fill}
	
and
{\allowdisplaybreaks	
\begin{align}
S_{L}(\rho_{AB})\ge & D_{AB}\bigg(1-\bigg(D_B\bigg(1-\sqrt{1-\frac{S_L(\rho_{B})}{D_B}}\bigg)\\
+&\sqrt{\frac{S_L(\rho_{B})-S_L(\rho_{A})}{D_{AB}} +\bigg((1-D_B)+D_B\sqrt{1-\frac{S_L(\rho_{A})}{D_B}}\bigg)^2}\bigg)^2\bigg)\notag \\
=& S_L(\rho_{A})-S_L(\rho_{B})+2D_BD_{AB}\bigg(1-\sqrt{1-D_BS_L(\rho_{B})}\bigg)\notag \\
&\bigg(1-2D_B-\sqrt{\frac{S_L(\rho_{B})-S_L(\rho_{A})}{D_{AB}}+\bigg((1-D_B)+D_B\sqrt{1-\frac{S_L(\rho_{B})}{D_B}}\bigg)^2}\bigg)\notag \\
=: &\tilde{g}_2\big(S_{L}(\rho_{A}),S_{L}(\rho_{B})\big)\notag
\end{align}}
and hold for
\begin{align}
S_{L}(\rho_{AB})\le D_{AB}\big(\frac{S_{L}(\rho_{A})}{D_A}-1+2\sqrt{1-\frac{S_{L}(\rho_{A})}{D_A}}\big)=: &\tilde{r}_1\big(S_{L}(\rho_{A})\big)
\end{align}
and
\begin{align}
S_{L}(\rho_{AB})\le D_{AB}\big(\frac{S_{L}(\rho_{B})}{D_B}-1+2\sqrt{1-\frac{S_{L}(\rho_{B})}{D_B}}\big)=: &\tilde{r}_2\big(S_{L}(\rho_{A})\big)
\end{align}
respectively.\\

Now define
\begin{align}
\tilde{f}_1(x,y):=&\begin{cases}
\tilde{g}_1(x,y)\qquad z\le \tilde{r}_1(x)\\
\tilde{h}_1(x,y)\qquad z> \tilde{r}_1(x)
\end{cases}
\end{align}
and
\begin{align}
\tilde{f}_2(x,y):=&\begin{cases}
\tilde{g}_2(x,y)\qquad z\le \tilde{r}_2(x)\\
\tilde{h}_2(x,y)\qquad z> \tilde{r}_2(x)
\end{cases}
\end{align}
By Proposition \ref{SBp1}. and the implicit function theorem, we know that $\tilde{f}_1(x,y)$ and $\tilde{f}_2(x,y)$ are continuously differentiable functions.\\
Now define
\begin{align}
\tilde{f}(x,y):=&\begin{cases}
\tilde{f}_1(x,y)\qquad x\le y\\
\tilde{f}_2(x,y)\qquad x> y
\end{cases}
\end{align}

The above discussion leads to

\begin{cor}
The inequality
\begin{equation}
S_{L}(\rho_{AB})\ge \tilde{f}\big(S_{L}(\rho_{A}),S_{L}(\rho_{B})\big)
\end{equation}
holds and is sharper than the Araki-Lieb inequality for the Tsallis 2-entropy.
\end{cor}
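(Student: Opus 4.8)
The plan is to split the claim into its two assertions---that the inequality holds, and that it improves on Araki--Lieb---and to dispatch the first almost entirely by appeal to the machinery already assembled in this section. For the validity of $S_L(\rho_{AB})\ge\tilde f(S_L(\rho_A),S_L(\rho_B))$, I would observe that Proposition \ref{IIp1} already establishes the two inverted inequalities (\ref{IIp11}) and (\ref{IIp12}), and that Proposition \ref{IIp2} turns them into the two global lower bounds $S_L(\rho_{AB})\ge\tilde f_1(S_L(\rho_A),S_L(\rho_B))$ and $S_L(\rho_{AB})\ge\tilde f_2(S_L(\rho_A),S_L(\rho_B))$, whose explicit forms are the piecewise functions built from $\tilde g_i,\tilde h_i,\tilde r_i$. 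Since \emph{both} of these bounds hold on the entire domain, any function obtained by selecting one of them on each part of a partition of the domain is again a valid lower bound. In particular the piecewise choice $\tilde f=\tilde f_1$ on $\{x\le y\}$ and $\tilde f=\tilde f_2$ on $\{x>y\}$ is a valid lower bound, so the inequality holds.

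The substance is the comparison with Araki--Lieb, $S_L(\rho_{AB})\ge|S_L(\rho_A)-S_L(\rho_B)|$. The key structural fact I would isolate is that passing from an upper bound to a lower bound by purification is \emph{order preserving}: inverting a sharper upper bound produces a sharper lower bound. Concretely, writing $x=S_L(\rho_A)$, $y=S_L(\rho_B)$, $z=S_L(\rho_{AB})$, the inverted inequality (\ref{IIp11}) reads $y\le f(x,z)$, and by the strict monotonicity of $f$ in its second argument (Lemma \ref{IIl1}) this is equivalent to $z\ge f_x^{-1}(y)=\tilde f_1(x,y)$. The Araki--Lieb bound arises in exactly the same way from subadditivity of the $A$--$R$ bipartition, whose inverse in $z$ is simply $z\ge y-x$.

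I would then compare the two inversions directly. Since the combined upper bound $f$ is sharper than subadditivity, we have $f(x,z)\le x+z$ for all admissible $x,z$. Fix $x\le y$ and set $z_0:=y-x\ge 0$, the Araki--Lieb value. Then $f(x,z_0)\le x+z_0=y=f(x,\tilde f_1(x,y))$; because $f$ is strictly increasing in $z$, this forces $\tilde f_1(x,y)\ge z_0=y-x$. The symmetric argument applied to the $BR$ system and inequality (\ref{IIp12}) gives $\tilde f_2(x,y)\ge x-y$ whenever $x>y$. Matching these to the case split in the definition of $\tilde f$ yields $\tilde f(x,y)\ge|x-y|$ on the whole domain, with strict inequality wherever $f$ lies strictly below subadditivity; this is precisely the assertion that the bound is sharper than Araki--Lieb.

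The main obstacle I anticipate is not analytic but bookkeeping: one must check that the explicit closed forms $\tilde g_i,\tilde h_i$ really are the branches of $f_x^{-1}$, that the restrictions $\tilde r_i$ partition the domain compatibly with the regions on which $g$ and $h$ are active inside $f$, and that the two symmetric halves $\tilde f_1,\tilde f_2$ glue across $x=y$, where they coincide by the $A\leftrightarrow B$ symmetry $\tilde f_1(x,y)=\tilde f_2(y,x)$. Once the order-preserving inversion is stated cleanly, the comparison with Araki--Lieb is immediate, so the only real work lies in confirming that the inverse-function-theorem branches assemble into the stated piecewise $\tilde f$ without gaps or overlaps.
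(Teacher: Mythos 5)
Your proposal is correct and follows essentially the same route as the paper: the validity of the combined lower bound is inherited from Propositions \ref{IIp1} and \ref{IIp2}, and the comparison with Araki--Lieb comes from the fact that inverting an upper bound that is sharper than subadditivity (using the monotonicity of $f$ in its second argument from Lemma \ref{IIl1}) yields a lower bound sharper than Araki--Lieb. Your write-up merely makes explicit the order-preserving-inversion step that the paper compresses into a single sentence.
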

	
\begin{proof}
The first claim follows from the previous discussion.\\
Since the inequality from Theorem \ref{SBt1}. is sharper than subadditivity, the inverted inequality is sharper than Araki-Lieb.
\end{proof}

That inhomogeneous subadditivity and dimensionally sharp subadditivity are the sharpest possible inequalities, does however not imply, that the inverted inequality is the sharpest. Why is this so? Remember that in the proof we used our inequality for states of the form $\rho_{AR}$ (or $\rho_{BR}$). But this means that the inverted inequality is only the sharpest for states $\rho_{AR}\in\mathcal{H}_A\otimes\mathcal{H}_R$ (or $\rho_{BR}\in\mathcal{H}_B\otimes\mathcal{H}_R$) which can be purified in the space $\mathcal{H}_A\otimes\mathcal{H}_B\otimes\mathcal{H}_R$.\\

\section{Conclusion}\label{CO}
We have introduced a dimensionally sharp entropy inequality for the Tsallis 2-entropy (ISA+DSSA). It provides a sharp improvement for all finite dimensional quantum systems, extending entropy inequalities beyond the asymptotic domain. From a physical perspective, these inequalities provides stronger constraints for the distribution of purities in multipartite quantum states. As any state that is prepared experimentally, and used for information encoding, can be represented by a finite dimensional Hilbert space, the inequalities should be useful whenever one is interested in the distribution of purities or equivalently Bloch vector lengths. The lengths of Bloch vectors over different 'sectors' \cite{Jens1,Jens2} (i.e. tensor products of $k$ local Bloch operators would correspond to a 'k'-sector) or correlation tensors \cite{Julio,Zukowski, Claude1,Claude2, Weinfurter} are often used to quantify correlations and detect entanglement, so sharper inequalities will also be useful in this context. For the future we hope to find a sharp lower bound and extend the approach to multipartite systems, also finding an inequality akin to strong subadditivity.

\emph{Acknowledgements} The authors would like to acknowledge fruitful discussions with Paul Appel, Otfried G\"uhne, Felix Huber and Milan Mosonyi. MH and SM acknowledge funding from the FWF (Y879-N27, I3053-N27 and P31339-N27). CK gratefully acknowledges support from the European Research Council (“reFUEL” ERC-2017-STG 758149). CE and JS acknowledge funding from the German ResearchFoundation Project EL710/2-1 and JS from the Basque Government grant IT986-16 and grant PGC2018-101355-B-I00 (MCIU/AEI/FEDER,UE).
\bibliography{mybib}

\begin{thebibliography}{22}%
\makeatletter
\providecommand \@ifxundefined [1]{%
 \@ifx{#1\undefined}
}%
\providecommand \@ifnum [1]{%
 \ifnum #1\expandafter \@firstoftwo
 \else \expandafter \@secondoftwo
 \fi
}%
\providecommand \@ifx [1]{%
 \ifx #1\expandafter \@firstoftwo
 \else \expandafter \@secondoftwo
 \fi
}%
\providecommand \natexlab [1]{#1}%
\providecommand \enquote  [1]{``#1''}%
\providecommand \bibnamefont  [1]{#1}%
\providecommand \bibfnamefont [1]{#1}%
\providecommand \citenamefont [1]{#1}%
\providecommand \href@noop [0]{\@secondoftwo}%
\providecommand \href [0]{\begingroup \@sanitize@url \@href}%
\providecommand \@href[1]{\@@startlink{#1}\@@href}%
\providecommand \@@href[1]{\endgroup#1\@@endlink}%
\providecommand \@sanitize@url [0]{\catcode `\\12\catcode `\$12\catcode
  `\&12\catcode `\#12\catcode `\^12\catcode `\_12\catcode `\%12\relax}%
\providecommand \@@startlink[1]{}%
\providecommand \@@endlink[0]{}%
\providecommand \url  [0]{\begingroup\@sanitize@url \@url }%
\providecommand \@url [1]{\endgroup\@href {#1}{\urlprefix }}%
\providecommand \urlprefix  [0]{URL }%
\providecommand \Eprint [0]{\href }%
\providecommand \doibase [0]{http://dx.doi.org/}%
\providecommand \selectlanguage [0]{\@gobble}%
\providecommand \bibinfo  [0]{\@secondoftwo}%
\providecommand \bibfield  [0]{\@secondoftwo}%
\providecommand \translation [1]{[#1]}%
\providecommand \BibitemOpen [0]{}%
\providecommand \bibitemStop [0]{}%
\providecommand \bibitemNoStop [0]{.\EOS\space}%
\providecommand \EOS [0]{\spacefactor3000\relax}%
\providecommand \BibitemShut  [1]{\csname bibitem#1\endcsname}%
\let\auto@bib@innerbib\@empty
\bibitem [{\citenamefont {Yeung}(2008)}]{Yeung2008}%
  \BibitemOpen
  \bibfield  {author} {\bibinfo {author} {\bibfnamefont {R.~W.}\ \bibnamefont
  {Yeung}},\ }\href {\doibase 10.1109/TIT.2009.2021372} {\emph {\bibinfo
  {title} {Springer US}}}\ (\bibinfo  {publisher} {Springer US},\ \bibinfo
  {year} {2008})\BibitemShut {NoStop}%
\bibitem [{\citenamefont {Nielsen}\ and\ \citenamefont
  {Chuang}(2010)}]{Nielsen2010}%
  \BibitemOpen
  \bibfield  {author} {\bibinfo {author} {\bibfnamefont {M.~A.}\ \bibnamefont
  {Nielsen}}\ and\ \bibinfo {author} {\bibfnamefont {I.~L.}\ \bibnamefont
  {Chuang}},\ }\href {\doibase 10.1017/CBO9780511976667} {\emph {\bibinfo
  {title} {Cambridge University Press}}}\ (\bibinfo  {publisher} {Cambridge
  University Press},\ \bibinfo {year} {2010})\ \Eprint
  {http://arxiv.org/abs/1011.1669v3} {arXiv:1011.1669v3} \BibitemShut {NoStop}%
\bibitem [{\citenamefont {Cadney}\ \emph {et~al.}(2012)\citenamefont {Cadney},
  \citenamefont {Linden},\ and\ \citenamefont {Winter}}]{Cadney12}%
  \BibitemOpen
  \bibfield  {author} {\bibinfo {author} {\bibfnamefont {J.}~\bibnamefont
  {Cadney}}, \bibinfo {author} {\bibfnamefont {N.}~\bibnamefont {Linden}}, \
  and\ \bibinfo {author} {\bibfnamefont {A.}~\bibnamefont {Winter}},\ }\href
  {\doibase 10.1109/TIT.2012.2185036} {\bibfield  {journal} {\bibinfo
  {journal} {IEEE Transactions on Information Theory}\ }\textbf {\bibinfo
  {volume} {58}},\ \bibinfo {pages} {3657} (\bibinfo {year}
  {2012})}\BibitemShut {NoStop}%
\bibitem [{\citenamefont {Pippenger}(2003)}]{Pippenger2003}%
  \BibitemOpen
  \bibfield  {author} {\bibinfo {author} {\bibfnamefont {N.}~\bibnamefont
  {Pippenger}},\ }\href@noop {} {\bibfield  {journal} {\bibinfo  {journal}
  {IEEE Transactions on Information Theory}\ }\textbf {\bibinfo {volume}
  {49}},\ \bibinfo {pages} {773} (\bibinfo {year} {2003})}\BibitemShut
  {NoStop}%
\bibitem [{\citenamefont {Linden}\ and\ \citenamefont
  {Winter}(2005)}]{Linden2005}%
  \BibitemOpen
  \bibfield  {author} {\bibinfo {author} {\bibfnamefont {N.}~\bibnamefont
  {Linden}}\ and\ \bibinfo {author} {\bibfnamefont {A.}~\bibnamefont
  {Winter}},\ }\href {\doibase 10.1007/s00220-005-1361-2} {\bibfield  {journal}
  {\bibinfo  {journal} {Communications in Mathematical Physics}\ }\textbf
  {\bibinfo {volume} {259}},\ \bibinfo {pages} {129} (\bibinfo {year}
  {2005})},\ \Eprint {http://arxiv.org/abs/0406162} {arXiv:0406162 [quant-ph]}
  \BibitemShut {NoStop}%
\bibitem [{\citenamefont {Klir}(2006)}]{Klir2006}%
  \BibitemOpen
  \bibfield  {author} {\bibinfo {author} {\bibfnamefont {G.~J.}\ \bibnamefont
  {Klir}},\ }\href@noop {} {\emph {\bibinfo {title} {{Uncertainty and
  Information: Foundations of Generalized Information Theory}}}}\ (\bibinfo
  {publisher} {Wiley-Interscience},\ \bibinfo {year} {2006})\BibitemShut
  {NoStop}%
\bibitem [{\citenamefont {Linden}\ \emph {et~al.}(2013)\citenamefont {Linden},
  \citenamefont {Mosonyi},\ and\ \citenamefont {Winter}}]{Linden2013}%
  \BibitemOpen
  \bibfield  {author} {\bibinfo {author} {\bibfnamefont {N.}~\bibnamefont
  {Linden}}, \bibinfo {author} {\bibfnamefont {M.}~\bibnamefont {Mosonyi}}, \
  and\ \bibinfo {author} {\bibfnamefont {A.}~\bibnamefont {Winter}},\ }\href
  {\doibase 10.1098/rspa.2012.0737} {\bibfield  {journal} {\bibinfo  {journal}
  {Proceedings of the Royal Society}\ }\textbf {\bibinfo {volume} {469}}
  (\bibinfo {year} {2013}),\ 10.1098/rspa.2012.0737},\ \Eprint
  {http://arxiv.org/abs/1212.0248} {arXiv:1212.0248} \BibitemShut {NoStop}%
\bibitem [{\citenamefont {Cadney}\ \emph {et~al.}(2014)\citenamefont {Cadney},
  \citenamefont {Huber}, \citenamefont {Linden},\ and\ \citenamefont
  {Winter}}]{Cadney2014}%
  \BibitemOpen
  \bibfield  {author} {\bibinfo {author} {\bibfnamefont {J.}~\bibnamefont
  {Cadney}}, \bibinfo {author} {\bibfnamefont {M.}~\bibnamefont {Huber}},
  \bibinfo {author} {\bibfnamefont {N.}~\bibnamefont {Linden}}, \ and\ \bibinfo
  {author} {\bibfnamefont {A.}~\bibnamefont {Winter}},\ }\href {\doibase
  10.1016/j.laa.2014.03.035} {\bibfield  {journal} {\bibinfo  {journal} {Linear
  Algebra and Its Applications}\ }\textbf {\bibinfo {volume} {452}},\ \bibinfo
  {pages} {153} (\bibinfo {year} {2014})},\ \Eprint
  {http://arxiv.org/abs/1308.0539} {arXiv:1308.0539} \BibitemShut {NoStop}%
\bibitem [{\citenamefont {Raggio}(1995)}]{Raggio1995}%
  \BibitemOpen
  \bibfield  {author} {\bibinfo {author} {\bibfnamefont {G.~A.}\ \bibnamefont
  {Raggio}},\ }\href@noop {} {\bibfield  {journal} {\bibinfo  {journal}
  {Journal of Mathematical Physics}\ }\textbf {\bibinfo {volume} {36}}
  (\bibinfo {year} {1995})}\BibitemShut {NoStop}%
\bibitem [{\citenamefont {Audenaert}(2007)}]{Audenaert2007}%
  \BibitemOpen
  \bibfield  {author} {\bibinfo {author} {\bibfnamefont {K.~M.~R.}\
  \bibnamefont {Audenaert}},\ }\href {\doibase 10.1063/1.2771542} {\bibfield
  {journal} {\bibinfo  {journal} {Journal of Mathematical Physics}\ }\textbf
  {\bibinfo {volume} {48}},\ \bibinfo {pages} {083507} (\bibinfo {year}
  {2007})},\ \Eprint {http://arxiv.org/abs/0705.1276v1} {arXiv:0705.1276v1}
  \BibitemShut {NoStop}%
\bibitem [{\citenamefont {Fano}(1957)}]{Fano}%
  \BibitemOpen
  \bibfield  {author} {\bibinfo {author} {\bibfnamefont {U.}~\bibnamefont
  {Fano}},\ }\href {https://doi.org/10.1103/RevModPhys.29.74} {\bibfield
  {journal} {\bibinfo  {journal} {Reviews of Modern Physics}\ }\textbf
  {\bibinfo {volume} {29}},\ \bibinfo {pages} {74} (\bibinfo {year}
  {1957})}\BibitemShut {NoStop}%
\bibitem [{\citenamefont {Zhang}\ \emph {et~al.}(2008)\citenamefont {Zhang},
  \citenamefont {Gong}, \citenamefont {Zhang},\ and\ \citenamefont
  {Guo}}]{Zhang2008}%
  \BibitemOpen
  \bibfield  {author} {\bibinfo {author} {\bibfnamefont {C.~J.}\ \bibnamefont
  {Zhang}}, \bibinfo {author} {\bibfnamefont {Y.~X.}\ \bibnamefont {Gong}},
  \bibinfo {author} {\bibfnamefont {Y.~S.}\ \bibnamefont {Zhang}}, \ and\
  \bibinfo {author} {\bibfnamefont {G.~C.}\ \bibnamefont {Guo}},\ }\href
  {\doibase 10.1103/PhysRevA.78.042308} {\bibfield  {journal} {\bibinfo
  {journal} {Physical Review A - Atomic, Molecular, and Optical Physics}\
  }\textbf {\bibinfo {volume} {78}},\ \bibinfo {pages} {1} (\bibinfo {year}
  {2008})},\ \Eprint {http://arxiv.org/abs/0806.2598v3} {arXiv:0806.2598v3}
  \BibitemShut {NoStop}%
\bibitem [{\citenamefont {Petz}\ and\ \citenamefont
  {Virosztek}(2015)}]{Petz2015}%
  \BibitemOpen
  \bibfield  {author} {\bibinfo {author} {\bibfnamefont {D.}~\bibnamefont
  {Petz}}\ and\ \bibinfo {author} {\bibfnamefont {D.}~\bibnamefont
  {Virosztek}},\ }\href {\doibase 10.7153/mia-18-41} {\bibfield  {journal}
  {\bibinfo  {journal} {Mathematical Inequalities and Applications}\ }\textbf
  {\bibinfo {volume} {18}},\ \bibinfo {pages} {555} (\bibinfo {year} {2015})},\
  \Eprint {http://arxiv.org/abs/1403.7062} {arXiv:1403.7062} \BibitemShut
  {NoStop}%
\bibitem [{\citenamefont {Appel}\ \emph {et~al.}(2017)\citenamefont {Appel},
  \citenamefont {Huber},\ and\ \citenamefont {Kl{\"{o}}ckl}}]{Appel2017}%
  \BibitemOpen
  \bibfield  {author} {\bibinfo {author} {\bibfnamefont {P.}~\bibnamefont
  {Appel}}, \bibinfo {author} {\bibfnamefont {M.}~\bibnamefont {Huber}}, \ and\
  \bibinfo {author} {\bibfnamefont {C.}~\bibnamefont {Kl{\"{o}}ckl}},\ }\href
  {http://arxiv.org/abs/1710.02473} {\  (\bibinfo {year} {2017})},\ \Eprint
  {http://arxiv.org/abs/1710.02473} {arXiv:1710.02473} \BibitemShut {NoStop}%
\bibitem [{\citenamefont {Eltschka}\ and\ \citenamefont
  {Siewert}(2015)}]{Jens1}%
  \BibitemOpen
  \bibfield  {author} {\bibinfo {author} {\bibfnamefont {C.}~\bibnamefont
  {Eltschka}}\ and\ \bibinfo {author} {\bibfnamefont {J.}~\bibnamefont
  {Siewert}},\ }\href {\doibase 10.1103/PhysRevLett.114.140402} {\bibfield
  {journal} {\bibinfo  {journal} {Phys. Rev. Lett.}\ }\textbf {\bibinfo
  {volume} {114}},\ \bibinfo {pages} {140402} (\bibinfo {year}
  {2015})}\BibitemShut {NoStop}%
\bibitem [{\citenamefont {Eltschka}\ \emph {et~al.}(2018)\citenamefont
  {Eltschka}, \citenamefont {Huber}, \citenamefont {G\"uhne},\ and\
  \citenamefont {Siewert}}]{Jens2}%
  \BibitemOpen
  \bibfield  {author} {\bibinfo {author} {\bibfnamefont {C.}~\bibnamefont
  {Eltschka}}, \bibinfo {author} {\bibfnamefont {F.}~\bibnamefont {Huber}},
  \bibinfo {author} {\bibfnamefont {O.}~\bibnamefont {G\"uhne}}, \ and\
  \bibinfo {author} {\bibfnamefont {J.}~\bibnamefont {Siewert}},\ }\href
  {\doibase 10.1103/PhysRevA.98.052317} {\bibfield  {journal} {\bibinfo
  {journal} {Phys. Rev. A}\ }\textbf {\bibinfo {volume} {98}},\ \bibinfo
  {pages} {052317} (\bibinfo {year} {2018})}\BibitemShut {NoStop}%
\bibitem [{\citenamefont {de~Vicente}\ and\ \citenamefont
  {Huber}(2011)}]{Julio}%
  \BibitemOpen
  \bibfield  {author} {\bibinfo {author} {\bibfnamefont {J.~I.}\ \bibnamefont
  {de~Vicente}}\ and\ \bibinfo {author} {\bibfnamefont {M.}~\bibnamefont
  {Huber}},\ }\href {\doibase 10.1103/PhysRevA.84.062306} {\bibfield  {journal}
  {\bibinfo  {journal} {Phys. Rev. A}\ }\textbf {\bibinfo {volume} {84}},\
  \bibinfo {pages} {062306} (\bibinfo {year} {2011})}\BibitemShut {NoStop}%
\bibitem [{\citenamefont {Laskowski}\ \emph {et~al.}(2011)\citenamefont
  {Laskowski}, \citenamefont {Markiewicz}, \citenamefont {Paterek},\ and\
  \citenamefont {\ifmmode~\dot{Z}\else \.{Z}\fi{}ukowski}}]{Zukowski}%
  \BibitemOpen
  \bibfield  {author} {\bibinfo {author} {\bibfnamefont {W.}~\bibnamefont
  {Laskowski}}, \bibinfo {author} {\bibfnamefont {M.}~\bibnamefont
  {Markiewicz}}, \bibinfo {author} {\bibfnamefont {T.}~\bibnamefont {Paterek}},
  \ and\ \bibinfo {author} {\bibfnamefont {M.}~\bibnamefont
  {\ifmmode~\dot{Z}\else \.{Z}\fi{}ukowski}},\ }\href {\doibase
  10.1103/PhysRevA.84.062305} {\bibfield  {journal} {\bibinfo  {journal} {Phys.
  Rev. A}\ }\textbf {\bibinfo {volume} {84}},\ \bibinfo {pages} {062305}
  (\bibinfo {year} {2011})}\BibitemShut {NoStop}%
\bibitem [{\citenamefont {Kl\"ockl}\ and\ \citenamefont
  {Huber}(2015)}]{Claude1}%
  \BibitemOpen
  \bibfield  {author} {\bibinfo {author} {\bibfnamefont {C.}~\bibnamefont
  {Kl\"ockl}}\ and\ \bibinfo {author} {\bibfnamefont {M.}~\bibnamefont
  {Huber}},\ }\href {\doibase 10.1103/PhysRevA.91.042339} {\bibfield  {journal}
  {\bibinfo  {journal} {Phys. Rev. A}\ }\textbf {\bibinfo {volume} {91}},\
  \bibinfo {pages} {042339} (\bibinfo {year} {2015})}\BibitemShut {NoStop}%
\bibitem [{\citenamefont {Asadian}\ \emph {et~al.}(2016)\citenamefont
  {Asadian}, \citenamefont {Erker}, \citenamefont {Huber},\ and\ \citenamefont
  {Kl\"ockl}}]{Claude2}%
  \BibitemOpen
  \bibfield  {author} {\bibinfo {author} {\bibfnamefont {A.}~\bibnamefont
  {Asadian}}, \bibinfo {author} {\bibfnamefont {P.}~\bibnamefont {Erker}},
  \bibinfo {author} {\bibfnamefont {M.}~\bibnamefont {Huber}}, \ and\ \bibinfo
  {author} {\bibfnamefont {C.}~\bibnamefont {Kl\"ockl}},\ }\href {\doibase
  10.1103/PhysRevA.94.010301} {\bibfield  {journal} {\bibinfo  {journal} {Phys.
  Rev. A}\ }\textbf {\bibinfo {volume} {94}},\ \bibinfo {pages} {010301}
  (\bibinfo {year} {2016})}\BibitemShut {NoStop}%
\bibitem [{\citenamefont {Schwemmer}\ \emph {et~al.}(2015)\citenamefont
  {Schwemmer}, \citenamefont {Knips}, \citenamefont {Tran}, \citenamefont
  {de~Rosier}, \citenamefont {Laskowski}, \citenamefont {Paterek},\ and\
  \citenamefont {Weinfurter}}]{Weinfurter}%
  \BibitemOpen
  \bibfield  {author} {\bibinfo {author} {\bibfnamefont {C.}~\bibnamefont
  {Schwemmer}}, \bibinfo {author} {\bibfnamefont {L.}~\bibnamefont {Knips}},
  \bibinfo {author} {\bibfnamefont {M.~C.}\ \bibnamefont {Tran}}, \bibinfo
  {author} {\bibfnamefont {A.}~\bibnamefont {de~Rosier}}, \bibinfo {author}
  {\bibfnamefont {W.}~\bibnamefont {Laskowski}}, \bibinfo {author}
  {\bibfnamefont {T.}~\bibnamefont {Paterek}}, \ and\ \bibinfo {author}
  {\bibfnamefont {H.}~\bibnamefont {Weinfurter}},\ }\href {\doibase
  10.1103/PhysRevLett.114.180501} {\bibfield  {journal} {\bibinfo  {journal}
  {Phys. Rev. Lett.}\ }\textbf {\bibinfo {volume} {114}},\ \bibinfo {pages}
  {180501} (\bibinfo {year} {2015})}\BibitemShut {NoStop}%
\bibitem [{\citenamefont {Bertlmann}\ and\ \citenamefont
  {Krammer}(2008)}]{Bertlmann2008}%
  \BibitemOpen
  \bibfield  {author} {\bibinfo {author} {\bibfnamefont {R.~A.}\ \bibnamefont
  {Bertlmann}}\ and\ \bibinfo {author} {\bibfnamefont {P.}~\bibnamefont
  {Krammer}},\ }\href {\doibase 10.1088/1751-8113/41/23/235303} {\bibfield
  {journal} {\bibinfo  {journal} {Journal of Physics A: Mathematical and
  Theoretical}\ }\textbf {\bibinfo {volume} {41}},\ \bibinfo {pages} {235303}
  (\bibinfo {year} {2008})},\ \Eprint {http://arxiv.org/abs/0806.1174}
  {arXiv:0806.1174} \BibitemShut {NoStop}%
\end{thebibliography}%

\appendix

\section{Bloch decomposition}\label{AA1}

Recall the Schmidt decomposition:
\begin{theorem}\label{AA1t1}
	Suppose $|\psi\rangle$ is a pure state of a composite system, $AB$. Then there exist orthonormal bases $\big\{|X_i\rangle\big\}_{i=1}^{d_A}$ and $\big\{|Y_i\rangle\big\}_{i=1}^{d_B}$ for the systems $A$ and $B$ respectively, such that
	\begin{equation}
	|\psi\rangle=\sum\limits_{i=1}^{N}c_i|X_i\rangle\otimes|Y_i\rangle,
	\end{equation}
	where $N=min(d_A,d_B)$ and $c_i$ are non-negative real numbers satisfying $\sum\limits_{i=1}^{N}c_i^2= 1$ known as Schmidt co-efficients.
\end{theorem}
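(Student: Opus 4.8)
The plan is to reduce the statement to the singular value decomposition (SVD) of a rectangular matrix, which makes the existence of the claimed bases a matter of linear algebra. First I would fix arbitrary orthonormal bases $\{|j\rangle\}_{j=1}^{d_A}$ of $\mathcal{H}_A$ and $\{|k\rangle\}_{k=1}^{d_B}$ of $\mathcal{H}_B$ and expand the given pure state as $|\psi\rangle = \sum_{j,k} a_{jk}\, |j\rangle \otimes |k\rangle$. The coefficients $a_{jk}$ assemble into a single $d_A \times d_B$ matrix $A$, and the entire content of the theorem is encoded in the algebraic structure of this one matrix.

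Next I would invoke the SVD to write $A = U \Sigma V^\dagger$, where $U \in \mathrm{U}(d_A)$ and $V \in \mathrm{U}(d_B)$ are unitary and $\Sigma$ is the $d_A \times d_B$ matrix carrying the non-negative singular values $c_1,\dots,c_N$ on its main diagonal ($N=\min(d_A,d_B)$) and zeros elsewhere. Substituting $a_{jk} = \sum_i U_{ji}\, c_i\, V^*_{ki}$ back into the expansion and regrouping the two tensor factors yields $|\psi\rangle = \sum_{i=1}^{N} c_i\, |X_i\rangle \otimes |Y_i\rangle$ once I define $|X_i\rangle := \sum_j U_{ji} |j\rangle$ and $|Y_i\rangle := \sum_k V^*_{ki} |k\rangle$. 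Because $U$ and $V$ are unitary, I get $\langle X_i | X_{i'}\rangle = (U^\dagger U)_{ii'} = \delta_{ii'}$ and likewise $\langle Y_i | Y_{i'}\rangle = \delta_{ii'}$, so these are the required orthonormal (Schmidt) vectors and the $c_i$ the Schmidt coefficients. Finally, $1 = \langle \psi|\psi\rangle = \sum_i c_i^2$ follows directly from this orthonormality, giving the stated normalization.

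The only genuine subtlety — the point I would treat carefully, rather than the SVD itself, which I would take as known — is the bookkeeping in the rectangular case $d_A \neq d_B$. The families $\{|X_i\rangle\}_{i=1}^{N}$ and $\{|Y_i\rangle\}_{i=1}^{N}$ are orthonormal but need not span $\mathcal{H}_A$ and $\mathcal{H}_B$ when $N < d_A$ or $N < d_B$; to deliver full orthonormal bases of sizes $d_A$ and $d_B$ as the statement asserts, I would simply complete each family to a basis, the adjoined vectors carrying coefficient zero and not appearing in the sum. As an alternative route that avoids passing through the SVD of an external matrix, I could instead diagonalize the reduced operator $\rho_A = \Tr_B |\psi\rangle\langle\psi|$, take its orthonormal eigenvectors as the $|X_i\rangle$ with eigenvalues $c_i^2$, and define $|Y_i\rangle$ by the partial contraction $\langle X_i|\psi\rangle / c_i$ on the support of $\rho_A$; verifying orthonormality of the resulting $|Y_i\rangle$ then reproduces the same decomposition.
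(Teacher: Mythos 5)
The paper does not actually prove this statement: Theorem \ref{AA1t1} is only \emph{recalled} as the standard Schmidt decomposition, with no argument given, so there is nothing to compare your proof against. Your SVD argument is the standard and correct proof: the identification $a_{jk}=\sum_i U_{ji}c_i V^*_{ki}$, the regrouping into $\sum_i c_i |X_i\rangle\otimes|Y_i\rangle$, and the orthonormality of the $|X_i\rangle$ and $|Y_i\rangle$ from unitarity of $U$ and $V$ all check out, and $\sum_i c_i^2=1$ follows from normalisation. Your remark about the rectangular case is slightly more cautious than necessary — taking the \emph{full} SVD with $U$ of size $d_A\times d_A$ and $V$ of size $d_B\times d_B$ already hands you complete orthonormal bases of both spaces, with only the first $N$ vectors of each appearing in the sum — but the point is harmless, and your alternative route via diagonalising $\rho_A$ (taking care to define $|Y_i\rangle$ only on the support where $c_i>0$) is equally valid.
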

A consequence of the Schmidt decomposition is
\begin{cor}\label{AA1c1}
	For every state $\rho_{A}\in\mathcal{H}$ of a quantum system there is a pure state $|\psi\rangle\langle\psi|_{AR}\in\mathcal{H}^*\otimes\mathcal{H}$ such that $\Tr_R(|\psi\rangle\langle\psi|_{AR})=\rho_A$.
\end{cor}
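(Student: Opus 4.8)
The plan is to build the purification explicitly from the spectral decomposition of $\rho_A$ and to recognise it, via Theorem \ref{AA1t1}, as a state already in Schmidt form; the corollary is then verified by a direct partial-trace computation.

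First I would diagonalise $\rho_A$. Being a density operator, $\rho_A$ is Hermitian, positive semidefinite and of unit trace, so the spectral theorem supplies an orthonormal eigenbasis $\{|X_i\rangle\}_{i=1}^{d}$ of $\mathcal{H}$ together with eigenvalues $p_i \ge 0$ obeying $\sum_i p_i = 1$, giving $\rho_A = \sum_i p_i |X_i\rangle\langle X_i|$.

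Next I would take the reference system $R$ to be a copy of $\mathcal{H}$ (the dual $\mathcal{H}^*$ of the statement is one canonical choice, and any space of dimension at least $\rank\rho_A$ works), fix an orthonormal basis $\{|Y_i\rangle\}_{i=1}^{d}$ of it, and define $|\psi\rangle := \sum_i \sqrt{p_i}\,|X_i\rangle\otimes|Y_i\rangle$. This is precisely a bipartite state in the Schmidt form of Theorem \ref{AA1t1} with Schmidt coefficients $c_i=\sqrt{p_i}$; since $\sum_i c_i^2 = \sum_i p_i = 1$, the coefficients satisfy the normalisation required there, so $|\psi\rangle\langle\psi|$ is a legitimate pure state of the composite system $AR$.

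Finally I would carry out the partial trace over $R$. Expanding $|\psi\rangle\langle\psi| = \sum_{i,j}\sqrt{p_i p_j}\,|X_i\rangle\langle X_j|\otimes|Y_i\rangle\langle Y_j|$ and using $\Tr(|Y_i\rangle\langle Y_j|)=\delta_{ij}$ by orthonormality, every off-diagonal term is annihilated and I recover $\Tr_R(|\psi\rangle\langle\psi|)=\sum_i p_i|X_i\rangle\langle X_i|=\rho_A$, which is exactly the claim. There is no genuine obstacle in this argument: the result is elementary, and the only point demanding a word of care is that the reference space be large enough to accommodate all nonzero eigenvalues, which is guaranteed by choosing it isomorphic to $\mathcal{H}$.
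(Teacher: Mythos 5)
Your construction is correct and is exactly the standard argument the paper is gesturing at when it presents this corollary as "a consequence of the Schmidt decomposition": diagonalise $\rho_A$, form $|\psi\rangle=\sum_i\sqrt{p_i}\,|X_i\rangle\otimes|Y_i\rangle$, and check the partial trace. The paper gives no written proof, and your verification fills that in with no gaps.
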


We already know that there exists a orthogonal operator basis for the Hilbert Schmidt space. According to \cite{Bertlmann2008}, every state can be represented in a Bloch basis.

\begin{theorem}\label{AA1t2}
	For the Hilbert space $\mathcal{H}^*\otimes\mathcal{H}$ with $dim(\mathcal{H})=d$ there exists a basis $\big\{X_i\big\}_{i=0}^{d^2-1}$ such that
	
\begin{itemize}
	\item The identity matrix is included, $X_0=\mathds{1}_d$.\\
	\item The other $d^2-1$ elements are traceless Hermitian matrices.\\
	\item The basis elements are mutually orthogonal, $\Tr(X_i^\dagger X_j)=d\ \delta_{ij}$.
\end{itemize}
Every state $\rho$ can be represented in the Bloch picture
\begin{equation}\label{AA1t21}
\rho=\frac{1}{d}\bigg(\mathds{1}_d+\sum\limits_{i=1}^{d^2-1}b_i X_i\bigg)=\frac{1}{d}\bigg(\mathds{1}_d+\vec{\mathbf{b}}\cdot\vec{\mathbf{\Gamma}}\bigg)
\end{equation}
where $\vec{\mathbf{b}}\in\mathbb{R}^{d^2-1}$ is the Bloch vector, $\|\vec{\mathbf{b}}\|^2\le d-1$. The entries are given by $b_i=\langle X_i\rangle=\Tr(\rho X_i)$.
\end{theorem}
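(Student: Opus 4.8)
The plan is to separate the statement into its two genuinely independent pieces: the \emph{existence} of an operator basis with the three listed properties, and the \emph{representation} of an arbitrary state together with the norm bound. Throughout I would regard the operators on $\mathcal{H}$ (i.e.\ $\mathcal{H}^*\otimes\mathcal{H}$) as containing the real vector space $V$ of Hermitian matrices, which has real dimension $d^2$, equipped with the real-valued inner product $\langle A,B\rangle:=\Tr(A^\dagger B)=\Tr(AB)$ (real because $A,B$ are Hermitian).

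First I would construct the basis. I take $X_0:=\mathds{1}_d$ and note $\langle X_0,X_0\rangle=\Tr(\mathds{1}_d)=d$, which already matches the required normalisation. Its orthogonal complement $X_0^{\perp}$ inside $V$ is exactly the set of traceless Hermitian matrices, since $\langle X_0,X\rangle=\Tr(X)$, and this complement has real dimension $d^2-1$. Applying Gram--Schmidt to any basis of $X_0^{\perp}$ (or, concretely, invoking the generalised Gell-Mann matrices) produces mutually orthogonal traceless Hermitian operators $X_1,\dots,X_{d^2-1}$, which I rescale so that $\Tr(X_i^2)=d$. Together with $X_0$ this yields a basis satisfying all three bullet points, in particular $\Tr(X_i^\dagger X_j)=d\,\delta_{ij}$.

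Next I would establish the representation and the bound together. Since $\rho$ is Hermitian it lies in $V$ and expands uniquely as $\rho=\sum_{i=0}^{d^2-1}c_iX_i$ with real $c_i$; orthogonality gives $c_i=\tfrac1d\Tr(X_i\rho)$, so $c_0=\tfrac1d\Tr\rho=\tfrac1d$ by normalisation, while setting $b_i:=d\,c_i=\Tr(X_i\rho)=\langle X_i\rangle$ for $i\ge1$ produces the stated Bloch form. Squaring this form and using $\Tr(X_i)=0$ and $\Tr(X_iX_j)=d\,\delta_{ij}$ yields $\Tr(\rho^2)=\tfrac1d\big(1+\|\vec{\mathbf{b}}\|^2\big)$. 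Because the eigenvalues of a density matrix lie in $[0,1]$ and sum to $1$, we have $\Tr(\rho^2)\le\Tr(\rho)=1$, and rearranging gives the bound $\|\vec{\mathbf{b}}\|^2\le d-1$.

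The only step carrying real content is the existence of the orthogonal Hermitian basis; everything else is linear-algebra bookkeeping. The main subtlety to watch is the real-versus-complex structure: one must carry out Gram--Schmidt inside the \emph{real} span of Hermitian operators, so that the intermediate vectors stay Hermitian and traceless and the extracted coefficients $b_i$ come out real — a point easy to gloss over, yet essential for the conclusion $\vec{\mathbf{b}}\in\mathbb{R}^{d^2-1}$.
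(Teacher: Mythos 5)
Your proposal is correct, and it is the standard argument: the paper itself gives no proof of this theorem, deferring instead to the citation of Bertlmann and Krammer, so there is nothing to diverge from. Your construction (orthogonal decomposition of the real vector space of Hermitian operators against $X_0=\mathds{1}_d$, Gram--Schmidt over $\mathbb{R}$, and the bound $\|\vec{\mathbf{b}}\|^2\le d-1$ from $\Tr(\rho^2)\le 1$) is exactly the reasoning the paper implicitly relies on, e.g.\ in the purity identity of Eq.~(\ref{AA1t22}), and your remark about carrying out Gram--Schmidt in the real span of Hermitian operators correctly isolates the one point that needs care.
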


It follows immediately that
\begin{align}\label{AA1t22}
\Tr(\rho^2)=&\frac{1}{d^2}\Tr\bigg(\big(\mathds{1}_d+\sum\limits_{i=1}^{d^2-1}b_i X_i\big)^2\bigg)^2\\
=&\frac{1}{d^2}\Tr\bigg(\mathds{1}_d+2\sum\limits_{i=1}^{d^2-1}b_i X_i+2\sum\limits_{i,j=1}^{d^2-1}b_ib_j X_i X_j+\sum\limits_{i=1}^{d^2-1}b_i^2 X_i^2\bigg)\notag \\
=&\frac{1}{d}\bigg(1+\sum\limits_{i=1}^{d^2-1}b_i^2\bigg)\notag \\
=&\frac{1}{d}\bigg(1+\|\vec{\mathbf{b}}\|^2\bigg)\notag
\end{align}

\section{Notes on inhomogeneous subadditivity}
\subsection{Alternative proof of Lemma \ref{ISl1}.}\label{AB1}
\begin{proof}
A direct way to show this result comes from the obvious inequality of the correlation tensor $\|C_{AB}\|_2^2\ge 0$, implying

\begin{align}
S_L(\rho_{AB})=& 1-\frac{1}{d_Ad_B}(1+\|C_A\|_2^2+\|C_B\|_2^2+\|C_{AB}\|_2^2)\\
\le& 1-\frac{1}{d_Ad_B}(1+\|C_A\|_2^2+\|C_B\|_2^2)\\
=& \frac{1}{d_B}S_L(\rho_{A})+\frac{1}{d_A}S_L(\rho_{B})+\frac{(1-d_A)(1-d_B)}{d_Ad_B}
\end{align}
\end{proof}

\subsection{Proof of Proposition \ref{ISp1}.}\label{AB2}
\begin{proof}
The inequality from Theorem \ref{LEt2}. can be rewritten to the equivalent form:

\begin{align}
1-\frac{d_{A}d_{B}}{4}(1-S_{L}(\rho_{AB})+\frac{1}{d_{A}d_{B}})^{2} &\le S_{L}(\rho_{A})+S_{L}(\rho_{B})-S_{L}(\rho_{A})S_{L}(\rho_{B})\\
(1-S_{L}(\rho_{AB})+\frac{1}{d_{A}d_{B}})^{2} &\ge \frac{4}{d_{A}d_{B}}(1-S_{L}(\rho_{A}))(1-S_{L}(\rho_{B})) \label{AB21}\\
1-S_{L}(\rho_{AB})+\frac{1}{d_{A}d_{B}} &\ge 2\sqrt{\frac{1}{d_{A}d_{B}}(1-S_{L}(\rho_{A}))(1-S_{L}(\rho_{B}))}\\
S_{L}(\rho_{AB}) &\le 1+\frac{1}{d_{A}d_{B}}- 2\sqrt{\frac{1}{d_{A}d_{B}}(1-S_{L}(\rho_{A}))(1-S_{L}(\rho_{B}))}
\label{AB22}
\end{align}
In line (\ref{AB21}) the square root can be taken, since both sides are positive.
If we subtract the right hand side of the equation from Lemma \ref{ISl1}. from the right hand side of Equation (\ref{AB22}) we get
\begin{align}
&\bigg(1+\frac{1}{d_{A}d_{B}}-2\sqrt{\frac{1}{d_{A}d_{B}}(1-S_{L}(\rho_{A}))(1-S_{L}(\rho_{B}))}\bigg)\\
&-\bigg(\frac{1}{d_{B}}S_{L}(\rho_{A})+\frac{1}{d_{A}}S_{L}(\rho_{B})+\frac{(d_{A}-1)(d_{B}-1)}{d_{A}d_{B}}\bigg)\notag \\
&=-2\sqrt{\frac{1}{d_{A}d_{B}}(1-S_{L}(\rho_{A}))(1-S_{L}(\rho_{B}))}+\frac{1}{d_{B}}(1-S_{L}(\rho_{A}))+\frac{1}{d_{A}}(1-S_{L}(\rho_{B}))\\
&=\bigg(\sqrt{\frac{1}{d_{B}}(1-S_{L}(\rho_{A}))}-\sqrt{\frac{1}{d_{A}}(1-S_{L}(\rho_{B})})\bigg)^{2} \ge 0
\end{align}
which shows that the inequality in Lemma \ref{ISl1}. is sharper than inequality (\ref{AB22}).
\end{proof}

\section{Proof of Proposition \ref{SBp1}.}\label{AC}

\begin{definition2}
	Let's define the following sets by abuse of notation
	\begin{align}
	ISA&:=\big\{(x,y,z)|\ 0\le x\le D_A,\ 0\le y\le D_B,\ z=h(x,y)\big\}\\
	DSSA&:=\big\{(x,y,z)|\ 0\le x\le D_A,\ 0\le y\le D_B,\ z=g(x,y)\big\}
	\end{align}
	as the graphs of these functions.
	Let 
	\begin{align}
	\Omega&:=\big\{(x,y)|\ 0\le x\le D_A,\ 0\le y\le D_B,\ y=D_A\big(\frac{x}{D_B}-1+2\sqrt{1-\frac{x}{D_B}}\big)\big\}
	\end{align}
	denote the set where equality holds for inequality (\ref{DSSAt12}) and 
	\begin{align}
	\overline{\Omega}&:=\big\{(x,y,z)|\ 0\le x\le D_A,\ 0\le y\le D_B,\ 0\le z\le D_{AB},\ y=D_A\big(\frac{x}{D_B}-1+2\sqrt{1-\frac{x}{D_B}}\big)\big\}
	\end{align}
	the extension of $\Omega$ along the z-axis.
\end{definition2}

\begin{claim}
	The intersection between the sets $DSSA$ and $\overline{\Omega}$ is exactly the intersection between the sets $ISA$ and $\overline{\Omega}$. Let us denote this intersection by
	\begin{align}
	\Gamma:=DSSA\cap\overline{\Omega}=ISA\cap\overline{\Omega}
	\end{align}
\end{claim}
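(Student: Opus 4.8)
The plan is to observe that $\overline{\Omega}$ is simply the curve $\Omega\subseteq\mathbb{R}^2$ extruded along the $z$-axis, so that a point of $DSSA\cap\overline{\Omega}$ is exactly a pair $(x,y)\in\Omega$ together with $z=g(x,y)$, and likewise a point of $ISA\cap\overline{\Omega}$ is a pair $(x,y)\in\Omega$ together with $z=h(x,y)$. Hence the two intersections coincide if and only if $g(x,y)=h(x,y)$ for every $(x,y)\in\Omega$, and the whole Claim reduces to this single algebraic identity. I would state this reduction first and then verify the identity by direct substitution.

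To do so, I would first rewrite the defining relation of $\Omega$ in the symmetric form
\[
\sqrt{1-\tfrac{x}{D_A}}+\sqrt{1-\tfrac{y}{D_B}}=1,
\]
which is equivalent to the equality case of restriction (\ref{DSSAt12}) (squaring this relation returns the formula defining $\Omega$). Introducing the abbreviations $u:=\sqrt{1-x/D_A}$ and $v:=\sqrt{1-y/D_B}$, so that $x=D_A(1-u^2)$ and $y=D_B(1-v^2)$, the constraint becomes simply $u+v=1$ with $u,v\ge0$. The point of this substitution is that on $\Omega$ one has $1-u=v$ and $1-v=u$, so the product $(1-u)(1-v)$ appearing in $g$ collapses to $uv$, which is what makes the comparison tractable.

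With these substitutions I would compute both sides. For DSSA,
\[
g(x,y)=D_A(1-u^2)+D_B(1-v^2)-2D_AD_B\,uv ,
\]
while for ISA, using $1/d_A=1-D_A$ and $1/d_B=1-D_B$, a short expansion gives
\[
h(x,y)=D_A(1-u^2)+D_B(1-v^2)-D_AD_B\,(1-u^2-v^2).
\]
Subtracting, the difference is $g(x,y)-h(x,y)=D_AD_B\big(1-(u+v)^2\big)$, which vanishes precisely because $u+v=1$ on $\Omega$. Thus $g=h$ on $\Omega$, and therefore $DSSA\cap\overline{\Omega}=ISA\cap\overline{\Omega}$, completing the argument.

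The computation is entirely routine; the only care needed is the bookkeeping of the constants $D_A,D_B$ against $1/d_A,1/d_B$, together with the recognition that the entire identity is driven by the single relation $(u+v)^2=1$. I therefore do not expect a genuine obstacle: the substitution $u=\sqrt{1-x/D_A}$, $v=\sqrt{1-y/D_B}$ linearizes the boundary condition and renders the equality of the two surfaces transparent.
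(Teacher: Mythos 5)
Your proposal is correct and follows essentially the same route as the paper: both reduce the claim to verifying the single identity $g(x,y)=h(x,y)$ on $\Omega$ and then check it by direct algebra. Your substitution $u=\sqrt{1-x/D_A}$, $v=\sqrt{1-y/D_B}$, which turns the boundary condition into $u+v=1$ and yields $g-h=-D_AD_B(1-u-v)^2$, is merely a cleaner bookkeeping of the same computation the paper carries out in the original variables.
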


\begin{proof}
	Since $\Omega$ divides the domain $\big\{(x,y)|\ 0\le x\le D_A,\ 0\le y\le D_B\big\} $ of $g(x,y)$ and $h(x,y)$ in two parts, obviously $DSSA\cap\overline{\Omega}$ and $ISA\cap\overline{\Omega}$ are non-empty.\\
	Since $\Gamma\subset\overline{\Omega}$, all we have to do is to check that the functions $g(x,y)$ and $h(x,y)$ coincide on $\Omega$.
	\begin{align*}
	& g(x,y)-h(x,y)\\
	&= x+y-\frac{2(d_A-1)(d_B-1)}{d_A d_B}\big(1-\sqrt{1-\frac{d_A}{d_A-1} x}\big)\big(1-\sqrt{1-\frac{d_B}{d_B-1} y}\big)\\
	& -\frac{1}{d_{B}}x-\frac{1}{d_{A}}y-\frac{(d_{A}-1)(d_{B}-1)}{d_{A}d_{B}}\\
	&=\frac{d_B-1}{d_B}x+\frac{d_A-1}{d_A}y-\frac{(d_{A}-1)(d_{B}-1)}{d_{A}d_{B}}\\
	&-\frac{(d_A-1)(d_B-1)}{2d_Ad_B}\big(1-\frac{d_A}{d_A-1}x+\frac{d_B}{d_B-1}y\big)\big(1+\frac{d_A}{d_A-1}x-\frac{d_B}{d_B-1}y\big)\\
	&=\frac{d_B-1}{d_B}x+\frac{d_A-1}{d_A}y-\frac{3(d_A-1)(d_B-1)}{2d_A d_B}+\frac{(d_{A}-1)(d_{B}-1)}{2d_{A}d_{B}}\big(\frac{d_A}{d_A-1}x-\frac{d_B}{d_B-1}y\big)^2\\
	&=\frac{2(d_{A}-1)}{d_{A}}y-\frac{5(d_{A}-1)(d_{B}-1)}{2d_{A}d_{B}}+\frac{2(d_{A}-1)(d_{B}-1)}{d_{A}d_{B}}\sqrt{1-\frac{d_B}{d_B-1}\ y}\\
	&+\frac{(d_{A}-1)(d_{B}-1)}{2d_{A}d_{B}}\big(2\sqrt{1-\frac{d_B}{d_B-1}\ y}-1\big)^2\\
	&=0
	\end{align*}
\end{proof}

We conclude that $g(x,y)$ and $h(x,y)$ can be combined to a continuous surface, if we put $DSSA$ (left of $\overline{\Omega}$) and $ISA$ (right of $\overline{\Omega}$) together.\\
Now we only have to check the differentiability.\\
Since $g(x,y)$ and $h(x,y)$ are continuously differentiable functions, it suffices to show that the transition is too.\\
Look at the gradient of both functions on the overlap $\Omega$.

\begin{align}
    \nabla g(x,y)=\begin{pmatrix}
	1-D_B\big(1-\sqrt{1-\frac{y}{D_B}}\big)\frac{1}{\sqrt{1-\frac{x}{D_A}}} \\
	\\
	1-D_A(1-\sqrt{1-\frac{x}{D_A}})\frac{1}{\sqrt{1-\frac{y}{D_B}}} 
	\end{pmatrix} \label{AC1}
	\end{align}
	
	\begin{align}
	\nabla g(x,y)|_{\Omega}=\begin{pmatrix}
	\frac{1}{d_B}\\ 
	\\
	\frac{1}{d_A}
	\end{pmatrix}=\nabla h(x,y)  \label{AC2}
\end{align}

This shows that $f(x,y)$ is continuously differentiable.

\section{Proof of Lemma \ref{SBl1}.}\label{AD}

We want to show that the inequality
\begin{align}
S_L(\rho_{AB})\le f(S_L(\rho_{A}),S_L(\rho_{B}))
\end{align}
is the sharpest possible inequality for a general bipartite quantum state. This can be shown by simply constructing states for which equality holds. The entropy vectors of those states lie on the boundary of the entropy body.\\

\begin{lemma2}
Let $\rho_{AB}(\alpha)$ be a family of states of the form
\[\rho_{AB}(\alpha):=\alpha \mu_{AB}+(1-\alpha)\frac{\mathds{1}_{d_{AB}}}{d_{AB}}\]
where $\mu_{AB}$ is an arbitrary state, $\mathds{1}_{d_{AB}}$ the identity operator and $0 \le \alpha \le 1$.\\
Then their entropies $\big(S_{L}(\rho_{A}(\alpha)),S_{L}(\rho_{B}(\alpha)),S_{L}(\rho_{AB}(\alpha))\big)$ lie on a line in $\mathbb{R}^3$.
\end{lemma2}

\begin{proof}
We have
{\allowdisplaybreaks
\begin{align}
S_{L}(\rho_{AB}(\alpha))&=1-\Tr(\rho_{AB}(\alpha)^2)\\
&=1-\Tr((\alpha \mu_{AB}+(1-\alpha)\frac{\mathds{1}_{d_{AB}}}{d_{AB}})^2)\notag \\
&=1-\alpha^2\Tr( \mu_{AB}^2)-2\alpha(1-\alpha)\frac{\Tr(\mu_{AB})}{d_{AB}}-\frac{(1-\alpha)^2}{d_{AB}}\Tr(\frac{\mathds{1}_{d_{AB}}}{d_{AB}})\notag \\
&=\big(\alpha^2-\alpha^2\Tr( \mu_{AB}^2)\big)+(1-\alpha^2)-\frac{2\alpha}{d_{AB}}(1-\alpha)-\frac{(1-\alpha)^2}{d_{AB}}\notag \\
&=\big(\alpha^2-\alpha^2\Tr( \mu_{AB}^2)\big)+\big((1-\alpha^2)\frac{d_{AB}-1}{d_{AB}}\big)\notag \\
&=\alpha^2S_{L}( \mu_{AB})+(1-\alpha^2)S_{L}(\frac{\mathds{1}_{d_{AB}}}{d_{AB}})\notag 
\end{align}}
Using
\begin{align}
\Tr_{B}(\rho_{AB}(\alpha))&=\sum_{m} \langle m |_{B}\rho_{AB}(\alpha)|m \rangle_{B}\\
&=\alpha \sum_{m} \langle m |_{B}\mu_{AB}|m \rangle_{B}+(1-\alpha)\sum_{m} \langle m |_{B}\frac{\mathds{1}_{d_{AB}}}{d_{AB}}|m \rangle_{B}\notag \\
&=\alpha \mu_{A}+(1-\alpha)\frac{\mathds{1}_{d_{A}}}{d_{A}}\notag 
\end{align}
we can write
\begin{align}
S_{L}(\rho_{A}(\alpha))&=1-\Tr(\rho_{A}(\alpha)^2)\\
&=1-\Tr((\Tr_{B}(\rho_{AB}(\alpha)))^2)\notag \\
&=1-\Tr((\Tr_{B}(\alpha \mu_{A}+(1-\alpha)\frac{\mathds{1}_{d_{A}}}{d_{A}}))^2)\notag \\
&=\alpha^2S_{L}( \mu_{A})+(1-\alpha^2)S_{L}(\frac{\mathds{1}_{d_{A}}}{d_{A}})\notag 
\end{align}
and analogously
\begin{align}
S_{L}(\rho_{B}(\alpha))=\alpha^2S_{L}( \mu_{B})+(1-\alpha^2)S_{L}(\frac{\mathds{1}_{d_{B}}}{d_{B}})
\end{align}
Let
\begin{align}
\vec{\mathbf{v}}(\rho_{AB}):=(S_{L}(\rho_{A}),S_{L}(\rho_{B}),S_{L}(\rho_{AB}))
\end{align}
be the entropy vector as defined in Eq. (\ref{VN6}).\\
We then have
\begin{align}
\vec{\mathbf{v}}(\mu_{AB})=(S_{L}(\mu_{A}),S_{L}(\mu_{B}),S_{L}(\mu_{AB}))\\
\notag \\
\vec{\mathbf{v}}(\frac{\mathds{1}_{d}}{d})=(S_{L}(\frac{\mathds{1}_{d_{A}}}{d_{A}}),S_{L}(\frac{\mathds{1}_{d_{B}}}{d_{B}}),S_{L}(\frac{\mathds{1}_{d_{AB}}}{d_{AB}}))
\end{align}
and we can write
\begin{align}
\vec{\mathbf{v}}(\rho_{AB}(\alpha))=\vec{\mathbf{v}}(\frac{\mathds{1}_{d_{AB}}}{d_{AB}})+\alpha^2(\vec{\mathbf{v}}(\mu_{AB})-\vec{\mathbf{v}}(\frac{\mathds{1}_{d_{AB}}}{d_{AB}}))
\end{align}

We therefore know that the entropy vector of any convex combination of an arbitrary state and the completely mixed state $\rho_{AB}(\alpha):=\alpha \mu_{AB}+(1-\alpha)\frac{\mathds{1}_{d_{AB}}}{d_{AB}}$ lies on the line between the entropy vector of the state and the entropy vector of the completely mixed state (upper right corner).
\end{proof}

Note however, that this is not true for two arbitrary states.\\

The set $\Gamma$ can be described as a parametrised curve in the following way

\begin{align}
\Gamma=\bigg\{\big(\gamma_1(t),\gamma_2(t),\gamma_3(t)\big)|\ 0\le t\le D_A\bigg\}
\end{align}
where
\begin{align}
\gamma(t)=\begin{pmatrix}
{\gamma_1(t)}\\
{\gamma_2(t)}\\
{\gamma_3(t)}
\end{pmatrix}:=
\begin{pmatrix}
{t}\\
{\frac{d_B-1}{d_B}\big(\frac{d_A}{d_A-1}\ t-1+2\sqrt{1-\frac{d_A}{d_A-1}t}\big)}\\
{\frac{d_A+d_B-2}{d_{B}(d_A-1)}\ t+\frac{2(d_B-1)}{d_B d_A}\sqrt{1-\frac{d_A}{d_A-1}t}+\frac{(d_{A}-2)(d_{B}-1)}{d_{A}d_{B}}}
\end{pmatrix}
\end{align}

This can be seen by simply substituting $x=t$. The second argument $\gamma_2(t)$ follows immediately from the definition of $\Gamma$. To obtain the third entry $\gamma_3(t)$, plug the first two into the definition of  $ISA$.\\

\begin{lemma2}
The set of states depending on $0\le\alpha\le1$.
\begin{align}
\rho_{AB}(\alpha)=\alpha\mu_{1}+(1-\alpha)\mu_{2}
\end{align}
where
\begin{align*}
\mu_{1}=|0\rangle\langle0|\otimes\frac{\mathds{1}_{d_B}}{d_B}\qquad\mu_{2}=\frac{\mathds{1}_{d_A}}{d_A}\otimes|0\rangle\langle0|
\end{align*}
form a family of states with entropies on $\Gamma$ and on $ISA$.
\end{lemma2}

\begin{proof}
We can calculate
{\allowdisplaybreaks
\begin{align}
S_{L}(\rho_{AB}(\alpha)) &=1-\Tr(\rho_{AB}^2(\alpha))\\
&=1-\Tr((\alpha\mu_{1}+(1-\alpha)\mu_{2})^2)\notag \\
&=1-\alpha^2\Tr(\mu_{1}^2)-2\alpha(1-\alpha)\Tr(\mu_{1}\mu_{2})-(1-\alpha)^2\Tr(\mu_{2}^2)\notag \\
&=1-\frac{\alpha^2}{d_B}-\frac{2\alpha(1-\alpha)}{d_A d_B}-\frac{(1-\alpha)^2}{d_A}\notag \\
&=\frac{d_A-1}{d_A}+\frac{2(d_B-1)}{d_A d_B}\alpha-\frac{d_A+d_B-2}{d_A d_B}\alpha^2\notag \\
\notag \\
S_{L}(\rho_{A}(\alpha)) &=1-\Tr(\rho_{A}^2(\alpha))\\
&=1-\Tr((\Tr_{B}(\rho_{AB}(\alpha)))^2)\notag \\
&=1-\Tr((\alpha |0\rangle\langle0|+\frac{1-\alpha}{d_A}\mathds{1}_{d_A})^2)\notag \\
&=1-\alpha^2 -\frac{2\alpha(1-\alpha)}{d_A}-\frac{(1-\alpha)^2}{d_A}\notag \\
&=\frac{d_A-1}{d_A}-\frac{d_A-1}{d_A}\alpha^2\notag \\
\notag \\
S_{L}(\rho_{B}(\alpha)) &=1-\Tr(\rho_{B}^2(\alpha))\\
&=1-\Tr((\Tr_{A}(\rho_{AB}(\alpha)))^2)\notag \\
&=1-\Tr(((1-\alpha) |0\rangle\langle0|+\frac{\alpha}{d_B}\mathds{1}_{d_B})^2)\notag \\
&=1-\frac{\alpha^2}{d_B}-\frac{2\alpha(1-\alpha)}{d_B}-(1-\alpha)^2\notag \\
&=2\frac{d_B-1}{d_B}\alpha-\frac{d_B-1}{d_B}\alpha^2\notag 
\end{align}}
\hspace{\fill}

Using $t(\alpha)=\frac{d_A-1}{d_A}-\frac{d_A-1}{d_A}\alpha^2$ we can now write

\begin{align}
\gamma(t(\alpha))=&
\begin{pmatrix}
{t(\alpha)}\\
\\
{\frac{d_B-1}{d_B}\big(\frac{d_A}{d_A-1}t(\alpha)-1+2\sqrt{1-\frac{d_A}{d_A-1}t(\alpha)}\big)}\\
\\
{\frac{1}{d_{B}}t(\alpha)+\frac{d_B-1}{d_B d_A}\big(\frac{d_A}{d_A-1}t(\alpha)-1+2\sqrt{1-\frac{d_A}{d_A-1}t(\alpha)}\big)+\frac{(d_{A}-1)(d_{B}-1)}{d_{A}d_{B}}}
\end{pmatrix}\\
\notag \\
=&\begin{pmatrix}
{\frac{d_A-1}{d_A}-\frac{d_A-1}{d_A}\alpha^2}\\
\\
{2\frac{d_B-1}{d_B}\alpha-\frac{d_B-1}{d_B}\alpha^2}\\
\\
{\frac{d_A-1}{d_A}+\frac{2(d_B-1)}{d_A d_B}\alpha-\frac{d_A+d_B-2}{d_A d_B}\alpha^2}
\end{pmatrix}=
\begin{pmatrix}
{S_{L}(\rho_{A}(\alpha))}\\
{S_{L}(\rho_{B}(\alpha))}\\
{S_{L}(\rho_{AB}(\alpha))}
\end{pmatrix}\notag
\end{align}
\hspace{\fill}

Therefore, we have found states whose entropies lie on $\Gamma$. We know, that the entropies of the mixture of an arbitrary state and the completely mixed state lie on a line. Since $\Gamma$ lies on $ISA$, we can conclude that $ISA$ (since it is linear) is the sharpest possible inequality for states on the right hand side of $\overline{\Omega}$.
\end{proof}

\begin{lemma2}
The set of states depending on $0\le\alpha\le1$ and $0\le\beta\le1$ with $\alpha+\beta \le 1$
\begin{align}
\tilde{\rho}_{AB}(\alpha,\beta)=\alpha\mu_{1}+\beta\mu_{2}+(1-\alpha-\beta)\mu_{3}
\end{align}

where $\mu_{1}$ and $\mu_{2}$ are defined as before and $\mu_{3}=|00\rangle\langle00|$, 
form a family of states with entropies on $DSSA$.
\end{lemma2}

\begin{proof}
In an analogous way to the previous proof we can calculate

{\allowdisplaybreaks
\begin{align}
S_{L}(\tilde{\rho}_{AB}(\alpha,\beta)) =& 1-\Tr(\tilde{\rho}_{AB}^2(\alpha,\beta))\\
=& 1-\Tr((\alpha\mu_{1}+\beta\mu_{2}+(1-\alpha-\beta)\mu_{3})^2)\notag \\
=& 1-\alpha^2\Tr(\mu_{1}^2)-\beta^2\Tr(\mu_{2}^2)-(1-\alpha-\beta)^2\Tr(\mu_{3}^2)\notag \\
&-2\alpha\beta \Tr(\mu_{1}\mu_{2})-2\alpha(1-\alpha-\beta)\Tr(\mu_{1}\mu_{3})-2\beta(1-\alpha-\beta)\Tr(\mu_{2}\mu_{3})\notag \\
=& 1-\frac{\alpha^2}{d_B}-\frac{\beta^2}{d_A}-(1-\alpha-\beta)^2\notag \\
&-\frac{2\alpha\beta}{d_Ad_B} -\frac{2\alpha(1-\alpha-\beta)}{d_B}-\frac{2\beta(1-\alpha-\beta)}{d_A}\notag \\
=& \frac{d_B-1}{d_B}(2\alpha-\alpha^2)+\frac{d_A-1}{d_A}(2\beta-\beta^2)-\frac{2(d_A-1)(d_B-1)}{d_Ad_B}\alpha\beta\notag \\
=& D_B(2\alpha-\alpha^2)+D_A(2\beta-\beta^2)-2D_AD_B\alpha\beta\notag \\
\notag \\
S_{L}(\tilde{\rho}_{A}(\alpha,\beta)) =&1-\Tr(\tilde{\rho}_{A}^2(\alpha,\beta))\\
=&1-\Tr((\Tr_{B}(\tilde{\rho}_{AB}(\alpha))^2)\notag \\
=&1-\Tr((\frac{\beta}{d_A}\mathds{1}_{d_A}+(1-\beta)|0\rangle\langle0|)^2)\notag \\
=&1-\frac{\beta^2}{d_A}-\frac{2\beta(1-\beta)}{d_A}-(1-\beta)^2\notag \\
=&\frac{d_A-1}{d_A}(2\beta-\beta^2)\notag \\
=&D_A(2\beta-\beta^2)\notag \\
\notag \\
S_{L}(\tilde{\rho}_{B}(\alpha,\beta)) =&1-\Tr(\tilde{\rho}_{B}^2(\alpha,\beta))\\
=&1-\Tr((\Tr_{A}(\tilde{\rho}_{AB}(\alpha))^2)\notag \\
=&1-\Tr((\frac{\alpha}{d_B}\mathds{1}_{d_B}+(1-\alpha)|0\rangle\langle0|)^2)\notag \\
=&1-\frac{\alpha^2}{d_B}-\frac{2\alpha(1-\alpha)}{d_B}-(1-\alpha)^2\notag \\
=&\frac{d_B-1}{d_B}(2\alpha-\alpha^2)\notag \\
=&D_B(2\alpha-\alpha^2)\notag 
\end{align}}
\hspace{\fill}

The surface $DSSA$ is given by the equation
\begin{align}
S_{L}(\rho_{AB})=&
\ S_{L}(\rho_{A})+ S_{L}(\rho_{B})\\
& -2D_AD_B\big(1-\sqrt{1-\frac{S_{L}(\rho_{A})}{D_A}} \big)\big(1-\sqrt{1-\frac{S_{L}(\rho_{B})}{D_B} }\big)\notag 
\end{align}
\hspace{\fill}

Since
\begin{align}
& S_{L}(\rho_{A})+ S_{L}(\rho_{B})-2D_AD_B\big(1-\sqrt{1-\frac{S_{L}(\rho_{A})}{D_A}} \big)\big(1-\sqrt{1-\frac{S_{L}(\rho_{B})}{D_B} }\big)\\
=& D_A(2\beta-\beta^2)+D_B(2\alpha-\alpha^2) -2D_AD_B\big(1-\sqrt{1-2\beta+\beta^2}\big)\big(1-\sqrt{1-2\alpha+\alpha^2}\big)\notag \\
=& D_B(2\alpha-\alpha^2)+D_A(2\beta-\beta^2)-2D_AD_B\alpha\beta\notag \\
=& S_{L}(\rho_{AB})\notag 
\end{align}

we have found a set of states whose entropies lie on $DSSA$, proving $DSSA$ to be sharp.
\end{proof}

\section{Proof of Proposition \ref{RPp1}.}\label{AE}

\begin{proof}
Use Lemma \ref{ISl1}. and Eq. (\ref{462}) to conclude that the inequality
	\begin{align}
	S^2(\rho_{AB})&\le-\log_2\big(\frac{2^{-S^2(\rho_A)}}{d_B}\ +\frac{2^{-S^2(\rho_B)}}{d_A}-\frac{1}{d_A d_B}\big) \label{l4611}
	\end{align}
	holds and analogously Theorem \ref{DSSAt1}. to conclude that the inequality
	\begin{align}
	S^2(\rho_{AB})&\le-\log_2\bigg(2^{-S^2(\rho_{A})}+2^{-S^2(\rho_{B})}-1-\label{l4621}\\
	& \qquad2\frac{d_A-1}{d_A}\frac{d_B-1}{d_B}\big(1-\sqrt{\frac{d_A\ 2^{-S^2(\rho_A)}-1}{d_A-1}}\big)\big(1-\sqrt{\frac{d_B\ 2^{-S^2(\rho_B)}-1}{d_B-1}}\big)\bigg)\notag 
	\end{align}
	holds under the assumption that
	\begin{align}
	S^2(\rho_{A})&\le-\log_2\bigg(\frac{d_A-1}{d_A}\bigg(\frac{d_B}{d_B-1}\  2^{-S^2(\rho_{B})}-\frac{1}{d_B-1}-2\sqrt{\frac{d_B\ 2^{-S^2(\rho_B)}-1}{d_B-1}}\bigg)-1\bigg)\\
\end{align}

Since $f(x,y)$, as defined in Proposition \ref{SBp1}., is continously differentiable and $e(x):=1-2^{-x}$ is a diffeomorphism for $x\in\mathbb{R}^+$, also
\begin{equation}
	f_R(x,y)=e^{-1}\big(f\big(e(x),e(y)\big)\big)
\end{equation}
is continuously differentiable.
\end{proof}

\section{Implicit function theorem and inverted function theorem}\label{AF}

The implicit function theorem states:

\begin{theorem}
	Let $U$ be an open set in $\mathbb{R}^{k+n}$ and $F:U\rightarrow\mathbb{R}^{n}$ be a continuously differentiable function. Let $\mathbf{v}\in\mathbb{R}^{k}$ and $\mathbf{w}\in\mathbb{R}^{n}$, such that the function can be written as $F(\mathbf{v},\mathbf{w})$.\\
	Suppose $(\mathbf{v}_0,\mathbf{w}_0)\in U$ such that $F(\mathbf{v}_0,\mathbf{w}_0)=0$ and the Jacobian restricted to $\mathbf{w}$ is invertible, that is
	\begin{equation}
	\det(D\ F(\mathbf{v}_0,\mathbf{w}))\big|_{\mathbf{w}=\mathbf{w}_0}=\begin{vmatrix}
	\frac{\partial F_1}{\partial w_1}(\mathbf{v}_0,\mathbf{w}_0) & ... &\frac{\partial F_1}{\partial w_n}(\mathbf{v}_0,\mathbf{w}_0)\\
	\vdots & \ddots & \vdots\\
	\frac{\partial F_n}{\partial w_1} (\mathbf{v}_0,\mathbf{w}_0)& ... &\frac{\partial F_n}{\partial w_n}(\mathbf{v}_0,\mathbf{w}_0)\\
	\end{vmatrix} \ne 0
	\end{equation}
	
	Then there exists a neighbourhood $V\subset\mathbb{R}^{k}$ of $\mathbf{v}_0$ and a unique function $G:V\rightarrow\mathbb{R}^{n}$, such that $G(\mathbf{v}_0)=\mathbf{w}_0$ and $F(\mathbf{v},G(\mathbf{v}))=0$ for every $\mathbf{v}\in V$.
\end{theorem}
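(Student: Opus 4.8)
The plan is to recast the defining equation $F(\mathbf{v},\mathbf{w})=0$ as a parametrised fixed-point problem and solve it with the Banach contraction principle. Write $A:=D_{\mathbf{w}}F(\mathbf{v}_0,\mathbf{w}_0)$ for the invertible $n\times n$ Jacobian block appearing in the hypothesis, and for each $\mathbf{v}$ define
\begin{equation}
T_{\mathbf{v}}(\mathbf{w}):=\mathbf{w}-A^{-1}F(\mathbf{v},\mathbf{w}).
\end{equation}
Because $A^{-1}$ is invertible, a point $\mathbf{w}$ satisfies $T_{\mathbf{v}}(\mathbf{w})=\mathbf{w}$ if and only if $F(\mathbf{v},\mathbf{w})=0$, so it suffices to produce a unique fixed point of $T_{\mathbf{v}}$ depending continuously differentiably on $\mathbf{v}$.

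First I would establish the contraction estimate. Differentiating in $\mathbf{w}$ gives $D_{\mathbf{w}}T_{\mathbf{v}}(\mathbf{w})=I-A^{-1}D_{\mathbf{w}}F(\mathbf{v},\mathbf{w})$, which vanishes at $(\mathbf{v}_0,\mathbf{w}_0)$. Since $F$ is continuously differentiable, I can choose a closed ball $\overline{B}_r(\mathbf{w}_0)$ and a radius $\delta>0$ so that $\|D_{\mathbf{w}}T_{\mathbf{v}}(\mathbf{w})\|\le\tfrac12$ for all $\mathbf{v}\in\overline{B}_\delta(\mathbf{v}_0)$ and $\mathbf{w}\in\overline{B}_r(\mathbf{w}_0)$; the mean value inequality then makes $T_{\mathbf{v}}$ a $\tfrac12$-contraction in $\mathbf{w}$ on that ball. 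Shrinking $\delta$ if necessary and using $T_{\mathbf{v}_0}(\mathbf{w}_0)=\mathbf{w}_0$ together with continuity of $\mathbf{v}\mapsto T_{\mathbf{v}}(\mathbf{w}_0)$, I arrange that $T_{\mathbf{v}}$ maps $\overline{B}_r(\mathbf{w}_0)$ into itself. The contraction principle then yields, for each $\mathbf{v}\in V:=B_\delta(\mathbf{v}_0)$, a unique $\mathbf{w}=G(\mathbf{v})\in\overline{B}_r(\mathbf{w}_0)$ with $F(\mathbf{v},G(\mathbf{v}))=0$ and $G(\mathbf{v}_0)=\mathbf{w}_0$, giving existence and local uniqueness.

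Continuity of $G$ follows from the uniform contraction estimate: for $\mathbf{v},\mathbf{v}'\in V$ one bounds $\|G(\mathbf{v})-G(\mathbf{v}')\|$ by $\|T_{\mathbf{v}}(G(\mathbf{v}'))-T_{\mathbf{v}'}(G(\mathbf{v}'))\|$ plus a $\tfrac12$-contraction term, and the first piece is controlled by continuity of $F$ in $\mathbf{v}$. The final and hardest step is differentiability of $G$. Formally differentiating $F(\mathbf{v},G(\mathbf{v}))=0$ suggests the candidate
\begin{equation}
DG(\mathbf{v})=-\big(D_{\mathbf{w}}F(\mathbf{v},G(\mathbf{v}))\big)^{-1}D_{\mathbf{v}}F(\mathbf{v},G(\mathbf{v})),
\end{equation}
which is well defined because $D_{\mathbf{w}}F$ stays invertible near $(\mathbf{v}_0,\mathbf{w}_0)$. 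The main obstacle is proving this is the genuine Fr\'echet derivative rather than a heuristic: I would insert $\mathbf{w}=G(\mathbf{v})$ and $\mathbf{w}+\mathbf{k}=G(\mathbf{v}+\mathbf{h})$ into the first-order expansion of $F$, use that $F$ equals zero at both points, and estimate the remainder, exploiting the already-proven continuity of $G$ to show $\mathbf{k}=O(\|\mathbf{h}\|)$ and that the remainder is $o(\|\mathbf{h}\|)$. Continuity of $DG$ is then immediate, since the right-hand side above is a composition of continuous maps.

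An alternative I would keep in reserve is to deduce the statement from the inverse function theorem: the map $\Phi(\mathbf{v},\mathbf{w}):=(\mathbf{v},F(\mathbf{v},\mathbf{w}))$ has a block-triangular Jacobian with determinant $\det(D_{\mathbf{w}}F)\neq0$, so $\Phi$ is a local $C^1$ diffeomorphism; its inverse necessarily has the shape $(\mathbf{v},\mathbf{z})\mapsto(\mathbf{v},\psi(\mathbf{v},\mathbf{z}))$, and then $G(\mathbf{v}):=\psi(\mathbf{v},\mathbf{0})$ satisfies all requirements, shifting the entire analytic burden onto the inverse function theorem.
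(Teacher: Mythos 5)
This statement is the classical implicit function theorem, which the paper quotes in Appendix~\ref{AF} as a standard analytic tool and does not prove, so there is no in-paper argument to compare against. Your contraction-mapping proof is the standard textbook route and is sound: the auxiliary map $T_{\mathbf{v}}(\mathbf{w})=\mathbf{w}-A^{-1}F(\mathbf{v},\mathbf{w})$ correctly converts zeros of $F(\mathbf{v},\cdot)$ into fixed points (because $A$ is invertible, $A^{-1}F=0$ iff $F=0$), the vanishing of $D_{\mathbf{w}}T_{\mathbf{v}}$ at $(\mathbf{v}_0,\mathbf{w}_0)$ plus continuity of $DF$ gives the uniform $\tfrac12$-Lipschitz bound, and the self-mapping of the closed ball is secured exactly as you describe by shrinking $\delta$ so that $\|T_{\mathbf{v}}(\mathbf{w}_0)-\mathbf{w}_0\|\le r/2$. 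The existence and uniqueness you obtain is uniqueness of the solution within $\overline{B}_r(\mathbf{w}_0)$, which is the correct reading of the ``unique function $G$'' in the statement. The only step you leave genuinely schematic is the verification that the candidate $DG(\mathbf{v})=-\big(D_{\mathbf{w}}F\big)^{-1}D_{\mathbf{v}}F$ is the true derivative; your outline (use $F(\mathbf{v},G(\mathbf{v}))=F(\mathbf{v}+\mathbf{h},G(\mathbf{v}+\mathbf{h}))=0$ in the first-order expansion, show $\|G(\mathbf{v}+\mathbf{h})-G(\mathbf{v})\|=O(\|\mathbf{h}\|)$ from the contraction estimate, and absorb the remainder) is precisely how that verification goes, so no idea is missing. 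Your reserve argument via the inverse function theorem applied to $\Phi(\mathbf{v},\mathbf{w})=(\mathbf{v},F(\mathbf{v},\mathbf{w}))$ is also correct and is arguably the more economical route here, since the paper already invokes the inverse function theorem separately in Proposition~\ref{IIp2}; either version would serve the paper's purposes equally well.
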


The inverse function theorem states:

\begin{theorem}
	Let $U$ be an open set in $\mathbb{R}^{n}$ and $F:U\rightarrow\mathbb{R}^{n}$ be a continuously differentiable function. Let $\mathbf{v}_0\in\mathbb{R}^{n}$ such that the Jacobian is invertible.\\
	Then there exists a neighbourhood $V\subset\mathbb{R}^{n}$ of $\mathbf{v}_0$, such that $F$ is invertible and the inverse function $F^{-1}$ is continuously differentiable.\\
\end{theorem}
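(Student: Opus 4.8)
The plan is to obtain the inverse function theorem as an immediate corollary of the implicit function theorem stated just above, which is the standard and most economical route and fits the setting where that theorem may be assumed. The guiding idea is to rephrase ``solve $\mathbf{w}$ from $F(\mathbf{w})=\mathbf{v}$'' as an implicit equation, so that the local inverse of $F$ is exactly the solution function $G$ delivered by the implicit function theorem.

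Concretely, I would apply the implicit function theorem to the auxiliary map
\[
\Phi:\mathbb{R}^{n}\times U\to\mathbb{R}^{n},\qquad \Phi(\mathbf{v},\mathbf{w}):=\mathbf{v}-F(\mathbf{w}),
\]
so that $\Phi$ plays the role of the ``$F$'' appearing in that theorem, with the first slot $\mathbf{v}\in\mathbb{R}^{n}$ as the parameter (here $k=n$) and the second slot $\mathbf{w}$ as the variable to be solved for. Since $F$ is continuously differentiable, so is $\Phi$. Writing $\mathbf{w}_{\ast}:=\mathbf{v}_{0}$ for the point from the hypothesis and $\mathbf{v}_{\ast}:=F(\mathbf{v}_{0})$, we have $\Phi(\mathbf{v}_{\ast},\mathbf{w}_{\ast})=0$, so the base-point condition is satisfied.

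First I would verify the invertibility hypothesis: the Jacobian of $\Phi$ restricted to the $\mathbf{w}$-variables is $-DF(\mathbf{w})$, which at $\mathbf{w}_{\ast}=\mathbf{v}_{0}$ equals $-DF(\mathbf{v}_{0})$ and is invertible precisely because $DF(\mathbf{v}_{0})$ is assumed invertible. The implicit function theorem then produces a neighbourhood $V$ of $\mathbf{v}_{\ast}$ and a unique continuously differentiable map $G:V\to\mathbb{R}^{n}$ with $G(\mathbf{v}_{\ast})=\mathbf{w}_{\ast}$ and $\Phi(\mathbf{v},G(\mathbf{v}))=0$ on $V$. The last identity reads $F(G(\mathbf{v}))=\mathbf{v}$, so $G$ is already a continuously differentiable right inverse of $F$, which is the substance of the claim (existence of a smooth local inverse).

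The hard part will be upgrading this right inverse to a genuine two-sided inverse, since the implicit function theorem literally gives only $F\circ G=\mathrm{id}_{V}$. I would resolve this using the theorem's uniqueness clause: shrinking to a neighbourhood $W$ of $\mathbf{v}_{0}$ on which $DF$ remains invertible (an open condition, by continuity) and on which the implicit solution is unique, one observes that for $\mathbf{w}\in W$ the image $\mathbf{v}:=F(\mathbf{w})$ lies in $V$ and that $\mathbf{w}$ itself satisfies $\Phi(\mathbf{v},\mathbf{w})=0$; uniqueness then forces $\mathbf{w}=G(\mathbf{v})=G(F(\mathbf{w}))$, giving $G\circ F=\mathrm{id}$ near $\mathbf{v}_{0}$. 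The continuous differentiability of the inverse and the formula $DG(\mathbf{v})=\big(DF(G(\mathbf{v}))\big)^{-1}$ are inherited directly from the implicit function theorem and the chain rule, so no separate estimate is needed.
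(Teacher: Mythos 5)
Your derivation is correct, but note that the paper itself offers no proof of this statement at all: Appendix F merely records the implicit function theorem and the inverse function theorem as standard facts from multivariable analysis, to be invoked in Section \ref{II}. What you have done is supply the classical reduction of the latter to the former, via the auxiliary map $\Phi(\mathbf{v},\mathbf{w})=\mathbf{v}-F(\mathbf{w})$, which is the natural way to fill that gap given that the implicit function theorem is stated immediately above. Your handling of the two delicate points is sound: the partial Jacobian $D_{\mathbf{w}}\Phi=-DF(\mathbf{w})$ is invertible exactly when $DF$ is, and the upgrade from the right inverse $F\circ G=\mathrm{id}$ to a two-sided local inverse correctly rests on the \emph{pointwise local uniqueness} of the implicit solution. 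Be aware, though, that the paper's phrasing of the uniqueness clause (``a unique function $G$'') is weaker than what you actually use; the standard statement guarantees that for each $\mathbf{v}$ near $\mathbf{v}_{\ast}$ there is exactly one $\mathbf{w}$ in a fixed neighbourhood of $\mathbf{w}_{\ast}$ solving $\Phi(\mathbf{v},\mathbf{w})=0$, and it is this version you need to force $\mathbf{w}=G(F(\mathbf{w}))$. One further detail worth a sentence if you write this out in full: to conclude that $F$ restricted to your neighbourhood $W$ is invertible \emph{as a map onto an open set}, observe that $F(W)=V\cap G^{-1}(W)$ is open by continuity of $G$; the paper's statement glosses over this, so your proof is, if anything, more complete than the claim it establishes.
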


\end{document}